\documentclass[12pt]{article}
\setlength{\voffset}{-.75truein}
\setlength{\textheight}{9truein}
\setlength{\textwidth}{6.9truein}
\setlength{\hoffset}{-.9truein}

\usepackage{amsthm,amsfonts,amsmath, amscd, bbold}
\usepackage{mathrsfs}
\usepackage{accents}
\usepackage{authblk}

                                \usepackage{verbatim}
\swapnumbers
                              

\pagestyle{myheadings}

\theoremstyle{plain}

\usepackage{enumitem}
\numberwithin{equation}{section}
\newtheorem{theorem}{Theorem}[section]

\newtheorem{lemma}[theorem]{Lemma}
\newtheorem{proposition}[theorem]{Proposition}
\newtheorem{corollary}[theorem]{Corollary}

\theoremstyle{definition}
\newtheorem{definition}[theorem]{Definition}
\newtheorem{example}[theorem]{Example}

\theoremstyle{remark}

\numberwithin{equation}{section}

\newcommand{\R}{{\mathbb R}}
\newcommand{\bR}{{\mathbb R}}

\newcommand{\cB}{{\mathcal B}}

\newcommand{\cH}{{\mathcal H}}

\newcommand{\cN}{{\mathcal N}}

\newcommand{\cK}{{\mathcal K}}
\newcommand{\sR}{{\mathord{\mathscr R}}}

\newcommand{\one}{{\bf 1}}

\newcommand{\ket}[1]{\left\vert #1\right\rangle}
\newcommand{\bra}[1]{\left\langle #1\right\vert}

\renewcommand{\Re}{\,\mathrm{Re}\,}   
\renewcommand{\Im}{\,\mathrm{Im}\,}   

\newcommand{\Tr}{\mathrm{Tr}}





\newcommand{\be}{\begin{equation}}
\newcommand{\ee}{\end{equation}}
\newcommand{\bea}{\begin{eqnarray}}
\newcommand{\eea}{\end{eqnarray}}
\newcommand{\beann}{\begin{eqnarray*}}
\newcommand{\eeann}{\end{eqnarray*}}




\usepackage{color}

\begin{document}

\title{On quantum quasi-relative entropy}
\author{Anna Vershynina}
\affil{\small{Department of Mathematics, Philip Guthrie Hoffman Hall, University of Houston, 
3551 Cullen Blvd., Houston, TX 77204-3008, USA}}
\renewcommand\Authands{ and }
\renewcommand\Affilfont{\itshape\small}

\date{\today}

\maketitle

\begin{abstract} We consider a quantum quasi-relative entropy $S_f^K$ for an operator $K$ and an operator convex function $f$. We show how to obtain the error bounds for  the monotonicity and joint convexity inequalities from the recent results for the $f$-divergences (i.e. $K=I$). We also provide an error term for a class of operator inequalities, that generalize operator strong subadditivity inequality. We apply those results to demonstrate explicit bounds for the logarithmic function, that leads to the quantum relative entropy, and the power function, which gives, in particular, a Wigner-Yanase-Dyson skew information. In particular, we provide the remainder terms for the  strong subadditivity inequality, operator strong subadditivity inequality, WYD-type inequalities, and the Cauchy-Schwartz inequality.
\end{abstract}

\tableofcontents

\section{Introduction}

Quantum quasi-relative entropy was introduced by Petz \cite{P85, P86} as a quantum generalization of a classical Csisz\'ar's $f$-divergence \cite{C67-2}. It is defined in the context of von Neumann algebras, but we  consider only the Hilbert space setup. Let $\cH$ be a  finite-dimensional  Hilbert space, $\rho$ and $\sigma$ be two states (given by density operators), $K$ be an operator on $\cH$, and $f:(0,\infty)\rightarrow\mathbf{R}$ be an operator convex function. Then the quasi-relative entropy is defined as
$$S_f^K(\rho||\sigma)=\Tr(\rho^{1/2}K^*f(\Delta_{\sigma,\rho})(K\rho^{1.2}))\ , $$
where $\Delta_{\sigma,\rho}$ is a relative modular operator defined by Araki \cite{A76} that acts as a left and right multiplication
$$\Delta_{A,B}(X)=L_AR_B^{-1}(X)=AXB^{-1}\ .$$
The modular operator can be applied straightforward when $\rho$ is invertible. When $\rho$ is not invertible, we take the generalized inverse of $\rho$ and denote it as $\rho^{-1}$ as well. The generalized inverse is defined as follows: if $\sum(\rho)$ denotes the spectrum of $\rho$, and $P_\lambda$ denotes the spectral projection corresponding to the eigenvalue $\lambda$, then the generalized inverse $\rho^{-1}:=\sum_{\lambda\in\sum(\rho)\setminus\{0\}}\lambda^{-1}P_\lambda$.

Note that taking $f(x)=-\log(x)$ and $K=I$ reduces quasi-relative entropy to the Umegaki relative entropy \cite{U62},
$$S(\rho\|\sigma)=\Tr(\rho[\log\rho-\log\sigma])\ . $$

We consider several properties of the relative entropy and their analogue in the case of a quasi-relative entropy.

\textbf{Monotonicity of relative entropy}. The most essential property of the relative entropy is the monotonicity inequality (or data processing inequality). It states that the quasi-relative entropy cannot increase after the states pass through a noisy quantum channel (i.e. a completely-positive, trace-preserving map) $\cN$:
$$S(\rho\|\sigma)\geq S(\cN(\rho)\|\cN(\sigma))\ .$$
This inequality was proved by Lindblad \cite{L75}, building on the work of Lieb and Ruskai \cite{LR73}. Due to a Stinespring factorization \cite{St55}, the monotonicity inequality under any quantum channel, is equivalent to the monotonicity inequality under partial traces, which will be used throughout the paper. Let $\cH=\cH_1\otimes\cH_2$, and $\rho$ and $\sigma$ be two states on $\cH$. We denote the partial trace over the second system as $\rho_1:=\Tr\rho_{12}$, and others similarly. Then the monotonicity inequality states that
$$S(\rho\|\sigma)\geq S(\rho_1\|\sigma_1)\ .$$

 Petz \cite{P86-3, P88} provided a condition on states $\rho$ and $\sigma$, for which  the monotonicity inequality  in saturated. He showed that for a given quantum channel $\cN$, two states $\rho$ and $\sigma$ lead to equality in the monotonicity relation if and only if the channel in noiseless for these states, i.e. the action of the channel can be reversed. The reverse action is implemented by a certain recovery map $\sR$, now called a Petz's recovery map, which can be written explicitly for a given channel. 
 
These results motivated the question of stability of the recovery map: if the decrease of relative entropy after states pass through a quantum channel is small, how well can  these states be recovered? Work on this question started following a 
breakthrough result by Fawzi and Renner in 2015 \cite{FR15}. They proved that if the  strong subadditivity inequality (SSA) (which is equivalent to the monotonicity inequality) is nearly saturated, then  {quantum Markov chain condition}, 
known  to be necessary and sufficient for equality in SSA \cite{H03}, is also nearly satisfied, and they gave a precise quantitative version of this stability result. 

Further refinements of the monotonicity relation occurred later in, for example, \cite{BT16, BHOS15, CL12, JRSWW15, SFR16, STH16, W15, Z16}. Most of 
these results involve some sort of a recovery channel in the lower bound, and even though it often derived from, or related to, the Petz recovery map, it is in no case the Petz map itself.  The quantities provided in those lower bounds are hard to compute.

Recently, Carlen and Vershynina \cite{CV17-1} proved a sharp stability result for the monotonicity inequality in terms of the original Petz recovery map:
$$S(\rho||\sigma) - S(\rho_1||\sigma_1)  \geq \left(\frac{\pi}{8}\right)^{4} \|\Delta_{\sigma,\rho}\|^{-2}
\| \sR_\rho(\sigma_1) -\sigma\|_1^4 \ .
 $$
It is possible to obtain the lower bound in terms of recovery map $\sR_\sigma$, but the constant changes. This bound makes it very easy to see the relation between the saturation of the monotonicity inequality and the perfect recovery of both states. 

\textbf{Monotonicity of quasi-relative entropy}. The monotonicity inequality for a quasi-relative entropy $S_f^K$ also holds, but for operators $K$ of the a certain form. As mentioned before, considering quantum channels is equivalent to considering only a partial trace channel. For operators $K=K\otimes I$ we have
$$S_f^{K\otimes I}(\rho||\sigma)\geq S_f^K(\rho_1\|\sigma_1)\ . $$
This statement was explicitly proved by Sharma \cite{S14}, building on the work of Nielsen and Petz \cite{NP05}. The monotonicity can also be proved for operators $K=K\otimes V$, where $V$ is a unitary.

Sharma has also provided a condition on the equality, showing that the equality in the monotonicity inequality holds if and only if, for all $\beta\in\mathbb{C}$,
$$\Tr(K^*\sigma_1K\rho_1^{1-\beta}-K^*\sigma^\beta K\rho^{1-\beta})=0\ . $$

Building on their previous work, Carlen and Vershynina \cite{CV17-2} also recently showed that for $f$-divergences (i.e. $K=I$), the following sharpening of the monotonicity inequality holds: for a large class of functions $f$ (that define a constant $c$), there is constant $M$ depending only on  the smallest non-zero eigenvalue of $\rho$, $\|\sigma_1^{-1}\|$, $\beta$, and $f$, such that
$$\max\{ \|\sR_\rho(\sigma_1)-\sigma\|_1 \ ,\ \|\sR_\sigma(\rho_1)-\rho\|_1\}   \leq M 
(S_f(\rho||\sigma)-S_f(\rho_1\|\sigma_1))^{\frac14  \frac{1}{ 1+c} }\ .$$
Similar to their previous bound, it is very easy to see here that if the monotonicity is  saturated, Petz's map recovers both states perfectly. The other way is easily derived.

In fact, Carlen and Vershynina showed that even a stronger bound holds: for all $\beta\in(0,1)$, and conditions above, there is a  constant $\alpha(\beta)$ (that also depends on the function $f$) such that
$$\|\sigma_1^{\beta}\rho_1^{-\beta} \rho^{1/2} -\sigma^{\beta}\rho^{1/2-\beta}\|_2\leq M 
(S_f(\rho||\sigma)-S_f(\rho_1\|\sigma_1))^{\alpha(\beta) }\ .
 $$
This means that both: saturation of the monotonicity inequality and Petz's recovery of both states, is equivalent to the following condition: 
$$\text{for all }\beta\in\mathbb{C}: \  \sigma_1^{\beta}\rho_1^{-\beta} =\sigma^{\beta}\rho^{-\beta}\ .$$

We build on Carlen and Vershynina \cite{CV17-2} work to improve the monotonicity result for quasi-relative entropies with an operator $K=K\otimes V$, where $V$ is unitary. We show that with the same conditions as above,
$$\|\sigma_1^{\beta}K\rho_1^{-\beta} \rho^{1/2} -\sigma^{\beta}K\rho^{1/2-\beta}\|_2\leq M 
(S_f^K(\rho||\sigma)-S_f^K(\rho_1\|\sigma_1))^{\alpha(\beta) }\ . $$
In particular, we obtain the bound in terms of the Petz's recovery map
$$ \|\sR_\rho(K_1^*\sigma_1 K_1)-K^*\sigma K\|_1\leq M 
(S_f^K(\rho||\sigma)-S_f^{K_1}(\rho_1\|\sigma_1))^{\frac{1}{4(1+c)} }\ .
$$
Moreover, we show that the equality in the monotonicity inequality holds if and only if, for all $\beta\in\mathbb{C}$,
$$\sigma_1^{\beta}K\rho_1^{-\beta} \rho^{1/2} =\sigma^{\beta}K\rho^{1/2-\beta} \ . $$

\textbf{Strong subadditivity inequality.} For a tri-partite state $\rho_{ABC}$ on a Hilbert space $\cH=\cH_A\otimes\cH_B\otimes\cH_C$, strong subadditivity inequality (SSA) states
$$0\leq S(\rho_{AB})+S(\rho_{BC})-S(\rho_{ABC})-S(\rho_B)\ .$$

This theorem was proved by Lieb and Ruskai \cite{LR73}, using  Lieb's theorem that was proved in \cite{L73}. Note that this inequality is equivalent to the monotonicity inequality: having monotonicity inequality, to obtain the SSA inequality, one takes $\rho=\rho_{ABC}$ and $\sigma=\rho_{AB}\otimes\rho_{C}$ and a trace over the system $\cH_A$. Having SSA, one chooses $\rho_{ABC}$ to be block-diagonal, which implies that the map $\rho_{12}\rightarrow S{\rho_1}-S(\rho_{12})$ is convex. Following Lieb and Ruskai \cite{LR73}, this yields the monotonicity inequality.

We apply previous results in \cite{CV17-1}, to obtain the sharpening of the SSA: for any $\beta\in(0,1)$, there are some constants $N$  (depending on the minimal eigenvalue of $\rho_{ABC}$, $\beta$ and $f$) and $\alpha(\beta)$ (depending on the function $f$) such that
$$N 
\|\rho_B^{\beta}\otimes\rho_{C}^{\beta}\,\rho_{BC}^{-\beta} \rho^{1/2} - \rho_{AB}^{\beta}\otimes\rho_{C}^{\beta}\rho^{1/2-\beta}\|_2^{1/\alpha(\beta)}\leq S(\rho_{AB})+S(\rho_{BC})-S(\rho_{ABC})-S(\rho_B)\ . $$
In particular we show that the equality in SSA holds if and only if Petz's map $\sR$ recovers the reduced state perfectly:
$$\sR_\rho(\rho_B\otimes\rho_{C})=\rho_{AB}\otimes\rho_C\ . $$

\textbf{Operator inequalities of the strong subadditivity type.} In 2012 Kim \cite{K12} proved the following operator version of the strong subadditivity inequality
$$0\leq \Tr_{AB}\rho_{ABC}[\log\rho_{ABC}-\log\rho_{AB}-\log\rho_{BC}+\log\rho_B ]  \ .
 $$
 Note that this inequality leads to the strong subadditivity inequality after taking the trace over system $\cH_C$.
 
 Building on this work, Ruskai \cite{R12} provided a class of operator inequalities: for  an operator monotone decreasing function $f$, 
 \begin{equation}\label{eq:intro-Tr}
 0\leq  \Tr_{AB}\left([f(L_{\sigma_{AB}}R^{-1}_{\rho_{ABC}})-f(L_{\sigma_{B}} R^{-1}_{\rho_{BC}})]\rho_{ABC}\right)\ .
 \end{equation}
 
 We provide the error term for these inequalities of the type: for a large class of operator convex functions $f$, there are some constants $N$  (depending on the minimal eigenvalue of $\rho_{ABC}$, $\beta$ and $f$) and $\alpha(\beta)$ (depending on the function $f$) such that
 $$N [\Tr_{AB}(P_{ABC}(\rho,\sigma)P^*_{ABC}(\rho,\sigma))]^{1/\alpha(\beta)}\leq \Tr_{AB}\left([f(L_{\sigma_{AB}}R^{-1}_{\rho_{ABC}})-f(L_{\sigma_{B}} R^{-1}_{\rho_{BC}})]\rho_{ABC}\right)\ , $$
 where
$$P_{ABC}(\rho,\sigma):=\sigma_{B}^{\beta}\rho_{BC}^{-\beta} \rho_{ABC}^{1/2} -\sigma_{AB}^{\beta}\rho_{ABC}^{1/2-\beta}\ .$$
Note that the left-hand side of this inequality is the operator on $\cH_C$, as is the right-hand side. Therefore, this inequality holds between operators on $\cH_C$.

 Moreover, we show that the equality in (\ref{eq:intro-Tr}) holds if and only if, for all $\beta\in(0,1)$
 $$\sigma_{B}^{\beta}\rho_{BC}^{-\beta} =\sigma_{AB}^{\beta}\rho_{ABC}^{-\beta}\ , $$
 which is in turn equivalent to the recovery condition
 $$ \sR_{\rho_{ABC}}(\sigma_B)=\sigma_{AB}\ .$$

 Additionally, by taking $f$ to be a power function, Ruskai \cite{R12} showed that (\ref{eq:intro-Tr}) leads to
 $$0\leq\frac{1}{p(1-p)}[-\Tr_{AB}\rho_{ABC}^{1-p}\sigma_{AB}^p+\Tr_B\rho_{BC}^{1-p}\sigma_B^p] \ ,$$
 where $p\in(-1,2)$. 
 We apply our results to show the operator strengthening of this inequality:
 $${N}\,[\Tr_{AB}(Q^*_{ABC}(\sigma_{AB},\rho_{ABC})Q_{ABC}(\sigma_{AB},\rho_{ABC})]^{1/\alpha(\beta)}\leq \frac{1}{p(1-p)}\left[-\Tr_{AB}\rho_{ABC}^{1-p}\sigma_{AB}^p+\Tr_B\rho_{BC}^{1-p}\sigma_B^p\right]\ , $$
where
$$Q_{ABC}(\sigma_{AB},\rho_{ABC}):=\rho_{BC}^{\beta}\sigma_{B}^{-\beta} \sigma_{AB}^{1/2} -\rho_{ABC}^{\beta}\sigma_{AB}^{1/2-\beta}\ . $$

\textbf{Joint convexity of the relative entropy.} It was noted by Lindblad \cite{L74} and Ulhmann \cite{U77}, the relative entropy is jointly convex, i.e., if $\rho=\sum_j p_j \rho_j$ and $\sigma=\sum_j p_j \sigma_j$ (with $\sum_j p_j=1$ and $p_j\geq 0$), then
$$  0\leq \sum_j p_j S^K(\rho_j\| \sigma_j)- S^K(\rho\|\sigma)\ .$$

This inequality is also equivalent to the monotonicity inequality: having monotonicity inequality,  one chooses both $\rho$ and $\sigma$ to be block-diagonal, which immediately leads to the joint convexity (see Section~\ref{sec:Joint}). One the other hand, having the joint convexity inequality, Araki and Lieb \cite{AL70, L75} used a purification process to show that on the set of pure states (the extreme points of the set of density matrices) the SSA holds with equality.

\textbf{Joint convexity of the quasi-relative entropy.} If was shown by Petz \cite{OP93, P86-1} or \cite[Theorem 2]{P10} that the quasi relative entropy is jointly convex in $\rho$ and $\sigma$: $$0\leq  \sum_j p_j S_f^K(\rho_j\| \sigma_j)-S_f^K(\rho\|\sigma). $$

Using our strengthening of the monotonicity inequality, we obtain the error term for the joint convexity inequality:  for a large class of operator convex functions $f$, and any $\beta\in(0,1)$, there are some constants $M$ and $\alpha(\beta)$ (see above for the dependence), the following holds
$$\sum_j p_j^{1/2} \|\sigma^{\beta}K\rho^{-\beta} \rho_j^{1/2} - \sigma_j^{\beta}K\rho_j^{1/2-\beta}\|_2\leq M 
\left(\sum_j p_j S_f^K(\rho_j\| \sigma_j) - S_f^K({\rho}\| {\sigma})\right)^{\alpha(\beta) }\ .
 $$
Moreover, we show that the equality in the joint convexity inequality holds if and only if, for all $j$ and all $\beta\in\mathbb{C}$,
$$\sigma^{\beta}K\rho^{-\beta}  = \sigma_j^{\beta}K\rho_j^{-\beta}\ . $$

We apply these results to show the error term in the concavity inequality of the term $\Tr(K^*\sigma^p K\rho^{1-p})$. The concavity of this term was shown by Lieb \cite{L73} for powers of $\rho$ and $\sigma$ that sum up to a number no greater than one.

\textbf{Structure of the paper.} In the next Section~\ref{sec:Op-mon} we review known results for the operator monotone functions, in particular its integral representation. In Section~\ref{sec:Quasi} we introduce the quantum quasi-relative entropy and present few simples, but important facts about it.  In Section~\ref{sec:Mono} we review the previous strengthening of the monotonicity inequality for relative entropy and $f$-divergences, and present the error term for the quasi-relative entropy, with the condition for the equality. In Section~\ref{sec:Joint} we apply the strengthening of the monotonicity inequality established in the previous section to the joint convexity inequality and provide the condition for the equality as well. In Section~\ref{sec:ssa} we provide the error term for a class of operator inequalities established by Ruskai \cite{R12} with the condition for equality. In Section~\ref{sec:log} we apply all previous results to obtain new inequalities for the relative entropy by taking the function $f(x)=-\log(x)$ and $K=I$. In particular, we obtain the error terms for the joint convexity inequality, strong subadditivity inequality, and the operator strong subadditivity inequalities. In Section~\ref{sec:WYD} we apply previous results to the power function. The quasi-relative entropy for the power functions gives a term $\Tr(K^*\sigma^p K\rho^{1-p})$, concavity of which leads to the concavity of the Wigner-Yanase-Dyson $p$-skew information. In particular, we provide a Pinsker inequality for such a term, and the error terms for the joint concavity and the operator version of WYD inequalities. At the end, we apply these results to show the error term for the operator Cauchy-Schwartz inequality.

\section{Operator monotone functions}\label{sec:Op-mon}

\begin{definition}\label{def:op-mon}
A function $f: (a,b) \to \R$ is {\em operator monotone} if for any pair of  self-adjoint operators 
 $A$ and $B$ on some 
 Hilbert space that have spectrum in $(a,b)$, the operator $$f(A) -f(B)\geq0$$ is positive semidefinite whenever $A - B\geq 0$ is positive semidefinite. We say that $f$ is {\it operator monotone decreasing} on $(a,b)$ in case $-f$ is 
 operator monotone.
\end{definition}

\begin{definition}\label{def:convex}
A function $f$ is {\em operator concave} on the positive operators, when  for all positive semidefinite 
operators $A$ and $B$, and all $\lambda$ in $(0,1)$, $$ f((1-\lambda)A + 
\lambda B))-(1-\lambda)f(A) - \lambda f(B)\geq 0$$
is positive semidefinite. A function $f$ is {\em operator convex} when $-f$ is operator concave.
\end{definition}

\begin{theorem}[Bhatia '97 ]\cite[Theorem V.2.5]{B97}
Every continuous function $f$ mapping $[0,\infty)$ into itself is operator monotone if and only if it is operator concave. 
\end{theorem}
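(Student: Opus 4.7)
The plan is to prove the two implications separately, since the characterization has different flavors in each direction.

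For the direction \emph{operator monotone implies operator concave}, I would invoke Loewner's integral representation theorem: any continuous function $f: [0,\infty) \to [0,\infty)$ that is operator monotone admits a representation of the form
$$f(x) = a + bx + \int_{(0,\infty)} \frac{\lambda x}{\lambda + x}\, d\mu(\lambda),$$
with $a,b \geq 0$ and $\mu$ a positive measure on $(0,\infty)$ for which the integral converges at each $x > 0$. Since operator concavity is preserved under positive linear combinations and under positive-measure integrals (provided one checks dominated convergence at the operator level), it suffices to verify that each building block on the right is operator concave. The constant and linear terms are trivial. For the integrand, rewrite
$$\frac{\lambda x}{\lambda + x} = \lambda - \frac{\lambda^2}{\lambda + x},$$
and recall the classical fact that $x \mapsto x^{-1}$ is operator convex on the positive operators; translating by $\lambda$ and negating shows that $x \mapsto -(\lambda+x)^{-1}$ is operator concave, which is what we need.

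For the converse direction \emph{operator concave implies operator monotone}, I would argue directly from the definitions and use crucially that $f$ takes values in $[0,\infty)$. Let $A, B \geq 0$ with $A \leq B$, and for $\lambda \in (0,1)$ set
$$C_\lambda = \frac{B - \lambda A}{1-\lambda}.$$
Since $\lambda A \leq \lambda B \leq B$, we have $C_\lambda \geq 0$, and by construction $B = \lambda A + (1-\lambda)C_\lambda$. Operator concavity gives
$$f(B) \geq \lambda f(A) + (1-\lambda)f(C_\lambda) \geq \lambda f(A),$$
where the second inequality uses $f(C_\lambda) \geq 0$, which holds precisely because $f$ maps $[0,\infty)$ into itself and the functional calculus of a positive operator under a nonnegative continuous function is positive. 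Letting $\lambda \uparrow 1$ and using that $\lambda f(A) \to f(A)$ in norm (while $f(B)$ does not depend on $\lambda$), we conclude $f(B) \geq f(A)$, establishing operator monotonicity.

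The main obstacle is clearly the forward direction: the existence of the integral representation is a deep theorem (Loewner's theorem), and a self-contained proof would require substantial detour through Pick--Nevanlinna theory or an analytic-continuation argument. In a proof proposal for this paper I would simply quote the representation and concentrate the work on verifying operator concavity of the kernel $\lambda x/(\lambda+x)$; this is the technical heart and the only place where operator-theoretic (rather than scalar) arguments are actually needed. The converse direction is elementary once one notices the trick of writing $B$ as a convex combination involving $A$ and exploiting the nonnegativity of $f$, and it is remarkable that it relies on the range condition $f\colon [0,\infty) \to [0,\infty)$ in such an essential way.
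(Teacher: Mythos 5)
Your proof is correct on both directions; note that the paper itself states this result as a citation to Bhatia's Theorem V.2.5 and gives no proof, so the only meaningful comparison is with the standard reference. Your converse direction (operator concave plus nonnegativity implies operator monotone, via writing $B = \lambda A + (1-\lambda)C_\lambda$ with $C_\lambda = (B-\lambda A)/(1-\lambda) \geq 0$ and discarding the $f(C_\lambda)$ term) is exactly the classical argument and is airtight. For the forward direction you route through Loewner's integral representation and the operator convexity of $x \mapsto (\lambda + x)^{-1}$; this works, and the representation you quote, with $a = f(0) \geq 0$ and $b \geq 0$, is the correct normalized form for a continuous operator monotone $f$ on $[0,\infty)$ with nonnegative values. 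The one thing worth flagging is that Bhatia's own proof of V.2.5 is deliberately more elementary: it does not use Loewner's theorem (which only appears later in his Chapter V, and which the present paper also quotes separately as Theorem V.4.7), but instead derives operator concavity from the transformer inequality $f(T^*AT) \geq T^* f(A)T$ for contractions $T$ and operator monotone $f$ with $f(0)\geq 0$, realizing the convex combination $\lambda A + (1-\lambda)B$ as a compression of the block-diagonal operator $\mathrm{diag}(A,B)$ by an isometric column. Your route buys a very short verification at the cost of importing a deep theorem; Bhatia's route is self-contained within the elementary part of the theory and keeps the logical order of the two cited theorems consistent. Either is acceptable here, since Loewner's theorem is proved independently of V.2.5, so there is no circularity.
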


\begin{definition}\label{def:pick}
 A {\em Pick function} is a function $f$ that is analytic on the upper half plane and has a positive imaginary part. The set of Pick functions on $(a,b)$ is denoted as $\mathcal{P}_{(a,b)}.$
\end{definition}

\begin{theorem}[L\"{o}wner '34] \cite[Theorem V.4.7]{B97} A function $f$ on $(a,b)$ is operator monotone if and only if $f$ is a restriction of a pick function $f\in\mathcal{P}_{(a,b)}$ to $(a,b)$.
\end{theorem}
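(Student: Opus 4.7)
The plan is to prove the two implications of L\"owner's theorem separately, since the ``easy'' direction $(\Leftarrow)$ proceeds by a clean integral representation argument while the converse $(\Rightarrow)$ requires substantially heavier analytic machinery.

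For the $(\Leftarrow)$ direction, I would invoke the Nevanlinna (Herglotz) integral representation of Pick functions: every $f\in\mathcal P_{(a,b)}$ admits the form
$$f(z)=\alpha+\beta z+\int_{\R}\left(\frac{1}{t-z}-\frac{t}{1+t^2}\right)d\mu(t)$$
with $\alpha\in\R$, $\beta\ge 0$, and $\mu$ a positive Borel measure on $\R$ satisfying $\int(1+t^2)^{-1}d\mu(t)<\infty$; moreover, if $f$ extends continuously to $(a,b)$ then $\supp\mu\subset\R\setminus(a,b)$. Hence it suffices to establish three facts: (i) the cone of operator monotone functions on $(a,b)$ is closed under positive linear combinations and pointwise limits; (ii) constants and the identity map are trivially operator monotone; and (iii) for each fixed $t\notin(a,b)$, the function $x\mapsto (t-x)^{-1}$ is operator monotone on $(a,b)$. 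Item (iii) reduces to the elementary fact that $0<X\le Y$ implies $Y^{-1}\le X^{-1}$, applied to $X=t-B$ and $Y=t-A$ (with an overall sign depending on whether $t>b$ or $t<a$). Assembling these gives operator monotonicity of $f$.

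For the $(\Rightarrow)$ direction, I would work through the L\"owner matrix formalism. Given $n$ distinct points $x_1,\ldots,x_n\in(a,b)$, one first shows that the L\"owner matrix
$$L_n(f;x_1,\ldots,x_n)=\left[\frac{f(x_i)-f(x_j)}{x_i-x_j}\right]_{i,j=1}^{n}$$
is positive semidefinite. The standard route is: choose $A=\sum_i x_i \ket{e_i}\bra{e_i}$ and a rank-one positive perturbation $A+sP$, then differentiate the inequality $f(A+sP)\ge f(A)$ at $s=0$ using the Daleckii--Krein formula, which expresses the Fr\'echet derivative of $f$ at $A$ in terms of divided differences -- precisely the entries of $L_n$. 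Positivity of the $2\times 2$ L\"owner matrices already yields continuous differentiability of $f$ after a limiting argument on coincident points, and iterating with higher-order L\"owner matrices upgrades this to $C^\infty$. Finally, the infinite family of positivity conditions on L\"owner matrices is exactly the classical Pick--Nevanlinna interpolation condition on finite subsets of $(a,b)$, so for each finite configuration one obtains a Pick function that interpolates $f$ at those nodes; a Helly/normal-families selection argument produces a Pick function agreeing with $f$ on all of $(a,b)$, and uniqueness on the real interval together with the Schwarz reflection principle identifies this extension with $f$.

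The main obstacle is the final extension step: translating the combinatorial-analytic data of positive L\"owner matrices at every finite set of nodes into a \emph{single} holomorphic function on the upper half-plane. The delicate points are (a) ensuring tightness when passing to the normal-families limit, which requires a priori growth control derived from $L_2$-positivity, and (b) pinning down the limit uniquely despite the generic underdeterminacy of Nevanlinna--Pick interpolation, which is handled because $f$ is real on the interval $(a,b)$ and hence forces the reflected extension. A more self-contained alternative is the Hansen--Pedersen approach via an integral representation of operator concave functions on $[0,\infty)$ combined with an affine change of variable mapping $(a,b)$ to $(0,\infty)$, trading the Pick-function machinery for an explicit kernel at the cost of an extra reduction step.
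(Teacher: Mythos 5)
You should first note that the paper does not prove this statement at all: it is quoted verbatim as a classical result with a pointer to Bhatia \cite[Theorem V.4.7]{B97}, so there is no in-paper argument to compare against, and any assessment is of your sketch on its own terms. Your outline is the classical L\"owner--Donoghue route and is sound as a plan. The $(\Leftarrow)$ direction is complete in essence: the Nevanlinna representation with $\beta\ge 0$ and $\supp\mu\subset\R\setminus(a,b)$ (which follows from the Stieltjes inversion formula once $f$ continues analytically across $(a,b)$), plus operator monotonicity of $x\mapsto(t-x)^{-1}$ for $t\notin(a,b)$ via operator antitonicity of the inverse, plus closure of the cone under positive combinations and pointwise limits, does the job. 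For $(\Rightarrow)$ you have correctly identified where the real work is: $2\times 2$ L\"owner positivity giving $C^1$, the Daleckii--Krein differentiation (which needs that regularity first, so the order of the steps matters), and above all the passage from positivity of all finite L\"owner matrices to a single Pick extension. Be aware that this last step uses the \emph{boundary} Nevanlinna--Pick theorem (real nodes, real values, derivative data on the diagonal), which is more delicate than the interior interpolation problem, and that the normal-families argument needs a normalization (e.g.\ fixing $g(x_1)=f(x_1)$ and controlling the representing measure) to exclude degenerate limits. It is worth knowing that Bhatia's own proof, which the citation points to, takes a somewhat different final step: after reducing to $(-1,1)$ and establishing L\"owner-matrix positivity, he uses a Helly selection argument to produce the explicit representation $f(x)=f(0)+f'(0)\int_{[-1,1]}\frac{x}{1-\lambda x}\,{\rm d}\mu(\lambda)$, from which the analytic continuation to a Pick function is read off directly; this trades the Nevanlinna--Pick interpolation machinery for an explicit kernel, much like the Hansen--Pedersen alternative you mention.
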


According to \cite[Chapter II, Theorem I]{Dono} every operator monotone decreasing function $f$ (i.e. $-f\in\mathcal{P}_{(0,\infty)}$) has a canonical integral representation 
\begin{equation}\label{low}
f(x) = a x{+} b {-}\int_{0}^\infty \left(  \frac{1}{t +x }-\frac{t}{t^2+1}   \right){\rm d}\mu_f(t)\ ,
\end{equation}
where  $a\leq 0$, $b\in\bR$ and $\mu$ is a positive measure on $(0,\infty)$ such that 
${\displaystyle \int_{0}^\infty  \frac{1}{t^2+1}{\rm d}\mu_f(t)<\infty}$.
Conversely, every such function is operator monotone decreasing.

 The following formulas \cite[Chapter II, p. 24]{Dono}  determine $a$, $b$ and $\mu$ corresponding to $f$.
\begin{equation}\label{alphabeta}
a = \lim_{y\uparrow\infty}\frac{f(iy)}{iy} \qquad{\rm and}\qquad b = \Re(f(i))\ .
\end{equation}

Define the monotone increasing function $\mu(x) := \frac12 \mu(\{x\}) + \mu((-\infty, x))$, then
according to \cite[Chapter II, Lemma 2] {Dono}  we have that
 \begin{equation}\label{muform}
 \mu(x_1) - \mu(x_0) = \lim_{y\downarrow 0} \frac{1}{\pi} \int_{x_0}^{x_1} \Im f(-x+iy){\rm d} x\ .
 \end{equation}

\begin{definition}\label{def:reg} A operator monotone function $f$ is {\em regular} in case the measure 
$\mu$ in the canonical integral representation of $f$ is absolutely continuous with respect to 
Lebesgue measure, and for each $S,T > 0$, there is a finite constant $C^f_{S,T}$ such that
\begin{equation}\label{regdef}
{\rm d} t \leq C^f_{S,T} {\rm d}\mu(t)
\end{equation}
on the interval $[1/S,T]$.  An operator monotone decreasing function is {\em regular} if and only if $-f$ is regular. 
\end{definition}

\begin{example}\label{ex-log} Let $f(x)=-\log(x)$. It is operator monotone decreasing.
Then 
$$b=\Re(\log(i)) = 0\ ,$$ and 
$$a=\lim_{y\uparrow\infty}\log(iy)/(iy) = \lim_{y\uparrow\infty}(\log y + i\pi/2)/(iy) =0\ . $$ It is clear from \eqref{muform} that
$${\rm d}\mu(x) = \frac{1}{\pi}\lim_{y\downarrow 0}\Im\log(-x + iy){\rm d}x = {\rm d}x\ .$$
Then the  integral representation (\ref{low}) gives the following formula for the logarithmic function
\begin{equation}\label{eq:logex}
-\log x=\int_{0}^\infty \left( \frac{1}{t +x } - \frac{t}{t^2+1}  \right){\rm d}t \ .
\end{equation}
\end{example}

\begin{example}\label{ex-power}
Let $f(x)=x^p$, where ${p}\in\bR$. Then by \cite[Theorem V.2.10]{B97} the function $f$ is 
\begin{enumerate}
\item operator monotone if and only if $p\in[0,1]$;
\item operator convex  if and only if $p\in[-1,0]\cup[1,2]$;
\end{enumerate}
Consider the power function $f(x)=-x^p$ for $p\in(0,1)$. It is operator monotone decreasing. Then 
$$a = \lim_{y\uparrow\infty}f(iy)/(iy) = 0\ , \ \ \text{and } \ b = {\cos}(p \pi/2)\ .$$

For $x>0$, $\lim_{y\downarrow 0} \Im f(-x + iy) = x^p\sin(p \pi)$ so that
$${\rm d}\mu(x) = \pi^{-1}\sin(p \pi) x^p{\rm d}x\ .$$ This yields the representation
\begin{equation}\label{eq:powex}
-x^p =  - {\cos}(p \pi/2)  + \frac{\sin(p \pi)}{\pi} \int_{0}^\infty t^p\left(  
\frac{1}{t +x } -\frac{t}{t^2+1}  \right){\rm d}t \ .
\end{equation}
\end{example}

\section{Quantum quasi-relative entropy}\label{sec:Quasi}

The notion of quantum quasi-relative entropy was introduced by Petz \cite{P85, P86-1} or \cite[Chapter 7]{OP93}. Let $\cH$ be a finite-dimensional Hilbert space, and $\rho$ and $\sigma$ be states on $\cH$ (i.e. trace-one, positive semi-definite operators). Note that notions and results in this paper can be formulated for von Neumann algebras, as it was done in multiple references. 
\begin{definition}
For an operator monotonically decreasing function $f$ (which implies that $f$ is also operator convex), such that $f(1)=0$,  states $\rho$ and $\sigma$, and a bounded operator $K$ on $\cH$, {\it the quasi-relative entropy} is defined as 
$$S_f^K(\rho|| \sigma)=\Tr(\sqrt{\rho}K^*f(\Delta_{\sigma,\rho})K\sqrt{\rho}),$$
where the modular operator, introduced by Araki \cite{A76}, $$\Delta_{A,B}(X)=L_AR_B^{-1}(X)=AXB^{-1}$$ is a product of left and right multiplication operators, $L_A(X)=AX$ and $R_B(X)=XB$.
\end{definition}

There is a straightforward way to calculate the quasi-relative entropy from the spectral decomposition of states \cite{V16}. Let $\rho$ and $\sigma$ have the following spectral decomposition
$$\rho=\sum_j\lambda_j\ket{\psi_j}\bra{\psi_j}, \ \ \sigma=\sum_k\mu_k\ket{\phi_k}\bra{\phi_k}.$$
Then the quasi-relative entropy is calculated as follows  
\begin{equation}\label{eq:formula}
S_f^K(\rho||\sigma)=\sum_{j,k}\lambda_j f\left(\frac{\mu_k}{\lambda_j}\right)|\bra{\phi_k}K\ket{\psi_j}|^2. 
\end{equation}

\begin{example}\label{ex:quasi}
\begin{enumerate}
\item For $K=I$, the quasi-relative entropy
$$S_f(\rho|| \sigma)=\Tr({\rho}^{1/2}f(\Delta_{\sigma,\rho}){\rho}^{1/2}),$$
is sometimes referred to as an $f$-divergence.
\item\label{ex:J}
For $-1\leq p<1$ define a function $$f_p(x):=\left\{\begin{matrix}
\frac{1}{p(1-p)}(1-x^p) & p\neq 0\\
-\log x & p=0
\end{matrix}\right..$$
Note that from Example~\ref{ex-power} the function is convex. The quasi-relative entropy for this function
$$S_{f_{1-p}}^{K^*}(\rho||\sigma)=J_p(K, \rho, \sigma):=\Tr (\sqrt{\sigma}K^*g_p(\Delta_{\rho,\sigma})K\sqrt{\sigma})$$
is the function $J_p$ defined by Jencova and Ruskai in \cite{JR10}, here for $0<p\leq 2$ the function $g_p$ is defined as
 $$g_p(x):=xf_{1-p}(x^{-1})=\left\{\begin{matrix}
\frac{1}{p(1-p)}(x-x^p) & p\neq 1\\
x\log x & p=1
\end{matrix}\right..$$ 
\item In the above example, if $p=0$ (i.e. $f(x)=-\log x$) and $K=I$, we obtain the Umegaki relative entropy \cite{U62}
$$S_{f_0}^I(\rho||\sigma)={J}_1(I,\rho,\sigma)=S(\rho||\sigma)=\Tr(\rho\log \rho- \rho\log \sigma). $$
\item In example \ref{ex:J}, the quasi-relative entropy  can be calculated to be
$${S_{f_p}^K(\rho|| \sigma)=\frac{1}{p(1-p)}\Tr(K^*\rho K- K^*\sigma^{p}K\rho^{1-p})}\ .$$
This expression has the term $\Tr(K^*\sigma^{p}K\rho^{1-p})$, concavity of which was proved by Lieb in \cite{L73} with more general powers.
\item\label{ex:WYD} In the above example, taking $\sigma=\rho$ and $K^*=K$, results in 
$$S_{f_{p}}^K(\rho|| \rho)=-\frac{1}{2p(1-p)}\Tr[K, \rho^p][K, \rho^{1-p}]\geq 0 \ .$$
Up to a constant, this is the Wigner-Yanase-Dyson $p$-skew information \cite{WY63, WY64}, for $p\in(0,1)$.

\end{enumerate}
\end{example}

Now we state a few simple properties of quasi-relative entropy, some of which have been noted before.
\begin{proposition}
The quasi-relative entropy scales as follows: for any constant $c$,
$$S_f^K(c\rho\|c\sigma)=cS_f^K(\rho\|\sigma) \ .$$
\end{proposition}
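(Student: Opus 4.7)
The plan is to unfold the definition of $S_f^K$ and observe that the relative modular operator $\Delta_{\sigma,\rho}$ is invariant under simultaneous rescaling of its two arguments, while the two flanking factors $\sqrt{\rho}$ each contribute a factor of $\sqrt{c}$. Thus the overall factor of $c$ on the right comes entirely from these prefactors, and the proof reduces to bookkeeping; no analytic machinery is needed. Throughout I assume $c>0$ so that $c\rho,\,c\sigma$ remain positive semidefinite (the case $c=0$ is trivial).

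The one substantive step is to verify that for every operator $X$ on $\cH$,
$$\Delta_{c\sigma,\,c\rho}(X) \;=\; (c\sigma)\,X\,(c\rho)^{-1} \;=\; \sigma\,X\,\rho^{-1} \;=\; \Delta_{\sigma,\rho}(X),$$
using $(c\rho)^{-1}=c^{-1}\rho^{-1}$ in the sense of the generalized inverse defined in the introduction. Hence $\Delta_{c\sigma,c\rho}=\Delta_{\sigma,\rho}$ as superoperators, and the continuous functional calculus gives $f(\Delta_{c\sigma,c\rho})=f(\Delta_{\sigma,\rho})$. Combined with $\sqrt{c\rho}=\sqrt{c}\,\sqrt{\rho}$, substitution into the definition of $S_f^K$ yields the claim.

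As a cross-check, one can reach the same conclusion from the spectral formula \eqref{eq:formula}: rescaling both $\rho$ and $\sigma$ by $c$ multiplies every eigenvalue by $c$ while leaving eigenvectors (and hence the matrix elements $|\langle\phi_k|K|\psi_j\rangle|^2$) fixed; the ratios $\mu_k/\lambda_j$ are preserved, so $f$ is evaluated at the same points, and the only net change is the overall factor of $c$ in front of $\lambda_j$ in each summand. There is no real obstacle here — the main conceptual point worth highlighting is the scaling invariance of the modular operator, which is precisely what forces homogeneity of degree $1$ rather than some $f$-dependent exponent.
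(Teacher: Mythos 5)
Your proof is correct and, in substance, the same as the paper's: the paper's entire proof is the one-line observation that the claim follows from the spectral formula \eqref{eq:formula}, which is exactly your cross-check (ratios $\mu_k/\lambda_j$ unchanged, overall factor $c$ from the prefactor $\lambda_j$). Your primary route via $\Delta_{c\sigma,c\rho}=\Delta_{\sigma,\rho}$ and $\sqrt{c\rho}=\sqrt{c}\,\sqrt{\rho}$ is an equivalent direct verification from the definition, and your restriction to $c>0$ is the sensible reading of the statement.
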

\begin{proof}
This following directly from the formula (\ref{eq:formula}).
\end{proof}

\begin{proposition}\label{prop:U}
For a unitary $U$, and states $\rho$ and $\sigma$
$$S_f^U(\rho||\sigma)=S_f^I(U\rho U^*||\sigma)=S_f^I(\rho|| U^*\sigma U). $$
\end{proposition}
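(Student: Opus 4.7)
The most direct route is via the explicit spectral formula \eqref{eq:formula}. With the decompositions $\rho=\sum_j\lambda_j\ket{\psi_j}\bra{\psi_j}$ and $\sigma=\sum_k\mu_k\ket{\phi_k}\bra{\phi_k}$, plugging $K=U$ into \eqref{eq:formula} gives
$$S_f^U(\rho\|\sigma)=\sum_{j,k}\lambda_j\,f\!\left(\frac{\mu_k}{\lambda_j}\right)|\bra{\phi_k}U\ket{\psi_j}|^2.$$
The plan is to compute each of the other two quasi-relative entropies from the same formula with $K=I$ and to match the transition amplitudes. The key observation is that conjugating a state by $U$ merely rotates its eigenbasis while keeping its eigenvalues fixed.

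Concretely, $U\rho U^{\ast}$ has spectral decomposition $\sum_j\lambda_j\,\ket{U\psi_j}\bra{U\psi_j}$, so \eqref{eq:formula} with $K=I$ yields
$$S_f^I(U\rho U^{\ast}\|\sigma)=\sum_{j,k}\lambda_j\,f\!\left(\frac{\mu_k}{\lambda_j}\right)|\bra{\phi_k}U\psi_j\rangle|^2,$$
and since $\bra{\phi_k}U\psi_j\rangle=\bra{\phi_k}U\ket{\psi_j}$ this matches $S_f^U(\rho\|\sigma)$ term-by-term. Analogously, $U^{\ast}\sigma U=\sum_k\mu_k\,\ket{U^{\ast}\phi_k}\bra{U^{\ast}\phi_k}$, so
$$S_f^I(\rho\|U^{\ast}\sigma U)=\sum_{j,k}\lambda_j\,f\!\left(\frac{\mu_k}{\lambda_j}\right)|\bra{U^{\ast}\phi_k}\psi_j\rangle|^2,$$
and the unitarity of $U$ gives $|\bra{U^{\ast}\phi_k}\psi_j\rangle|=|\bra{\phi_k}U\ket{\psi_j}|$, so this sum is also identical.

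There is no real obstacle here beyond bookkeeping: once formula \eqref{eq:formula} is available, the statement reduces to the observation that the only place $K$ enters is through the squared amplitudes $|\bra{\phi_k}K\ket{\psi_j}|^2$, and these amplitudes are invariant under simultaneously moving the unitary onto either the bra or the ket side. (An alternative, coordinate-free argument would note that $\Delta_{\sigma,U\rho U^{\ast}}=R_{U^{\ast}}\Delta_{\sigma,\rho}R_{U}$ as superoperators and hence $f(\Delta_{\sigma,U\rho U^{\ast}})(X)=f(\Delta_{\sigma,\rho})(XU)\,U^{\ast}$, after which a cyclicity-of-trace step produces the same identity; but the spectral computation above is shorter and uses nothing beyond what is already established.)
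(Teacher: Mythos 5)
Your proof is correct and follows essentially the same route as the paper: both arguments rest on the spectral decompositions of $\rho$ and $\sigma$, the observation that conjugation by $U$ rotates eigenvectors while preserving eigenvalues, and the matching of the squared amplitudes $|\bra{\phi_k}U\ket{\psi_j}|^2$ term by term (the paper phrases this via the spectral expansion of $f(\Delta_{\sigma,\rho})$ and cyclicity of the trace rather than quoting \eqref{eq:formula} directly, but the computation is the same).
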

\begin{proof}
If $\rho$ and $\sigma$ have the following spectral decomposition
$$\rho=\sum_j\lambda_j\ket{\psi_j}\bra{\psi_j}, \ \ \sigma=\sum_j\beta_j\ket{\phi_j}\bra{\phi_j},$$
then by \cite{V16}
$$f(\Delta_{\sigma,\rho})=\sum_{j,k}f\left( \frac{\beta_j}{\lambda_k}\right)\Delta_{\ket{\phi_j}\bra{\phi_j},\ket{\psi_k}\bra{\psi_k}}.$$
Therefore, the quasi-relative entropy can be written as
\begin{align}
S_f^U(\rho||\sigma)&=\sum_{j,k}f\left( \frac{\beta_j}{\lambda_k}\right)\Tr(\sqrt{\rho}\,U^*\ket{\phi_j}\bra{\phi_j}U\sqrt{\rho}\ket{\psi_k}\bra{\psi_k}).
\end{align}
On the other hand, using the spectral decomposition of $\rho$ and the fact that $U$ is unitary, we obtain
\begin{align}
S_f^I(U\rho U^*||\sigma)&=\Tr(U\sqrt{\rho}\,U^*f(\Delta_{\sigma,U\rho U^*})(U\sqrt{\rho}\,U^*))\\
&=\sum_{j,k}f\left( \frac{\beta_j}{\lambda_k}\right)\Tr(U\sqrt{\rho}\,U^*\ket{\phi_j}\bra{\phi_j}U\sqrt{\rho}\,U^*U\ket{\psi_k}\bra{\psi_k}U^*)\\
&=\sum_{j,k}f\left( \frac{\beta_j}{\lambda_k}\right)\Tr(\sqrt{\rho}\,U^*\ket{\phi_j}\bra{\phi_j}U\sqrt{\rho}\ket{\psi_k}\bra{\psi_k})\\&=S_f^U(\rho||\sigma).
\end{align}
Similarly, $S_f^U(\rho||\sigma)=S_f^I(\rho|| U^*\sigma U). $
\end{proof}

\begin{proposition}
For a unitary $U$, and states $\rho$ and $\sigma$ the quasi-relative entropy is non-negative
$$S_f^U(\rho||\sigma)\geq 0.$$
\end{proposition}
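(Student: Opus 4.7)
The plan is to reduce to the $K=I$ case via Proposition~\ref{prop:U} and then apply Jensen's inequality to the explicit spectral formula~\eqref{eq:formula}. By Proposition~\ref{prop:U}, we have
$$S_f^U(\rho\|\sigma) = S_f^I(U\rho U^*\|\sigma),$$
so it suffices to prove $S_f^I(\tilde\rho\|\sigma) \geq 0$ for any pair of states $\tilde\rho,\sigma$ (applied with $\tilde\rho = U\rho U^*$). In fact, once non-negativity is established for the $f$-divergence, the unitary case follows immediately from Proposition~\ref{prop:U} without any additional work.

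To prove non-negativity of $S_f^I(\tilde\rho\|\sigma)$, I would write the spectral decompositions $\tilde\rho = \sum_j \lambda_j \ket{\psi_j}\bra{\psi_j}$ and $\sigma = \sum_k \mu_k\ket{\phi_k}\bra{\phi_k}$ and use the spectral formula~\eqref{eq:formula} with $K=I$:
$$S_f^I(\tilde\rho\|\sigma) = \sum_{j,k}\lambda_j \, f\!\left(\frac{\mu_k}{\lambda_j}\right) |\langle\phi_k|\psi_j\rangle|^2.$$
Setting $p_{j,k} := \lambda_j |\langle\phi_k|\psi_j\rangle|^2$, these weights are nonnegative, and using $\Tr\tilde\rho = 1$ together with the Parseval identity $\sum_k |\langle\phi_k|\psi_j\rangle|^2 = 1$, we have $\sum_{j,k} p_{j,k} = \sum_j \lambda_j = 1$. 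Thus $\{p_{j,k}\}$ is a probability distribution (restricting the sum to $\lambda_j \neq 0$ in accordance with the generalized-inverse convention).

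Since $f$ is operator convex, it is in particular convex as a scalar function, so Jensen's inequality yields
$$S_f^I(\tilde\rho\|\sigma) = \sum_{j,k} p_{j,k}\, f\!\left(\frac{\mu_k}{\lambda_j}\right) \geq f\!\left(\sum_{j,k} p_{j,k}\frac{\mu_k}{\lambda_j}\right).$$
A direct computation of the argument gives $\sum_{j,k} |\langle\phi_k|\psi_j\rangle|^2 \mu_k = \sum_k \mu_k = \Tr\sigma = 1$, so the right-hand side equals $f(1) = 0$. Combined with the reduction above, this gives $S_f^U(\rho\|\sigma) \geq 0$. There is no real obstacle here; the only subtle point is handling possibly singular states, which is dealt with by the generalized-inverse convention that simply removes the zero-eigenvalue terms from the sum while leaving the weights $\{p_{j,k}\}$ a valid probability distribution on the support of $\tilde\rho$.
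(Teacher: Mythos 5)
Your proof is correct, but it takes a genuinely different route from the paper. The paper also first reduces to $U=I$ via Proposition~\ref{prop:U}, but then invokes the monotonicity of the quasi-relative entropy under the pinching map $\Phi(\omega)=\sum_j\bra{j}\omega\ket{j}\,\ket{j}\bra{j}$ to pass to a classical $f$-divergence, whose non-negativity is taken as known; this is a forward reference to the monotonicity machinery of Section~\ref{sec:Mono-review}. You instead apply the scalar Jensen inequality directly to the spectral formula \eqref{eq:formula}, which is more elementary and entirely self-contained --- it uses only convexity of $f$ on $(0,\infty)$, $f(1)=0$, and Parseval, and in particular does not presuppose any data-processing inequality. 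One small point to tighten: when $\rho$ is singular, the $j$-sum runs only over $\lambda_j\neq 0$, so the Jensen barycenter is $\sum_k \mu_k \sum_{j:\lambda_j\neq 0}|\langle\phi_k|\psi_j\rangle|^2 = \Tr\bigl(\sigma\,\Pi_{\supp\rho}\bigr)\leq 1$ rather than exactly $1$ as you claim; the conclusion still follows because $f$ is assumed operator monotone \emph{decreasing}, hence $f(x)\geq f(1)=0$ for $x\leq 1$, but that monotonicity is an extra ingredient you should state explicitly (convexity and $f(1)=0$ alone would not suffice there).
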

\begin{proof}
Form the previous proposition it is evident that it is enough to consider $U=I$. Fix a basis $\{\ket{j}\bra{j}\}_j$ of the Hilbert space $\cH$ the states act on, and consider the map
$$\Phi(\omega)=\sum_j\bra{j}\omega\ket{j}\ket{j}\bra{j}. $$
It is a Schwarz map, so from the monotonicity of the quasi-relative entropy (\ref{eq:original-mono}), which we will discuss in Section \ref{sec:Mono-review}, we have
$$S_f^I(\rho|| \sigma)\geq S_f^I(\Phi(\rho)|| \Phi(\sigma)). $$
The last expression is the classical quasi-relative entropy, which is non-negative.
\end{proof}

\begin{proposition}
For a unitary $U$, and states $\rho$ and $\sigma$
$$S_f^U(\rho||\sigma)= 0 \ \ \text{if and only if}\ \ \rho=U^*\sigma U.$$
\end{proposition}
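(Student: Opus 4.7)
The plan is to reduce the claim to the identity case via the preceding proposition, and then combine the pinching argument used to prove non-negativity with Jensen's inequality in its classical form.

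First, by Proposition \ref{prop:U}, we have $S_f^U(\rho\|\sigma) = S_f^I(U\rho U^*\|\sigma)$, so setting $\tau := U\rho U^*$ the statement reduces to: $S_f^I(\tau\|\sigma)=0$ if and only if $\tau=\sigma$. The ``if'' direction is immediate from formula \eqref{eq:formula}: choosing a common eigenbasis of $\tau$ and $\sigma$ (compatibly inside any degenerate subspaces) makes the overlaps $|\langle\phi_k|\psi_j\rangle|^2 = \delta_{jk}$, and since $f(1)=0$ every term vanishes.

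For the nontrivial direction, I would fix an arbitrary orthonormal basis $\{|e_i\rangle\}$ and consider the pinching $\Phi(\omega) = \sum_i\langle e_i|\omega|e_i\rangle\,|e_i\rangle\langle e_i|$, a unital Schwarz map already used in the proof of non-negativity. Monotonicity under $\Phi$ together with non-negativity gives
\begin{equation*}
0 = S_f^I(\tau\|\sigma) \ \geq\ S_f^I(\Phi(\tau)\|\Phi(\sigma)) \ \geq\ 0,
\end{equation*}
so $S_f^I(\Phi(\tau)\|\Phi(\sigma))=0$. The latter is a classical $f$-divergence $\sum_i p_i\, f(q_i/p_i)$ with $p_i=\langle e_i|\tau|e_i\rangle$ and $q_i=\langle e_i|\sigma|e_i\rangle$ (terms with $p_i=0$ handled via the usual kernel convention). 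Jensen's inequality yields $\sum_i p_i f(q_i/p_i)\geq f(\sum_i q_i)=f(1)=0$, and by strict convexity of $f$ equality forces $q_i=p_i$ for every $i$. Since the basis was arbitrary, $\langle v|\tau|v\rangle = \langle v|\sigma|v\rangle$ for every unit vector $|v\rangle$, hence $\tau=\sigma$, i.e.\ $\rho = U^*\sigma U$.

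The main obstacle is justifying the strict convexity of $f$ needed for the Jensen step. The integral representation \eqref{low} gives $f''(x) = 2\int_0^\infty (t+x)^{-3}\,\rd\mu_f(t)$, which is strictly positive as soon as $\mu_f\neq 0$; the only exceptional case $\mu_f=0$ collapses $f$ to the affine function $f(x)=a(x-1)$, for which $S_f^I\equiv 0$ identically and the statement would be vacuous. For any non-affine operator monotone decreasing $f$ with $f(1)=0$ (in particular any regular $f$ as in Definition \ref{def:reg}, and all the examples of interest in the paper such as $-\log x$ and $-x^p$), strict convexity holds and the argument goes through.
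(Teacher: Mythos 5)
Your proof is correct and follows essentially the same route as the paper: reduction to $U=I$ via Proposition~\ref{prop:U}, the trivial direction from $f(1)=0$, and the converse via the pinching map $\Phi$ from the non-negativity proof. The one substantive difference is that you justify the step the paper leaves implicit --- that a vanishing classical $f$-divergence forces the pinched distributions to coincide --- by Jensen's inequality plus strict convexity, and you correctly flag that this requires $f$ to be non-affine (for $f(x)=a(x-1)$ the quasi-relative entropy vanishes identically, so the ``only if'' direction genuinely fails), a hypothesis the paper's one-line conclusion ``therefore $\Phi(\rho)=\Phi(\sigma)$'' silently assumes.
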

\begin{proof}
According to Proposition~\ref{prop:U}, it is enough to consider $U=I$. 

Then, one way: if $\rho=\sigma$, then $S_f(\rho,\sigma)=0$ by definition, since $f(1)=0$.

The other way: from the proof of last proposition, 
$$0=S_f^I(\rho|| \sigma)\geq S_f^I(\Phi(\rho)|| \Phi(\sigma)). $$
Therefore, for any basis $\{\ket{j}\}_j$, $\Phi(\rho)=\Phi(\sigma)$, which implies that $\rho=\sigma$.
\end{proof}

A famous bound relating quantum relative entropy and trace distance between two quantum states, called Pinsker inequality. The similar inequality holds for the quasi-relative entropy as well, as was shown by Hiai and Mosonyi \cite{HM16} for $U=I$.
\begin{proposition} \label{prop:pinsker}
For an operator convex function $f$ on $(0,\infty)$ with $f(1)=0$, a unitary $U$, and states $\rho$ and $\sigma$, the following holds
$$\frac{f''(1)}{2}\|\rho-U^*\sigma U\|_1^2\leq S_f^U(\rho\|\sigma)\ . $$
\end{proposition}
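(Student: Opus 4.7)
The plan is to reduce the statement to the already-known Pinsker inequality of Hiai and Mosonyi \cite{HM16}, which provides the case $U=I$, namely
$$\frac{f''(1)}{2}\|\rho-\sigma\|_1^2\leq S_f^I(\rho\|\sigma).$$
Proposition~\ref{prop:U}, which has just been proved, states $S_f^U(\rho\|\sigma)=S_f^I(\rho\|U^*\sigma U)$, so the unitary $U$ may be absorbed into the second argument of the $f$-divergence. This is exactly the structural feature needed to extend the Hiai--Mosonyi bound to arbitrary unitary $U$.

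Concretely, I would argue as follows. Apply Proposition~\ref{prop:U} to rewrite the right-hand side of the claimed inequality as $S_f^I(\rho\|U^*\sigma U)$. Since $U^*\sigma U$ is again a density operator (the state $\sigma$ conjugated by the unitary $U$ remains positive semidefinite and trace one), the Hiai--Mosonyi Pinsker inequality applies directly with the second argument $\sigma$ replaced by $U^*\sigma U$, yielding
$$\frac{f''(1)}{2}\|\rho-U^*\sigma U\|_1^2\leq S_f^I(\rho\|U^*\sigma U)=S_f^U(\rho\|\sigma).$$
This is precisely the claim.

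There is no real obstacle here: the work is entirely in Proposition~\ref{prop:U}, which converts the problem into an instance already covered by \cite{HM16}. The only point worth noting is that the hypotheses of the Hiai--Mosonyi result (operator convexity of $f$ on $(0,\infty)$ and $f(1)=0$) are the very hypotheses assumed in the proposition, so no additional regularity on $f$ is required; in particular the existence of $f''(1)$ is guaranteed by operator convexity.
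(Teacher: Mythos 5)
Your proposal is correct and is essentially the argument the paper intends: the paper states the proposition without a written proof, citing Hiai--Mosonyi for the $U=I$ case, and the reduction via Proposition~\ref{prop:U} (absorbing $U$ into the second argument as $U^*\sigma U$, which is again a state) is exactly the implicit step. No gaps.
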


It's worthwhile to point out that any inequality that holds in the classical case for probability distributions, also holds in the quantum case between density operators.  Let us show how to obtain quantum inequality from the classical one. This relation can be found in many quantum information books, e.g. \cite{W18}, for relative entropies, i.e. when $f(x)=-\log(x)$. For a general function $f$ the proof is similar.

\begin{lemma}\label{lemma:eq}
If the following inequality holds for all probability distributions $p$ and $q$:
$${S_f^{cl}(p\|q)\geq F(\|p-q\|_1)}, $$
for some {function $F$}, then for all states $\rho$ and $\sigma$ the following holds as well
$${S_f(\rho\|\sigma)\geq F(\|\rho-\sigma\|_1)}\ . $$
\end{lemma}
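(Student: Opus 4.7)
The plan is to reduce the quantum inequality to the classical one by passing to a well-chosen projective measurement, and then to invoke the monotonicity of the quasi-relative entropy under the associated dephasing channel (the same Schwarz map already used in the proof of non-negativity above). The crucial point is the existence of a measurement basis that preserves the trace distance exactly, so that no monotonicity hypothesis on $F$ is required.

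First I would diagonalize the Hermitian operator $\rho-\sigma$ in some orthonormal basis $\{\ket{j}\}_j$ of $\cH$, so that $\rho-\sigma=\sum_j \lambda_j \ket{j}\bra{j}$. Define the pinching $\Phi(\omega)=\sum_j \bra{j}\omega\ket{j}\,\ket{j}\bra{j}$ and set the classical distributions $p_j:=\bra{j}\rho\ket{j}$, $q_j:=\bra{j}\sigma\ket{j}$. In this basis $p_j-q_j=\lambda_j$, so
\begin{equation*}
\|p-q\|_1=\sum_j |\lambda_j|=\|\rho-\sigma\|_1,
\end{equation*}
which is the trace-distance-preserving property of the measurement in the eigenbasis of $\rho-\sigma$.

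Next I would invoke the monotonicity of the quasi-relative entropy under the Schwarz map $\Phi$ (the same ingredient used in the non-negativity proof just above) to obtain
\begin{equation*}
S_f(\rho\|\sigma)\;\geq\;S_f(\Phi(\rho)\|\Phi(\sigma)).
\end{equation*}
Since $\Phi(\rho)$ and $\Phi(\sigma)$ are both diagonal in $\{\ket{j}\}_j$, the spectral formula (\ref{eq:formula}) for $K=I$ collapses (the cross terms $|\bra{\phi_k}\ket{\psi_j}|^2$ become Kronecker deltas) to
\begin{equation*}
S_f(\Phi(\rho)\|\Phi(\sigma))=\sum_j p_j\, f\!\left(\frac{q_j}{p_j}\right)=S_f^{cl}(p\|q).
\end{equation*}

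Finally, applying the classical hypothesis $S_f^{cl}(p\|q)\geq F(\|p-q\|_1)$ and chaining with the previous equality of trace distances yields $S_f(\rho\|\sigma)\geq F(\|\rho-\sigma\|_1)$, as desired. The argument is essentially a one-line reduction once the diagonalizing basis is chosen; the only mildly subtle point, and the part I would be most careful about, is the equality $\|p-q\|_1=\|\rho-\sigma\|_1$, which depends on picking the eigenbasis of the \emph{difference} $\rho-\sigma$ rather than, say, the eigenbasis of either state separately. Had we required $F$ to be monotone, any basis attaining $\|p-q\|_1\geq$ some useful lower bound on $\|\rho-\sigma\|_1$ would have sufficed, but the clean formulation of the lemma is precisely that no such hypothesis on $F$ is needed.
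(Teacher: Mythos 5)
Your proof is correct and follows essentially the same strategy as the paper: reduce to the classical inequality via a measurement that preserves the trace distance exactly, then invoke monotonicity of $S_f$ under the corresponding quantum-to-classical (Schwarz) map. The only difference is the choice of measurement — you use the full pinching in the eigenbasis of $\rho-\sigma$, whereas the paper uses the coarser two-outcome measurement $\{\Pi, I-\Pi\}$ with $\Pi$ the projector onto the positive part of the Jordan--Hahn decomposition of $\rho-\sigma$; both yield $\|p-q\|_1=\|\rho-\sigma\|_1$, so neither requires any monotonicity assumption on $F$.
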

\begin{proof}
For an operator $\rho-\sigma$ consider its Jordan-Hahn decomposition
$$\rho-\sigma=P-Q,$$
where $P,Q>0$. Define a projector onto the image of $P$  as
$\Pi=\Pi_{im(P)}.$ 
Then, a measurement with operators $\{\Pi, I-\Pi\}$ is a projective measurement. It holds that the trace distance between states $\rho$ and $\sigma$ equals to the $L^1$ distance between probability distributions
$$\|\rho-\sigma\|_1=\Tr(P)+\Tr(Q)=2\Tr(P)=2|\Tr(\Pi(\rho-\sigma))|=\|p-q\|_1,$$
where $p=\{\Tr(\Pi\rho), 1-\Tr(\Pi\rho)\}$, $q=\{\Tr(\Pi\sigma), 1-\Tr(\Pi\sigma)\}$ are probability ensembles. 

For above $\Pi$, define a quantum-to-classical channel as follows
$$\Phi(\omega)=\Tr(\Pi\omega)\ket{0}\bra{0}+(1-\Tr(\Pi\omega))\ket{1}\bra{1}. $$
Then from the monotonicity inequality for the quasi-relative entropy, and for above $p$ and $q$, the quantum quasi-relative entropy is {lower bounded by}  the classical quasi-relative entropy
$$S_f(\rho\|\sigma){\geq S_f(\Phi(\rho)\|\rho(\sigma))}=\sum_jp_jf(p_j^{-1}q_j)=S_f^{cl}(p\|q).$$
Therefore, any inequality that holds between classical $f$-divergence and the $L^1$ distance, also holds in the quantum case between quantum $f$-divergence and trace distance between two states.
\end{proof}

\section{Monotonicity inequality}\label{sec:Mono}

In this section we assume a bipartite Hilbert space $\cH=\cH_1\otimes\cH_2$. Two states $\rho$ and $\sigma$ act on this Hilbert space. Denote $\rho_1:=\Tr_2\rho$ and similarly, $\sigma_1=\Tr_2\sigma$.
 Here we build on results in \cite{CV17-2} to provide a strengthening of the monotonicity inequality for quasi-relative entropies for a large class of operators $K$.

\subsection{Review}\label{sec:Mono-review}

For a regular relative entropy, i.e. $S(\rho\|\sigma)=\Tr[\rho(\log\rho-\log\sigma)]$, Lindblad proved \cite{L74} the following monotonicity inequality

\begin{theorem}[Monotonicity of relative entropy]
 \begin{equation}\label{eq:original-mono-reg}
 S(\rho||\sigma) -S(\rho_1 || \sigma_1)\geq 0.
 \end{equation}
\end{theorem}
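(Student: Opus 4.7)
The plan is to follow Lindblad's original route: deduce monotonicity under the partial trace from joint convexity of $S(\cdot\|\cdot)$ together with the fact that the partial trace can be realized as a uniform average of unitary conjugations on the traced-out factor. The argument splits into three ingredients, each essentially independent.

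First, I would establish joint convexity of $S$. The cleanest route is Lieb's concavity theorem, which asserts that for each $p\in[0,1]$ and each fixed $K$, the map $(A,B)\mapsto \Tr(K^*A^p K B^{1-p})$ is jointly concave on positive operators. Taking $K=I$ and differentiating at $p=0$ (after subtracting the trivial piece $\Tr\rho$) yields
$$S(\rho\|\sigma) = -\lim_{p\downarrow 0}\frac{1}{p}\bigl(\Tr(\rho^{1-p}\sigma^p)-\Tr\rho\bigr),$$
so concavity of $(\rho,\sigma)\mapsto \Tr(\rho^{1-p}\sigma^p)$ passes to joint convexity of $S$ in the limit. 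Next, the unitary covariance $S(U\rho U^*\|U\sigma U^*)=S(\rho\|\sigma)$ is immediate from $\log(UAU^*)=U(\log A)U^*$ and cyclicity of the trace.

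To convert the partial trace into a mixture of unitaries, fix a unitary operator basis $\{V_k\}_{k=1}^{d_2^2}$ on $\cH_2$ (for instance the generalized Pauli/Weyl basis) normalized so that $d_2^{-2}\sum_k V_k X V_k^* = d_2^{-1}(\Tr X)\,I_2$ for every $X$. Setting $U_k:=I_1\otimes V_k$, one has $d_2^{-2}\sum_k U_k\rho U_k^* = \rho_1\otimes(I_2/d_2)$, and similarly for $\sigma$. Joint convexity combined with unitary covariance then gives
$$S(\rho_1\otimes I_2/d_2 \,\|\, \sigma_1\otimes I_2/d_2) \leq \frac{1}{d_2^2}\sum_k S(U_k\rho U_k^*\|U_k\sigma U_k^*) = S(\rho\|\sigma),$$
and the additivity identity $S(\alpha\otimes\tau\|\beta\otimes\tau)=S(\alpha\|\beta)$ (a direct computation using $\log(A\otimes B)=\log A\otimes I + I\otimes \log B$) removes the trivial factor, delivering the stated monotonicity.

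The main obstacle is unambiguously the first step: Lieb's concavity is the deep analytic input that cannot be dodged, short of swapping it for an equivalent statement such as the operator Jensen inequality applied to the relative modular operator. A cleaner substitute more in line with the present paper's formalism would be to prove monotonicity directly from operator convexity of $f(x)=-\log x$ via the Petz contraction: construct an isometry $V$ on the Hilbert--Schmidt space sending $X\rho_1^{1/2}\mapsto (X\otimes I)\rho^{1/2}$, verify the intertwining $\Delta_{\sigma_1,\rho_1}=V^*\Delta_{\sigma,\rho}V$ on the image of $V$, and apply the operator Jensen inequality $f(V^*AV)\leq V^*f(A)V$ to the operator convex function $f$. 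Either derivation carries the same non-trivial analytic content, and the two are well known to be equivalent.
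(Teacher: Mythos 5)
Your proposal is correct, but it follows a genuinely different route from the one this paper actually develops. The paper never proves the classical statement directly: it cites Lindblad and instead builds all of its machinery around the second argument you mention only in passing at the end, namely the proof of Theorem~\ref{thm:original-mono} in Section~\ref{sec:proof-mono} --- the isometry $U(X)=(X_1\otimes V)\rho_1^{-1/2}\rho^{1/2}$, the intertwining relation $U^*\Delta_{\sigma,\rho}U=\Delta_{\sigma_1,\rho_1}$, and the operator Jensen inequality $f(U^*\Delta_{\sigma,\rho}U)\leq U^*f(\Delta_{\sigma,\rho})U$ for operator convex $f$; taking $f(x)=-\log x$ and $K=I$ there yields the present theorem. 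Your primary route --- joint convexity of $S$ obtained from Lieb's concavity via the limit $S(\rho\|\sigma)=-\lim_{p\downarrow 0}p^{-1}\bigl(\Tr(\rho^{1-p}\sigma^p)-\Tr\rho\bigr)$, the Weyl-basis twirl identity $d_2^{-2}\sum_k (I_1\otimes V_k)\rho(I_1\otimes V_k)^*=\rho_1\otimes I_2/d_2$, and the additivity $S(\alpha\otimes\tau\|\beta\otimes\tau)=S(\alpha\|\beta)$ --- is the historically original Lindblad/Uhlmann derivation, and each step checks out (the derivative computation at $p=0$, the equal-weight convex combination, and the cancellation of the maximally mixed factor are all standard). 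What your route buys is that it isolates the single deep input (Lieb concavity) and requires nothing about modular operators; what the paper's route buys is that it generalizes verbatim to arbitrary operator convex $f$ and operators $K=K_1\otimes V$, and, more importantly for this paper, it produces the error vector $w_t$ of \eqref{resolv} whose norm drives every quantitative remainder term in Sections~\ref{sec:Mono}--\ref{sec:WYD}; the purely convexity-based argument gives the inequality but no such remainder. One caution if you were to splice your argument into this paper: here joint convexity (Theorem~\ref{thm:joint-new}) is \emph{derived from} monotonicity by the block-diagonal construction, so you would need to keep your independent proof of joint convexity (via Lieb, or via Proposition~\ref{prop:convex}'s integral-representation argument) to avoid circularity.
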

Lindblad showed that the monotonicity inequality \eqref{eq:original-mono-reg} is equivalent to the joint convexity of the relative entropy $(\rho,\sigma) \mapsto
S(\rho||\sigma)$, see Proposition~\ref{prop:convex}, and this in turn is an immediate consequence of Lieb's Concavity Theorem \cite{L73}, which was proved by Lieb and Ruskai \cite{LR73}, who showed it to be equivalent to the Strong Subadditivity (SSA) of the von Neumann entropy, see Theorem~\ref{thm:ssa-reg}.

Note that the monotonicity inequality also holds for CPTP (completely-positive trace-preserving) maps, not just a partial trace. This fact was proved by Lindblad \cite{L75} by using Stinesping's Dialtion Theorem \cite{St55} that relates general CPTP maps to a partial trace. Therefore, we will focus only on partial traces, not general CPTP maps.

Petz has proved \cite{P86-3, P88} that for a relative entropy there is an equality in the monotonicity inequality (\ref{eq:original-mono-reg}) if and only if both states $\rho$ and $\sigma$ can be  recovered perfectly. The recovery map $\sR_\rho$ is known as {\it Petz recovery map} and is defined as $\sR_\rho:\cB(\cH_1)\rightarrow \cB(\cH=\cH_1\otimes\cH_2)$ 
\begin{equation}\label{eq:Petz}
\sR_\rho(\gamma) = \rho^{1/2} \rho_1^{-1/2}\gamma\rho_1^{-1/2}\rho^{1/2}\ .
\end{equation}

There has been several results that provide a lower bound in (\ref{eq:original-mono}) other than zero \cite{FR15, JRSWW15, W15, Z16}, but the lower bounds provided there involve quantities that are hard to compute, e.g. rotated and twirled Petz recovery maps. For another fidelity type bound not explicitly involving the recovery map, see \cite[Theorem 2.2]{CL14}. 

In 2017, Carlen and Vershynina \cite{CV17-1} provided the following sharpening of the monotonicity inequality
\begin{equation}\label{eq:CV17-1}
S(\rho||\sigma) - S(\rho_1||\sigma_1)  \geq \left(\frac{\pi}{4}\right)^{4} \|\Delta_{\sigma,\rho}\|^{-2}
\|\sigma_1^{1/2}\rho_1^{-1/2} \rho^{1/2} - \sigma^{1/2}\|_2^4\  ,
\end{equation}
with $\|\cdot\|_2$ denoting the Hilbert-Schmidt norm
\begin{equation}\label{eq:HS-norm}
\|A\|_2:=\Tr(A^*A).
\end{equation}
As a corollary they provided a bound that explicitly involves Petz's recovery map:
\begin{equation}\label{eq:CV17-1-cor}
S(\rho||\sigma) - S(\rho_1||\sigma_1)  \geq \left(\frac{\pi}{8}\right)^{4} \|\Delta_{\sigma,\rho}\|^{-2}
\| \sR_\rho(\sigma_1) -\sigma\|_1^4\  ,
\end{equation}
and 
\begin{equation}
S(\rho||\sigma) - S(\rho_1||\sigma_1)  \geq \left(\frac{\pi}{8}\right)^{4} \|\Delta_{\sigma,\rho}\|^{-2}
\|\rho_{1}\|^{-2}\|\sigma_1^{-1}\|^{-2}
\| \sR_\sigma(\rho_1) -\rho\|_1^4\  ,
\end{equation}
with $\|\cdot\|_1$ denoting the trace norm
\begin{equation}\label{eq:HS-norm}
\|A\|_1:=\Tr |A|.
\end{equation}
This is the first time when the original Petz recovery map $\sR$ appeared in the lower bound of the monotonicity or equivalent inequality.

For quasi-relative entropies $S_f^K$ the monotonicity inequality holds for operators $K$ such that $K=K_1\otimes V_2$ for a unitary $V$. This was explicitly proved by Sharma \cite{S14}, building on the work of Nielsen and Petz \cite{NP05}.
 
 \begin{theorem}[Monotonicity of quasi-relative entropy, Sharma '14]\label{thm:original-mono}  For every operator convex function $f$ and operator $K=K_1\otimes V$, where $V$ is a unitary, we have
 \begin{equation}\label{eq:original-mono}
 S_f^K(\rho||\sigma) -S_f^{K_1}(\rho_1 || \sigma_1)\geq 0.
 \end{equation}
 \end{theorem}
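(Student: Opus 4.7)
I would prove the inequality in two stages: first reduce to the case $V=I$ by absorbing the unitary into $\sigma$, then handle $K=K_1\otimes I$ via the canonical integral representation of $-f$.

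\textbf{Stage 1 (reduction to $V=I$).} Set $U:=I\otimes V^{*}$ and $\tilde\sigma:=U\sigma U^{*}$. Since $V$ is unitary and acts only on $\cH_2$, one has $\tilde\sigma_1=\sigma_1$. Writing $\tilde\sigma$ in the rotated eigenbasis $\{U|\phi_k\rangle\}$ and applying the spectral formula \eqref{eq:formula} absorbs $V$ into the matrix elements of $K_1\otimes I$, yielding
\[
S_f^{K_1\otimes V}(\rho\|\sigma)=S_f^{K_1\otimes I}(\rho\|\tilde\sigma).
\]
Granted the case $V=I$, one then concludes $S_f^{K_1\otimes V}(\rho\|\sigma)\geq S_f^{K_1}(\rho_1\|\tilde\sigma_1)=S_f^{K_1}(\rho_1\|\sigma_1)$.

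\textbf{Stage 2 (case $K=K_1\otimes I$).} Invoke the canonical integral representation \eqref{low} of $-f$. Direct substitution shows that $f(x)=x$ contributes $\Tr(K^{*}\sigma K)$, which for $K=K_1\otimes I$ equals $\Tr(K_1 K_1^{*}\sigma_1)$, and $f(x)=1$ contributes $\Tr(K^{*}K\rho)=\Tr(K_1^{*}K_1\rho_1)$. These expressions exactly match their reduced counterparts, so the affine and constant--in--$x$ parts of $f$ give equality on both sides of the monotonicity inequality. It therefore suffices, for each $t>0$, to prove the one--parameter inequality
\[
\Tr\!\left(\rho^{1/2}(K_1^{*}\otimes I)(t+\Delta_{\sigma,\rho})^{-1}(K_1\otimes I)\rho^{1/2}\right)\leq\Tr\!\left(\rho_1^{1/2}K_1^{*}(t+\Delta_{\sigma_1,\rho_1})^{-1}K_1\rho_1^{1/2}\right),
\]
and then to integrate the resulting bound against $d\mu_f(t)$. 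I would establish the elementary inequality by constructing a linear map $W:\cB(\cH_1)\to\cB(\cH)$ between the Hilbert--Schmidt spaces satisfying (i) the boundary condition $W(K_1\rho_1^{1/2})=(K_1\otimes I)\rho^{1/2}$ and (ii) the operator inequality $W^{*}(t+\Delta_{\sigma,\rho})^{-1}W\leq(t+\Delta_{\sigma_1,\rho_1})^{-1}$ on the relevant subspace, and then pairing (ii) against $K_1\rho_1^{1/2}$.

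\textbf{Main obstacle.} The crux is constructing $W$ with both properties simultaneously. The natural candidate, built from the Petz recovery map $\sR_\rho$, satisfies (i) automatically, but (ii) is a Kadison--Schwarz inequality for the partial trace---equivalently, the $2$--positivity of the adjoint embedding $\Phi^{*}(X)=X\otimes I$---combined with the operator convexity of $x\mapsto 1/(t+x)$. This is the technical heart of the Nielsen--Petz \cite{NP05} and Sharma \cite{S14} arguments; once (i) and (ii) are in hand, integration against $d\mu_f(t)$ reassembles the monotonicity inequality for general operator convex $f$.
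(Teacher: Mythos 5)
Your Stage 1 is correct and matches the paper's Proposition~\ref{prop:U} in spirit: conjugating $\sigma$ by $I\otimes V^{*}$ absorbs the unitary, and $\tilde\sigma_1=\sigma_1$, so the reduction to $V=I$ is sound. The trouble is in Stage 2, where your key inequality points the wrong way. The resolvent $g_t(x)=(t+x)^{-1}$ is itself operator convex, so the one-parameter quasi-entropies $S_{(t)}^K(\rho\|\sigma)=\Tr(\rho^{1/2}K^{*}(t+\Delta_{\sigma,\rho})^{-1}K\rho^{1/2})$ obey the \emph{same} monotonicity as the theorem you are proving, namely $S_{(t)}^{K}(\rho\|\sigma)\geq S_{(t)}^{K_1}(\rho_1\|\sigma_1)$ --- not the ``$\leq$'' you wrote. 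Correspondingly, for the isometry $W$ with $W^{*}\Delta_{\sigma,\rho}W=\Delta_{\sigma_1,\rho_1}$ and $W(K_1\rho_1^{1/2})=K\rho^{1/2}$ (the paper's $U(X)=(X\otimes V)\rho_1^{-1/2}\rho^{1/2}$), the operator Jensen inequality gives $(t+\Delta_{\sigma_1,\rho_1})^{-1}\leq W^{*}(t+\Delta_{\sigma,\rho})^{-1}W$, which is the reverse of your condition (ii); the inequality you ask for in (ii) is false in general. In the correct Löwner representation of an operator monotone decreasing $f$ the resolvent terms enter with a \emph{positive} measure, $f(x)=ax+b+\int_0^\infty\bigl(\frac{1}{t+x}-\frac{t}{t^2+1}\bigr)\,{\rm d}\mu_f(t)$ with $a\leq0$ (this is what Examples~\ref{ex-log} and \ref{ex-power} actually use, despite the sign as printed in \eqref{low}), so the difference of quasi-entropies equals $\int_0^\infty\bigl(S_{(t)}^{K}(\rho\|\sigma)-S_{(t)}^{K_1}(\rho_1\|\sigma_1)\bigr)\,{\rm d}\mu_f(t)$ exactly as in \eqref{eq:int_rep}; integrating your reversed pointwise inequality against this positive measure would produce $S_f^{K}(\rho\|\sigma)\leq S_f^{K_1}(\rho_1\|\sigma_1)$, the opposite of the claim. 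So the direction error is not cosmetic --- it inverts the conclusion.

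Two further remarks. First, once the directions are fixed, your architecture is essentially the paper's argument specialized to resolvents: the paper constructs the single isometry $U$, verifies $U^{*}\Delta_{\sigma,\rho}U=\Delta_{\sigma_1,\rho_1}$ and $U(K_1\rho_1^{1/2})=K\rho^{1/2}$, and applies operator Jensen directly to $f(\Delta_{\sigma,\rho})$, with no need for the integral representation at all; the representation is only invoked later, for the quantitative remainder in Theorem~\ref{thm:mono}. Second, your route through \eqref{low} only covers operator monotone decreasing $f$, whereas the theorem is stated for every operator convex $f$ (a strictly larger class, e.g.\ $x^2$ or $x\log x$); the operator Jensen argument handles the general case in one stroke, so you lose generality by decomposing into resolvents.
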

In Section \ref{sec:proof-mono} we will recall the proof of this theorem, as we rely on it later.

In \cite{CV17-2} Carlen and Vershynina, generalized their previous result (\ref{eq:CV17-1}) for $f$-divergences for regular functions $f$, see Definition~\ref{def:reg}. 
In results below, assume that $T>0$, $\beta\in(0,1)$, and \\
\smallskip
\noindent{\it (1)} for $\beta\leq1/2$, define $T_L(\beta):=T$ and $T_R(\beta):=T^{\beta/(1-\beta)}$;

\smallskip
\noindent{\it (2)} for $\beta\geq1/2$,  define $T_L(\beta):=T^{{(1-\beta)}/{\beta}}$ and $T_R(\beta):=T$.\\
Moreover, define $C_{T, \beta}^f$ to be the least positive constant such that  ${\rm d}t\leq C^f_{T, \beta}\,{\rm d}\mu_f(t)$ for 
$t\in[T_L(\beta)^{-1}, T_R(\beta)]$, noting that $C_{T, \beta}^f>0$ since $f$ is regular.  

In \cite{CV17-2} it is proved that under the above conditions for a regular operator monotone decreasing function $f$ the following holds \\
\begin{eqnarray}\label{eq:CV-mono}
&&\frac{\pi}{\sin{\beta\pi}}\|\sigma_1^{\beta}\rho_1^{-\beta} \rho^{1/2} -\sigma^{\beta}\rho^{1/2-\beta}\|_2\\
&\leq& 2\left(\frac{1}{\beta}+\frac{\|\Delta_{\sigma,\rho}\|}{1-\beta}\right)\frac{1}{T^{\alpha_1(\beta)}}+T^{\alpha_2(\beta)}\,\left(C_{T, \beta}^f\right)^{1/2}\, (S_f(\rho||\sigma)-S_f(\rho_1\|\sigma_1))^{1/2}\ .
\end{eqnarray}
where 
$$\alpha_1(\beta)=\left\{\begin{matrix}
\beta &\text{when }\beta\leq1/2\\
1-\beta &\text{when }\beta\geq1/2.
\end{matrix}\right. \ \ \text{and} \ \ 
\alpha_2(\beta)=\left\{\begin{matrix}
\frac{1-\beta}{2}+\frac{\beta^2}{2(1-\beta)} &\text{when }\beta\leq1/2\\
\beta&\text{when }\beta\geq1/2.
\end{matrix}\right. $$

Optimizing in $T$ for functions, for which $C_{T,\beta}^f$ scales as a power of $T^c$, it was straightforward to prove \cite[Corollary 4.4]{CV17-2} that there is a constant  $M$, depending only on  the smallest non-zero eigenvalue of $\rho$, $\beta$, $C$ and $c$, such that\\
\begin{eqnarray}\label{eq:cor}
\|\sigma_1^{\beta}\rho_1^{-\beta} \rho^{1/2} -\sigma^{\beta}\rho^{1/2-\beta}\|_2\leq M 
(S_f(\rho||\sigma)-S_f(\rho_1\|\sigma_1))^{\alpha(\beta) }\ ,
\end{eqnarray}
where 
$$\alpha(\beta)=\left\{\begin{matrix}
\frac{\beta(1-\beta)}{1+2c(1-\beta)} &\text{when }\beta\leq1/2\\
\frac12  \frac{1-\beta}{ 1+c}&\text{when }\beta\geq1/2.
\end{matrix}\right. $$
Moreover, taking $\beta=1/2$, for a constant $M$ depending only on  the smallest non-zero eigenvalue of $\rho$, $\|\sigma_1^{-1}\|$, $\beta$, $C$ and $c$, we have
\begin{equation}
\max\{ \|\sR_\rho(\sigma_1)-\sigma\|_1 \ ,\ \|\sR_\sigma(\rho_1)-\rho\|_1\}   \leq M 
(S_f(\rho||\sigma)-S_f(\rho_1\|\sigma_1))^{\frac14  \frac{1}{ 1+c} }\ .
\end{equation}
From this expression it is evident that the monotonicity inequality is saturated for a broad class of operator monotone decreasing functions $f$, if and only if the Petz recovery map recovers both states $\rho$ and $\sigma$ perfectly well.

\subsection{Monotonicity inequality}\label{sec:mono-new}

We generalize the  result in \cite{CV17-2} to include quasi-relative entropies with a large class of operators $K$. 

\begin{definition}
For an operator monotone decreasing function $f$ assume that $T>0$, $\beta\in(0,1)$, and \\
\smallskip
\noindent{\it (1)} for $\beta\leq1/2$, define $T_L(\beta):=T$ and $T_R(\beta):=T^{\beta/(1-\beta)}$;

\smallskip
\noindent{\it (2)} for $\beta\geq1/2$,  define $T_L(\beta):=T^{{(1-\beta)}/{\beta}}$ and $T_R(\beta):=T$.\\
Moreover, define $C_{T, \beta}^f$ to be the least positive constant such that  ${\rm d}t\leq C^f_{T, \beta}\,{\rm d}\mu_f(t)$ for  $t\in[T_L(\beta)^{-1}, T_R(\beta)]$, noting that $C_{T, \beta}^f>0$ since $f$ is regular.  We will call such functions {\it $C_{T, \beta}^f$-regular.}
\end{definition}

\begin{theorem}\label{thm:mono}
Let $\cH=\cH_1\otimes\cH_2$, and let an operator $K$ be such that $K=K_1\otimes V_2$, where $V$ is a unitary operator. Let $f$ be a $C_{T, \beta}^f$-regular function. Then for any states $\rho, \sigma$ on $\cH$  and any $\beta\in(0,1)$
\begin{eqnarray}\label{eq:thm-mono}
&&\frac{\pi}{\sin{\beta\pi}}\|\sigma_1^{\beta}K\rho_1^{-\beta} \rho^{1/2} -\sigma^{\beta}K\rho^{1/2-\beta}\|_2\\
&\leq& 2\left(\frac{ {\|K\|}}{\beta}+\frac{\|\Delta_{\sigma,\rho}\|}{1-\beta}\right)\frac{1}{T^{\alpha_1(\beta)}}+T^{\alpha_2(\beta)}\,\left(C_{T, \beta}^f\right)^{1/2}\, (S_f^K(\rho||\sigma)-S_f^{K_1}(\rho_1\|\sigma_1))^{1/2}\ .
\end{eqnarray}
where 
$$\alpha_1(\beta)=\left\{\begin{matrix}
\beta &\text{when }\beta\leq1/2\\
1-\beta &\text{when }\beta\geq1/2.
\end{matrix}\right. \ \ \text{and} \ \ 
\alpha_2(\beta)=\left\{\begin{matrix}
\frac{1-\beta}{2}+\frac{\beta^2}{2(1-\beta)} &\text{when }\beta\leq1/2\\
\beta&\text{when }\beta\geq1/2.
\end{matrix}\right. $$
\end{theorem}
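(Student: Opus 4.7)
My plan is to generalize the proof of the corresponding $K=I$ estimate in \cite{CV17-2} by carrying the operator $K=K_1\otimes V$ through each step; the unitary $V$ plays no essential role since it disappears from all norm estimates via $V^*V=I$ together with the tensor structure.

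First, I would apply the integral representation \eqref{low} of the operator monotone decreasing function $f$ to write the difference of quasi-relative entropies as
\begin{equation*}
S_f^K(\rho\|\sigma)-S_f^{K_1}(\rho_1\|\sigma_1)=\int_0^\infty D_t\,d\mu_f(t),
\end{equation*}
where
\begin{equation*}
D_t:=\bigl\langle K\sqrt\rho,(t+\Delta_{\sigma,\rho})^{-1}K\sqrt\rho\bigr\rangle_{\mathrm{HS}}-\bigl\langle K_1\sqrt{\rho_1},(t+\Delta_{\sigma_1,\rho_1})^{-1}K_1\sqrt{\rho_1}\bigr\rangle_{\mathrm{HS}}.
\end{equation*}
The affine $ax+b$ contributions match between the two sides: the constant part because $\Tr(K^*K\rho)=\Tr(K_1^*K_1\rho_1)$, and the linear part by the analogous identity that follows from $K=K_1\otimes V$ and $V^*V=I$. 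Non-negativity $D_t\geq 0$ for every $t>0$ is precisely Theorem~\ref{thm:original-mono} applied to the operator monotone decreasing function $x\mapsto -1/(t+x)$.

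Next, to convert the weighted integral of $D_t$ into a bound on the Hilbert--Schmidt difference $E:=\sigma_1^{\beta}K\rho_1^{-\beta}\rho^{1/2}-\sigma^{\beta}K\rho^{1/2-\beta}$, I would use the Balakrishnan-type representation $x^{\beta}=\frac{\sin\beta\pi}{\pi}\int_0^\infty s^{\beta-1}\frac{x}{s+x}\,ds$ applied to $\Delta_{\sigma,\rho}$ and $\Delta_{\sigma_1,\rho_1}$, exactly as in \cite{CV17-2}. This writes
\begin{equation*}
E=\frac{\sin\beta\pi}{\pi}\int_0^\infty s^{-\beta}\,A_s\,ds,
\end{equation*}
where $A_s$ is a Hilbert--Schmidt operator satisfying $\|A_s\|_{\mathrm{HS}}^2\leq D_s$. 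I would then split the $s$-integral at $s=T_L(\beta)^{-1}$ and $s=T_R(\beta)$. The two tails are estimated by the triangle inequality together with the crude bounds on $\|\sigma^{\beta}K\rho^{1/2-\beta}\|_2$ and $\|\sigma_1^{\beta}K\rho_1^{-\beta}\rho^{1/2}\|_2$ in terms of $\|K\|$ and $\|\Delta_{\sigma,\rho}\|$; these yield the first summand on the right of \eqref{eq:thm-mono}, with $\|K\|$ entering precisely through those tail estimates. The middle piece is handled by Cauchy--Schwarz followed by the regularity condition $ds\leq C^f_{T,\beta}\,d\mu_f(s)$ on $[T_L(\beta)^{-1},T_R(\beta)]$, producing the second summand with the square root of $S_f^K(\rho\|\sigma)-S_f^{K_1}(\rho_1\|\sigma_1)$; the exponents $\alpha_1(\beta),\alpha_2(\beta)$ emerge from the elementary power-of-$T$ integrals.

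The main new obstacle relative to \cite{CV17-2} is securing $D_t\geq 0$ in the presence of $K$. This is exactly where the tensor-product structure $K=K_1\otimes V$ is essential: the monotonicity inequality of the quasi-relative entropy under partial trace is guaranteed by Theorem~\ref{thm:original-mono} only for operators of this special form, and without it the pointwise non-negativity of the integrand collapses. Once $D_t\geq 0$ is in hand, the remaining steps---Balakrishnan representation, the identification $\|A_s\|_{\mathrm{HS}}^2\leq D_s$, the tail/middle split, and the final Cauchy--Schwarz bound---go through essentially verbatim from the $K=I$ argument, with $\|K\|$ propagating only into the boundary terms as displayed in the statement.
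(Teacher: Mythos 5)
Your overall architecture coincides with the paper's: the integral representation \eqref{low} reducing the entropy difference to the resolvent family $D_t$, the Balakrishnan representation of $x^{\beta}$, a three-way split of the $t$-integral, and Cauchy--Schwarz plus the regularity hypothesis on the middle range; your description of how the tails produce the first summand of \eqref{eq:thm-mono} is also correct in substance. However, there is a genuine gap at the central step, and you have mis-located where the tensor structure of $K$ does its work. From Theorem~\ref{thm:original-mono} you extract only the non-negativity $D_t\geq 0$, and then assert the existence of an integrand $A_s$ with $\|A_s\|_{\mathrm{HS}}^2\leq D_s$ whose weighted integral reconstructs $E$. Non-negativity of $D_t$ alone cannot yield any such pointwise bound: one needs an exact identity expressing $D_t$ as a quadratic form in the very operator being integrated. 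In the paper (following \cite{CV17-1, CV17-2}) this is Lemma~\ref{conlemA}: one constructs the map $U(X)=(X\otimes V)\rho_1^{-1/2}\rho^{1/2}$, checks that it is an isometry, that $U^*\Delta_{\sigma,\rho}U=\Delta_{\sigma_1,\rho_1}$, and that $U(K_1\rho_1^{1/2})=K\rho^{1/2}$, and obtains
$$D_t=\langle w_t,(t\one+\Delta_{\sigma,\rho})w_t\rangle\geq t\|w_t\|_2^2\ ,\qquad w_t:=U(t\one+\Delta_{\sigma_1,\rho_1})^{-1}(K_1\rho_1^{1/2})-(t\one+\Delta_{\sigma,\rho})^{-1}K\rho^{1/2}\ ,$$
and it is precisely this $w_t$ whose integral against $t^{\beta}$ equals $-\frac{\pi}{\sin\beta\pi}E$. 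The hypothesis $K=K_1\otimes V$ with $V$ unitary is what makes $U$ an isometry satisfying both compatibility conditions simultaneously --- it is not merely an input to $D_t\geq 0$, which is a by-product of the identity rather than a substitute for it. Since this is the one step that genuinely requires adaptation from the $K=I$ case (the vector $v$ in Lemma~\ref{conlemA} becomes $K_1\rho_1^{1/2}$ and $U$ acquires the factor $V$), declaring it ``verbatim'' leaves the proof incomplete.

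Two smaller points. The relevant resolvent function is $x\mapsto 1/(t+x)$, which is the operator convex (and operator monotone decreasing) one; $x\mapsto -1/(t+x)$ is operator monotone \emph{increasing}, so Theorem~\ref{thm:original-mono} applied as you state it would give the wrong sign. Also, your normalization $E=\frac{\sin\beta\pi}{\pi}\int_0^\infty s^{-\beta}A_s\,{\rm d}s$ with $\|A_s\|_{\mathrm{HS}}^2\leq D_s$ does not reproduce the exponent bookkeeping that yields $\alpha_1(\beta)$ and $\alpha_2(\beta)$; the correct weights are $t^{\beta}w_t$ with $t\|w_t\|_2^2\leq D_t$, and the factors $T_R$, $T_L^{1-2\beta}$ (resp.\ $T_R^{2\beta}$) in the Cauchy--Schwarz step come from converting $t^{2\beta}$ into the weight $t$ on $[T_L^{-1},T_R]$. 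Once Lemma~\ref{conlemA} with the modified isometry is put in place, the rest of your outline does go through as in \cite{CV17-2}.
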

The proof of this theorem is given in Section \ref{sec:proof-mono}. Note that if $K=I$ we arrive precisely at the statement in \cite{CV17-2}, i.e. (\ref{eq:CV-mono}).
Similarly, in order to optimize in $T$ one would need more information about the function $f$. For instance, when $C_{T}^f$ grows like a power of $T$, 
the optimization is very straightforward. Using \cite[Lemma 4.3]{CV17-2}, we obtain the following corollary:

\begin{corollary}\label{optimcl} Let $\beta\in(0,1)$ and $f$ be a $C_{T, \beta}^f$-regular function. Let $K=K_1\otimes V_2$ with a unitary $V$. Suppose that $C^f_{T, \beta} \leq C\, T^{2c}$ for some $c, C>0$. Then there is an explicitly computable  constant 
$M$ depending only on the smallest non-zero eigenvalue of $\rho$, $\beta$,  {$\|K\|$}, $C$ and $c$, such that, 
\begin{eqnarray}\label{eq:cor1}
\|\sigma_1^{\beta}K\rho_1^{-\beta} \rho^{1/2} -\sigma^{\beta}K\rho^{1/2-\beta}\|_2\leq M 
(S_f^K(\rho||\sigma)-S_f^{ {K_1}}(\rho_1\|\sigma_1))^{\alpha(\beta) }\ ,
\end{eqnarray}
where 
$$\alpha(\beta)=\left\{\begin{matrix}
\frac{\beta(1-\beta)}{1+2c(1-\beta)} &\text{when }\beta\leq1/2\\
\frac12  \frac{1-\beta}{ 1+c}&\text{when }\beta\geq1/2.
\end{matrix}\right. $$

In particular, for $\beta=1/2,$
\begin{equation}
\|\sigma_1^{1/2}K\rho_1^{-1/2} \rho^{1/2} - \sigma^{1/2}K\|_2 \leq M 
(S_f^K(\rho||\sigma)-S_f^{K_1}(\rho_1\|\sigma_1))^{\frac{1}{4(1+c)} }\ .
\end{equation}
\end{corollary}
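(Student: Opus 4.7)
The plan is to start from Theorem~\ref{thm:mono} and, under the additional hypothesis $C^f_{T,\beta}\leq CT^{2c}$, optimize the resulting upper bound over the free parameter $T>0$. First I would substitute the scaling bound into \eqref{eq:thm-mono}, so that the right-hand side takes the form
\begin{equation*}
\Phi(T)\;:=\;\frac{A}{T^{\alpha_1(\beta)}}\,+\,B\,T^{\alpha_2(\beta)+c}\,\mathcal{D}^{1/2},
\end{equation*}
with $\mathcal{D}:=S_f^K(\rho\|\sigma)-S_f^{K_1}(\rho_1\|\sigma_1)$, $A:=2\bigl(\|K\|/\beta+\|\Delta_{\sigma,\rho}\|/(1-\beta)\bigr)$, and $B:=C^{1/2}$. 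Since $\sigma$ is a state, the bound $\|\Delta_{\sigma,\rho}\|\leq\|\sigma\|\,\|\rho^{-1}\|\leq 1/\lambda_{\min}(\rho)$ shows that both $A$ and $B$ depend only on $\|K\|$, $\beta$, $C$, and the smallest nonzero eigenvalue of $\rho$.

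Next I would invoke the elementary two-term minimization \cite[Lemma 4.3]{CV17-2}: for positive $a,b$ and exponents $p,q>0$, $\min_{T>0}\bigl(aT^{-p}+bT^{q}\bigr)$ equals a universal constant times $a^{q/(p+q)}b^{p/(p+q)}$, attained at $T_\star=(pa/(qb))^{1/(p+q)}$. Applying this with $p=\alpha_1(\beta)$, $q=\alpha_2(\beta)+c$, $a=A$, and $b=B\mathcal{D}^{1/2}$ yields
\begin{equation*}
\min_{T>0}\Phi(T)\;\leq\;M_0\,A^{q/(p+q)}\,B^{p/(p+q)}\,\mathcal{D}^{p/(2(p+q))},
\end{equation*}
so the exponent of $\mathcal{D}$ is $\alpha_1(\beta)/\bigl(2(\alpha_1(\beta)+\alpha_2(\beta)+c)\bigr)$. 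The entire prefactor, including the $\pi/\sin\beta\pi$ transferred from the left-hand side of \eqref{eq:thm-mono}, is then absorbed into a single constant $M$ depending on the advertised data.

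It remains only to check that this exponent equals the stated $\alpha(\beta)$. For $\beta\geq 1/2$ one has $\alpha_1+\alpha_2=(1-\beta)+\beta=1$, whence $\alpha(\beta)=(1-\beta)/(2(1+c))$. For $\beta\leq 1/2$, a short computation gives
\begin{equation*}
\alpha_1+\alpha_2 \;=\; \beta+\frac{(1-\beta)^2+\beta^2}{2(1-\beta)} \;=\; \frac{1}{2(1-\beta)},
\end{equation*}
so $\alpha(\beta)=\beta(1-\beta)/(1+2c(1-\beta))$, as claimed. The case $\beta=1/2$ is immediate, since both branches agree and $\rho^{1/2-\beta}=I$, leaving $\|\sigma_1^{1/2}K\rho_1^{-1/2}\rho^{1/2}-\sigma^{1/2}K\|_2$ on the left. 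The main obstacle is purely bookkeeping: one must track the constants carefully and verify the algebraic identity for $\alpha_1+\alpha_2$ in the low-$\beta$ regime. Beyond this, the corollary is essentially a mechanical application of the two-term optimization to the bound already established in Theorem~\ref{thm:mono}.
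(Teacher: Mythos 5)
Your proposal is correct and follows exactly the route the paper intends: substitute the scaling hypothesis $C^f_{T,\beta}\leq CT^{2c}$ into Theorem~\ref{thm:mono} and apply the two-term optimization of \cite[Lemma 4.3]{CV17-2}, with the exponent arithmetic (including the identity $\alpha_1+\alpha_2=\tfrac{1}{2(1-\beta)}$ for $\beta\leq 1/2$) checking out. The paper itself gives no further detail beyond citing that lemma, so your write-up is, if anything, more explicit than the original.
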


Moreover, for $\beta=1/2$ we can relate the expression $\|\sigma_1^{1/2}K\rho_1^{-1/2} \rho^{1/2} - \sigma^{1/2}K\|_2 $ with the one involving Petz recovery map:

\begin{corollary}\label{cor:mono1}
Let $f$ be a $C_{T, 1/2}^f$-regular function. Suppose that $C^f_{T, 1/2} \leq C\, T^{2c}$ for some $c, C>0$.  Then there is an explicitly computable  constant 
$M$ depending only on the smallest non-zero eigenvalue of $\rho$,  {$\|K\|$}, $C$ and $c$, such that, \begin{equation}\label{eq:mono}
\|\sR_\rho(K_1^*\sigma_1 K_1)-K^*\sigma K\|_1\leq M 
(S_f^K(\rho||\sigma)-S_f^{K_1}(\rho_1\|\sigma_1))^{\frac{1}{4(1+c)} }\ .
\end{equation}
\end{corollary}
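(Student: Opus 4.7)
The plan is to deduce Corollary \ref{cor:mono1} directly from the $\beta=1/2$ case of Corollary \ref{optimcl}, reducing the trace-norm bound on the left-hand side to a Hilbert–Schmidt bound on a carefully chosen pair of operators.

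\textbf{Step 1: Factor the two objects as $A^*A$ and $B^*B$.} Set
\[
A := \sigma_1^{1/2} K \rho_1^{-1/2} \rho^{1/2}, \qquad B := \sigma^{1/2} K.
\]
Since $K = K_1 \otimes V$ with $V$ unitary, we have $KK^* = K_1 K_1^* \otimes I$, and one computes
\[
B^*B = K^* \sigma K, \qquad A^*A = \rho^{1/2} \rho_1^{-1/2} \bigl(K_1^*\sigma_1 K_1 \otimes I\bigr)\rho_1^{-1/2}\rho^{1/2} = \sR_\rho(K_1^*\sigma_1 K_1),
\]
where the last equality uses the definition \eqref{eq:Petz} of the Petz map (with the convention that operators on $\cH_1$ are embedded via tensoring with $I$ on $\cH_2$, and using that $\rho_1^{-1/2}$ commutes with $K_1^*\sigma_1 K_1$ on $\cH_2$ trivially). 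Thus the left-hand side of \eqref{eq:mono} equals $\|A^*A - B^*B\|_1$.

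\textbf{Step 2: Reduce to a Hilbert–Schmidt estimate.} Using the identity $A^*A - B^*B = A^*(A-B) + (A^*-B^*)B$ together with the Hölder-type inequality $\|XY\|_1 \le \|X\|_2 \|Y\|_2$, one obtains
\[
\|A^*A - B^*B\|_1 \le \bigl(\|A\|_2 + \|B\|_2\bigr)\, \|A - B\|_2.
\]

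\textbf{Step 3: Bound $\|A\|_2$ and $\|B\|_2$ in terms of $\|K\|$.} For $B$, using cyclicity and $KK^* = K_1 K_1^* \otimes I$,
\[
\|B\|_2^2 = \Tr(\sigma\, KK^*) = \Tr_1(\sigma_1\, K_1 K_1^*) \le \|K_1\|^2 \le \|K\|^2.
\]
For $A$, cyclicity of the trace and the partial-trace identity give
\[
\|A\|_2^2 = \Tr\bigl(\sR_\rho(K_1^*\sigma_1 K_1)\bigr) = \Tr_1\bigl(\rho_1 \cdot \rho_1^{-1/2} K_1^*\sigma_1 K_1 \rho_1^{-1/2}\bigr) = \Tr_1(\sigma_1 K_1 K_1^*) \le \|K\|^2,
\]
so that $\|A\|_2 + \|B\|_2 \le 2\|K\|$.

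\textbf{Step 4: Apply Corollary \ref{optimcl} at $\beta=1/2$.} By Corollary \ref{optimcl},
\[
\|A - B\|_2 = \|\sigma_1^{1/2} K \rho_1^{-1/2} \rho^{1/2} - \sigma^{1/2} K\|_2 \le M' \bigl(S_f^K(\rho\|\sigma) - S_f^{K_1}(\rho_1\|\sigma_1)\bigr)^{\frac{1}{4(1+c)}},
\]
with $M'$ depending only on the stated quantities. Combining with Steps 2–3 and absorbing $2\|K\|$ into the constant yields \eqref{eq:mono}.

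The argument is entirely routine once Step 1 is set up correctly; the only point requiring care is verifying the tensor-product computation $K^*(\sigma_1\otimes I)K = K_1^*\sigma_1 K_1 \otimes I$ (which uses $V^*V = I$), so that $A^*A$ really matches $\sR_\rho(K_1^*\sigma_1 K_1)$ as given by \eqref{eq:Petz}. Everything else follows from standard Schatten norm inequalities and the previously established Corollary \ref{optimcl}.
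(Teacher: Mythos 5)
Your proof is correct and follows essentially the same route as the paper: the paper simply cites Lemma 2.2 of \cite{CV17-1} to pass from $\|A-B\|_2$ to $\|A^*A-B^*B\|_1$ with $A=\sigma_1^{1/2}K\rho_1^{-1/2}\rho^{1/2}$ and $B=\sigma^{1/2}K$, which is exactly the decomposition you carry out in Steps 1--3 before invoking Corollary~\ref{optimcl} at $\beta=1/2$. If anything your version is slightly more careful, since you track the prefactor $\|A\|_2+\|B\|_2\le 2\|K\|$ explicitly rather than the factor $\tfrac12$ quoted in the paper (which implicitly assumes $\|K\|\le 1$); either way it is absorbed into the constant $M$, which is allowed to depend on $\|K\|$.
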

\begin{proof}
In Lemma 2.2 in \cite{CV17-1} take $$X=\sigma_1^{1/2}K\rho_1^{-1/2}\rho^{1/2}$$ and $$Y=\sigma^{1/2}K.$$ 
Since $V$ is unitary, $K^*\sigma_1 K=K_1\sigma_1 K_1\otimes I$. Therefore, we have that
$$\|\sigma_1^{1/2}K\rho_1^{-1/2} \rho^{1/2} - \sigma^{1/2}K\|_2\geq \frac{1}{2}\|\sR_\rho(K_1^*\sigma_1 K_1)-K^*\sigma K\|_1,$$
where $\sR_\rho$ is a Petz recovery map.
\end{proof}

Note that if $K$ is invertible, we may interchange the roles of $\rho$ and $\sigma$:

\begin{corollary} Let an invertible operator $K$ be such that $K=K_1\otimes V$, where $V$ is a unitary operator. Let $f$ be a $C_{T, 1/2}^f$-regular function. Suppose that $C^f_{T, 1/2} \leq C\, T^{2c}$ for some $c, C>0$.   Then there is an explicitly computable  constant 
$M$ depending only on the smallest non-zero eigenvalue of $\rho$,  {$\|K\|$}, $C$ and $c$, such that, \begin{align}
&\frac{1}{2}\|\rho_1\|^{-1/2}\|K^{-1}\|^{-1} \|K^*\sR_\sigma\left(\left(K_1^{-1}\right)^{-1}\rho_1 K_1^{-1}\right)K-\rho \|_1\leq M 
(S_f^K(\rho||\sigma)-S_f^{K_1}(\rho_1\|\sigma_1))^{\frac{1}{4(1+c)} }\ .
\end{align}
\end{corollary}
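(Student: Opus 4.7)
The plan is to parallel the proof of Corollary~\ref{cor:mono1} and then invoke the invertibility of $K$ to swap the roles of $\rho$ and $\sigma$ inside the Petz map, much as \cite{CV17-1} passes from the $\sR_\rho(\sigma_1)\to\sigma$ estimate to the $\sR_\sigma(\rho_1)\to\rho$ estimate in the $K = I$ case. Concretely, I would start from the $\beta = 1/2$ version of Corollary~\ref{optimcl},
\[
\|\sigma_1^{1/2}K\rho_1^{-1/2}\rho^{1/2} - \sigma^{1/2}K\|_2 \,\le\, M\bigl(S_f^K(\rho\|\sigma) - S_f^{K_1}(\rho_1\|\sigma_1)\bigr)^{1/(4(1+c))},
\]
and write $X_0 := \sigma_1^{1/2}K\rho_1^{-1/2}\rho^{1/2}$, $Y_0 := \sigma^{1/2}K$.

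The algebraic heart of the argument is to choose the auxiliary pair
\[
\tilde X := \bigl(\rho_1^{1/2}K_1^{-1}\sigma_1^{-1/2}\otimes V^*\bigr)\sigma^{1/2}K, \qquad \tilde Y := \rho^{1/2},
\]
with the $V^*$ factor engineered to cancel against an $I\otimes V^*$ that appears through $K^*K_1^{-*}$. A short computation using $V^*V = I$ and the definition of $\sR_\sigma$ will give $\tilde X^*\tilde X = K^*\sR_\sigma\!\bigl((K_1^{-1})^*\rho_1 K_1^{-1}\bigr)K$ and $\tilde Y^*\tilde Y = \rho$, so that $\tilde X^*\tilde X - \tilde Y^*\tilde Y$ is exactly the operator whose trace norm appears on the left-hand side of the claim. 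I would then right-multiply $(X_0-Y_0)^*$ by the bounded operator $S := \sigma_1^{-1/2}K_1^{-*}\rho_1^{1/2}\otimes V$. On the $\rho^{1/2}\rho_1^{-1/2}K^*\sigma_1^{1/2}$ summand the $\sigma_1^{\pm 1/2}$ collapse to $I$, the identity $K^*K_1^{-*} = I\otimes V^*$ followed by $(I\otimes V^*)(I\otimes V) = I$ collapses the rest, and $\rho_1^{-1/2}\rho_1^{1/2} = I$ on the support leaves just $\rho^{1/2}$; on the $K^*\sigma^{1/2}$ summand the product reproduces $\tilde X^*$. Thus $(X_0 - Y_0)^*S = -(\tilde X - \tilde Y)^*$, and taking Hilbert--Schmidt norms with $\|V\| = 1$ yields
\[
\|\tilde X - \tilde Y\|_2 \,\le\, \|\sigma_1^{-1}\|^{1/2}\|K^{-1}\|\|\rho_1\|^{1/2}\,\|X_0 - Y_0\|_2.
\]

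To finish, I would apply Lemma~2.2 of~\cite{CV17-1} to the pair $(\tilde X, \tilde Y)$, exactly as in the proof of Corollary~\ref{cor:mono1}, to pass from Hilbert--Schmidt to trace norm, $\tfrac{1}{2}\|\tilde X^*\tilde X - \tilde Y^*\tilde Y\|_1 \le \|\tilde X - \tilde Y\|_2$; chaining the three estimates and absorbing the scalar $\|\sigma_1^{-1}\|^{1/2}$ together with the constants arising from $\|\tilde X\|_2 + \|\tilde Y\|_2 \le \|K\|\|K^{-1}\| + 1$ into a redefined $M$ produces the claimed bound. The hard part will be the bookkeeping in the middle step: the tensor structure on $\cH = \cH_1 \otimes \cH_2$ and the non-triviality of $V$ have to be tracked precisely, because the $V^*$ that I insert into $\tilde X$ and the $V$ that I attach to $S$ are present solely to guarantee that $\tilde Y^*\tilde Y = \rho$ rather than $(I\otimes V^*)\rho(I\otimes V)$ and that the first summand collapses cleanly to $\rho^{1/2}$; without these adjustments one would pick up spurious unitary factors that block identification with the target expression.
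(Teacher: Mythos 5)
Your proposal is correct and follows essentially the same route as the paper: start from the $\beta=1/2$ case of Corollary~\ref{optimcl}, multiply the difference $\sigma_1^{1/2}K\rho_1^{-1/2}\rho^{1/2}-\sigma^{1/2}K$ by $\rho_1^{1/2}K_1^{-1}\sigma_1^{-1/2}\otimes V^*$ (the paper does this as a left multiplication $L_{\rho_1^{1/2}}L_{K^{-1}}L_{\sigma_1^{-1/2}}$, you do the adjoint/right-multiplication version, which is the same computation), bound the multiplier's operator norm by $\|\rho_1\|^{1/2}\|K^{-1}\|\|\sigma_1^{-1}\|^{1/2}$, and finish with Lemma~2.2 of \cite{CV17-1}. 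The residual $\|\sigma_1^{-1}\|^{1/2}$ factor you propose to absorb into $M$ is present in the paper's own intermediate estimate as well (and is silently dropped in its final statement), so your treatment matches the source on that point too.
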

\begin{proof}
Recalling $L_A$ being a left multiplication operation, 
$$L_{\rho_1^{1/2}}L_{K_{-1}} L_{\sigma_1^{-1/2}}(\sigma_1^{1/2}K\rho_1^{-1/2} \rho^{1/2} - \sigma^{1/2}K) = 
 \rho^{1/2} -  \rho_1^{1/2}K^{-1}  \sigma_1^{-1/2}\sigma^{1/2}K \ ,$$
and hence
\begin{equation}\label{compA}
\| \rho^{1/2} -  \rho_1^{1/2}  K^{-1}\sigma_1^{-1/2}\sigma^{1/2}K\|_2 \leq \|L_{\rho_1^{1/2}}\|\|L_{K^{-1}}\| \| L_{\sigma_1^{-1/2}}\| \|\sigma_1^{1/2}K\rho_1^{-1/2} \rho^{1/2} - \sigma^{1/2}K\|_2\ .
\end{equation}
Since $\|L_{\rho_1^{1/2}}\| = \|\rho_{1}\|^{1/2}$, $\| L_{\sigma_1^{-1/2}}\|  = \|\sigma_1^{-1}\|^{1/2}$, and $\|L_{K^{-1}}\|=\|K^{-1}\|$
we may combine \eqref{compA} with Theorem \ref{thm:mono} to obtain

\begin{align}
&\|\rho_1\|^{-1/2}\|K^{-1}\|^{-1} \|\sigma_1^{-1}\|^{1/2} \|\rho^{1/2} -  \rho_1^{1/2}  K^{-1}\sigma_1^{-1/2}\sigma^{1/2}K\|_2\leq\\
&  
\frac{1}{\pi} T^{1/2}\left(C^f_{T}\right)^{1/2} (S_f^K(\rho||\sigma) - S_f^{K_1}(\rho_1||\sigma_1))^{1/2}  \ +\frac{4}{\pi T^{1/2}}(\|\Delta_{\sigma,\rho}\|+\|K\|)\ .
\end{align}
which is the analog of Theorem \ref{thm:mono} with a drfferent constant on the right, but the roles of $\rho$ and $\sigma$ interchanged there.  Note that since $V$ in unitary, $K$ is invertible if and only if $K_1$ is. Moreover, $\left(K^{-1}\right)^{-1}\rho_1 K^{-1}=\left(K_1^{-1}\right)^{-1}\rho_1 K_1^{-1}\otimes I$. Then using Lemma 2.2 in \cite{CV17-1} once more, we obtain
$$\|\rho^{1/2} -  \rho_1^{1/2}  K^{-1}\sigma_1^{-1/2}\sigma^{1/2}K\|_2\geq \frac{1}{2}\|K^*\sR_\sigma\left(\left(K_1^{-1}\right)^{-1}\rho_1 K_1^{-1}\right)K-\rho \|_1.$$
\end{proof}

\subsection{Proof}\label{sec:proof-mono}
We are inspired by the proof of (\ref{eq:CV-mono}) in \cite{CV17-2}. And for completeness sake we provide all statements here as well. First, we recall Sharma's proof \cite{S14} of the monotonicity of the 
quasi relative entropies $S_f^K$ for operator convex $f$ and $K=K_1\otimes I_2$, and modify it accordingly for $K=K\otimes V$, where $V$ is unitary.
 
 \begin{proof}[Proof of Theorem~\ref{thm:original-mono} (Sharma '14)]
Define the operator $U$ mapping on $\cH=\cH_1\otimes\cH_2$ by
\begin{equation}\label{Ujdef}
U(X) = (X_1\otimes V)\rho_1^{-1/2}\rho^{1/2}\ .
\end{equation}
The adjoint operator on $\cH$  is given by
\begin{equation}\label{Ujcal}
U^*(Y) = \Tr_2(Y\rho^{1/2}(I_1\otimes V^*)) \rho_1^{-1/2}\ 
\end{equation}
for all $Y$ on $\cH $. 

Then note that $U$ is an isometry on $\cB(\cH_1)$
\begin{align}
\langle U(X_1), U(Y_1)\rangle&=\Tr\left(\rho^{1/2}\rho_1^{1/2}(X_1^*\otimes V^*)(Y_1\otimes V)\rho_1^{-1/2}\rho^{1/2} \right)\\
&=\Tr\left(\rho\left(\rho_1^{-1/2}X_1^*Y_1^*\rho_1^{-1/2}\otimes V^*V \right) \right)\\
&=\Tr\left(\rho_1^{1/2}X_1^*Y_1^*\rho_1^{-1/2} \right)\\
&=\langle X_1, Y_1\rangle\ .
\end{align}

Now observe that for all $X_1$ on $\cH_1$, 
\begin{align}
  U^* \Delta_{\sigma,\rho} U(X_1)&= \Tr_2\left(\sigma (X_1\rho_1^{-1/2}\otimes V)\rho^{1/2}\rho^{-1}\rho^{1/2}(I\otimes V^*)\rho_1^{-1/2}\right)\\
  &=\Tr(\sigma(X_1\rho_1^{-1}\otimes V V^*))\\
  &=\sigma_1 X_1\rho_1^{-1}\\
  &=\Delta_{\sigma_1,\rho_1}(X_1)\label{key} .
\end{align}

By the operator Jensen inequality
\begin{equation}\label{opjen}
 f\left(   U^* \Delta_{\sigma,\rho} U\right)\leq U^* f(\Delta_{\sigma,\rho}) U\ .
\end{equation}
Combining \eqref{key} and  \eqref{opjen}, and using  the fact that 
\begin{equation}\label{eq:UK}
U (K_1\rho_1^{1/2}) = K\rho^{1/2},
\end{equation}
we obtain 
\begin{eqnarray*}
S_f^{K_1}(\rho_1 || \sigma_1) &=& \langle K_1\rho_1^{1/2} , f(\Delta_{\sigma_1,\rho_1})  \left(K_1\rho_1^{1/2}\right)\rangle \\
&\leq &  \left \langle  U \left(K_1\rho_1^{1/2}\right) , f(\Delta_{\sigma,\rho})  U \left(K_1\rho_1^{1/2}\right)\right \rangle\\
&= &  \left  \langle K\rho^{1/2} ,  f(\Delta_{\sigma,\rho}) K\rho^{1/2}\right \rangle
= S_f^K(\rho||\sigma)\ .\\
\end{eqnarray*}
This proves the monotonicity theorem for the quasi relative entropy $S_f^K$ for every operator convex function $f$ and operator $K=K_1\otimes V$ for any unitary $V$. 
\end{proof}

We will use \cite[Lemma 2.1] {CV17-1}, the statement of which is the following:
\begin{lemma}[Carlen, Vershynina '17]\label{conlemA}
Let $U$ be a partial isometry embedding  a  Hilbert space $\cK$ into a Hilbert space $\cH$.
Let $B$ be an invertible  positive operator on $\cK$, $A$ be an invertible  positive operator on $\cH$, and suppose that 
$ U^*AU =   B$. Then for all $v\in \cK$,
\begin{equation}\label{resineq}
\langle v, U^* A^{-1} U v\rangle =  \langle v, B^{-1} v\rangle+ \langle w, A w\rangle\ ,
\end{equation}
where
\begin{equation}\label{resineq2}
w := UB^{-1}v - A^{-1}Uv\ .
\end{equation}
\end{lemma}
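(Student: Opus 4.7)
The plan is a direct algebraic expansion: compute $\langle w, Aw\rangle$ using the definition $w = UB^{-1}v - A^{-1}Uv$, invoke the two structural hypotheses $U^*AU = B$ and $U^*U = I_\cK$ (the latter because $U$ is an isometry on its initial space $\cK$), and verify that the four terms collapse to exactly $\langle v, U^*A^{-1}Uv\rangle - \langle v, B^{-1}v\rangle$.

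More concretely, I would first expand
\begin{equation*}
\langle w, Aw\rangle = \langle UB^{-1}v, AUB^{-1}v\rangle - \langle UB^{-1}v, Uv\rangle - \langle Uv, UB^{-1}v\rangle + \langle Uv, A^{-1}Uv\rangle,
\end{equation*}
where I have used $AA^{-1} = I_\cH$ to simplify the cross terms. Next I would handle the four pieces in turn: the first equals $\langle B^{-1}v, U^*AU\, B^{-1}v\rangle = \langle B^{-1}v, v\rangle$ by the hypothesis $U^*AU = B$; the two middle terms both reduce to $\langle v, B^{-1}v\rangle$ by the partial isometry identity $U^*U = I_\cK$, giving a combined contribution $-2\langle v, B^{-1}v\rangle$; and the last term is $\langle v, U^*A^{-1}Uv\rangle$ by definition. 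Adding these yields
\begin{equation*}
\langle w, Aw\rangle = \langle v, B^{-1}v\rangle - 2\langle v, B^{-1}v\rangle + \langle v, U^*A^{-1}Uv\rangle,
\end{equation*}
and rearranging produces \eqref{resineq}.

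There is no real obstacle here: the identity is an algebraic consequence of a single hypothesis $U^*AU = B$ together with $U^*U = I_\cK$, and the choice of $w$ is engineered precisely so that $Aw = Uv - AUB^{-1}v$ cancels the awkward cross terms. The only point to be careful about is that $B^{-1}v$ lives in $\cK$ while $A^{-1}Uv$ lives in $\cH$, so the subtraction defining $w$ takes place in $\cH$ after applying $U$ to the first summand; once this is noted, all inner products are unambiguous and the computation proceeds without issue.
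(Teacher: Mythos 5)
Your proof is correct. The paper itself quotes this lemma from \cite{CV17-1} without reproving it, and your direct expansion of $\langle w, Aw\rangle$ into four terms, collapsed via $U^*AU=B$ and $U^*U=I_{\cK}$, is exactly the standard argument from that reference; the only point worth making explicit is that simplifying the cross term $\langle A^{-1}Uv, AUB^{-1}v\rangle$ to $\langle Uv, UB^{-1}v\rangle$ uses the self-adjointness of $A^{-1}$ (which holds since $A$ is positive), not merely $AA^{-1}=I_{\cH}$.
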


\begin{proof}[Proof of Theorem~\ref{thm:mono}] 

For an operator monotone decreasing function $f$, according to the integral representation (\ref{low}) the quasi-relative entropy $S_{f}^K$ can we written as
$$S_f^K(\rho\|\sigma)=-a\Tr(K^*\sigma K)-b\Tr(K\rho K^*)+\int_0^\infty\left(S_{(t)}^K(\rho||\sigma)-\frac{t}{t^2+1}\Tr(K^*K\rho)\right){\rm d}\mu_f(t)\ ,  $$
for $a\geq0$ and $b\in\bR$. Here
$$S_{(t)}^K(\rho|| \sigma)= \Tr(\sqrt{\rho}K^*(t\one+\Delta_{\sigma,\rho})^{-1}K\sqrt{\rho}).$$
Because $\Tr((K_1\otimes V) A (K_1\otimes V)^*)=\Tr(K_1 A_1 K_1)$,  it is clear that the difference between relative entropies can be written in terms of the  $S_{(t)}$-family, 
\begin{equation}\label{eq:int_rep}
S_f^K(\rho\|\sigma)-S_f^{K_1}(\rho_1\|\sigma_1)=\int_0^\infty\left(S_{(t)}^K(\rho||\sigma)-S_{(t)}^{K_1}(\rho_1||\sigma_1)\right){\rm d}\mu_f(t)\ .
\end{equation}

We apply  Lemma~\ref{conlemA} with $A := (t\one + \Delta_{\sigma,\rho})$,  $B = (t\one +\Delta_{\sigma_1,\rho_1})$ and 
$v := K_1\rho_1^{1/2}$, and with $U$ defined as above. The lemma's condition, $U^*AU = B$, follows from  \eqref{key} and the fact that $U^*U(X_1) = X_1$ for any $X_1$ on $\cH_1$. 

Therefore, applying Lemma~\ref{conlemA} with (\ref{eq:UK}), 
\begin{eqnarray}
S_{(t)}^K(\rho||\sigma) - S_{(t)}^{K_1}(\rho_1||\sigma_1) &=& \langle K\rho^{1/2},   (t\one+\Delta_{\sigma,\rho})^{-1} K\rho^{1/2}\rangle -  
 \langle K_1 \rho_1^{1/2},   (t\one+\Delta_{\sigma_1,\rho_1})^{-1}   K_1 \rho_1^{1/2}\rangle\nonumber\\
 &=&  \langle w_t, (t\one + \Delta_{\sigma,\rho}) w_t\rangle 
 \geq  t\|w_t\|^2, \label{eq:diff}
 \end{eqnarray}
 where,
 \begin{equation}\label{resolv}
w_t := U (t\one +\Delta_{\sigma_1,\rho_1})^{-1} (K_1\rho_1^{1/2}) -  (t\one + \Delta_{\sigma,\rho})^{-1}K\rho^{1/2}\ .
\end{equation}

{Notice that} by definition of $U$ (\ref{Ujdef}) and (\ref{eq:UK})
$$-w_t = U[t^{-1}\one  -  (t\one  +\Delta_{\sigma_1,\rho_1})^{-1}] (K_1\rho_1^{1/2}) -
 [t^{-1}\one  -  (t\one + \Delta_{\sigma,\rho})^{-1}]K\rho^{1/2}\ . $$
Since $U$ is an isometry on $\cB(\cH_1)$, 
$$\|w_t\| \leq  \| [t^{-1}\one -  (t\one +\Delta_{\sigma_1,\rho_1})^{-1}] (K_1\rho_1^{1/2}) \| + 
 \| [t^{-1}\one -  (t\one + \Delta_{\sigma,\rho})^{-1}]K\rho^{1/2}\|\ .$$
Since the  the modular operator is non-negative,  $0 \leq t^{-1}\one -  (t\one +\Delta_{\sigma_1,\rho_1})^{-1}  \leq t^{-1}\one$,
with the analogous estimate valid with $\Delta_{\sigma,\rho}$ in place of $ \Delta_{\sigma_1,\rho_1}$, Therefore,
\begin{equation}\label{newtwist}
\|w_t\| \leq 2 t^{-1} {\|K\|}\ .
\end{equation}

Now using the integral representation of the power function (recall that $\beta\in(0,1)$)
 $$
 X^{\beta} = \frac{\sin{\beta\pi}}{\pi} \int_0^\infty t^{\beta} \left(\frac{1}{t}\one  - \frac{1}{t+X}\right){\rm d}t,
 $$
 and (\ref{eq:UK}) once more, we conclude that
 \begin{equation}\label{eq:U-int}
 U(\Delta_{\sigma_1,\rho_1})^{\beta} (K_1\rho_1^{1/2})  -  (\Delta_{\sigma,\rho})^{\beta}K\rho^{1/2} = -  \frac{\sin{\beta\pi}}{\pi}\int_0^\infty t^{\beta}w_t{\rm d}t\ .
\end{equation}
 On the other hand,
\begin{eqnarray*}\label{eq:U-error}
 U(\Delta_{\sigma_1,\rho_1})^{\beta} (K_1\rho_1^{1/2})  -  (\Delta_{\sigma,\rho})^{\beta}K\rho^{1/2}&=& 
 U (\sigma_1^{\beta}K_1\rho_1^{1/2-\beta}) - \sigma^{\beta}K\rho^{1/2-\beta} \\
 &=& \sigma_1^{\beta}K\rho_1^{-\beta} \rho^{1/2} -\sigma^{\beta}K\rho^{1/2-\beta}\ .
 \end{eqnarray*}
 
 Combining the last two equalities, and taking the Hilbert space norm associated with $\cH$, for any $T_L, T_R>0$, 
\begin{eqnarray}\label{eq:REM1}
&&\|\sigma_1^{\beta}K\rho_1^{-\beta} \rho^{1/2} -\sigma^{\beta}K\rho^{1/2-\beta}\|_2 =
\frac{\sin{\beta\pi}}{\pi}\left\Vert \int_0^\infty  t^{\beta} w_t{\rm d}t\right\Vert_2  \nonumber\\
&\leq& \frac{\sin{\beta\pi}}{\pi}\int_0^{1/T_L} t^{\beta}  \|w_t\|_2{\rm d}t + \frac{\sin{\beta\pi}}{\pi} \int_{1/T_L}^{T_R} t^{\beta}  \|w_t\|_2{\rm d}t + \frac{\sin{\beta\pi}}{\pi}\left\| \int_{T_R}^\infty t^{\beta} w_t {\rm d}t \right\|_2\ .
\end{eqnarray}
Let us look at these three terms separately. The first term can be bounded using \eqref{newtwist}:
\begin{eqnarray}\label{eq:REM1-1}
&&\int_0^{1/T_L} t^{\beta}  \|w_t\|_2{\rm d}t \leq 2 \int_0^{1/T_L} t^{\beta -1}  {\rm d}t=\frac{2}{\beta}\frac{ {\|K\|}}{T_L^{\beta}}\ .
\end{eqnarray}

The third term in (\ref{eq:REM1}) can be bounded the following way: For any positive  operator $X> 0$,
$$
t^{\beta} \left(\frac{1}{t}\one  - \frac{1}{t+X}\right)   
\leq  t^{\beta} \left(\frac{1}{t} - \frac{1}{t+\|X\|}\right)\one  =  \frac{t^{\beta-1}\|X\|}{(t+\|X\|)}\one,
$$
and hence
$$
\int_{T}^\infty t^{\beta} \left(\frac{1}{t}\one - \frac{1}{t+X}\right) {\rm d}t \leq \|X\|^{\beta}\left(\int_{T/\|X\|}^\infty \frac{t^{\beta-1}}{1+t}  {\rm d}t\right) \one \leq \frac{\|X\|}{{(1-\beta)}T^{1-\beta}}\one\ .
$$
Since spectra of $\sigma_1$ and $\rho_1$ lie in the convex hulls of the spectra of $\sigma$ and $\rho$ respectively, it follows that $\|\Delta_{\sigma_1,\rho_1}\|  \leq \|\Delta_{\sigma,\rho}\|$.  Therefore, recalling the definition of $w_t$, we obtain
\begin{equation}\label{eq:third}
\left\| \int_{T_R}^\infty t^{\beta} w_t {\rm d}t \right\|_2  \leq \frac{2 \|\Delta_{\sigma,\rho}\|}{(1-\beta) \, T_R^{1-\beta}}\ .
\end{equation}

The second term can be bounded using Cauchy-Schwartz inequality and the fact that $f$ is regular, i.e. there is a constant $C^f_{T_L, T_R}$ such that $dt\leq C^f_{T_L, T_R}{\rm d}\mu_f(t)$ 
 for $t\in[1/T_L, T_R]$.\\
   {\bf Case 1:} $\beta\leq1/2$.
\begin{eqnarray}\label{eq:REM2-1}
\left(\int_{1/T_L}^{T_R} t^{\beta}  \|w_t\|_2{\rm d}t\right)^2&\leq& T_R\int_{1/T_L}^{T_R} t^{2\beta}  \|w_t\|_2^2{\rm d}t\nonumber\\
&\leq& T_RT_L^{1-2\beta}\int_{1/T_L}^{T_R} t \|w_t\|_2^2{\rm d}t\nonumber\\
&\leq&  T_RT_L^{1-2\beta}\int_{1/T_L}^{T_R} S_{(t)}(\rho||\sigma)-S_{(t)}(\rho_1\|\sigma_1) {\rm d}t\nonumber\\
&\leq&  T_RT_L^{1-2\beta}C^f_{T_L, T_R}\int_{1/T_L}^{T_R} S_{(t)}(\rho||\sigma)-S_{(t)}(\rho_1\|\sigma_1) {\rm d}\mu_f(t)\nonumber\\
&\leq& T_RT_L^{1-2\beta}\,C^f_{T_L, T_R}\, (S_f(\rho||\sigma)-S_f(\rho_1\|\sigma_1))\ .
\end{eqnarray}

 Therefore, combining (\ref{eq:REM1}), (\ref{eq:REM1-1}), (\ref{eq:third}), and (\ref{eq:REM2-1})  we have 
\begin{eqnarray*}\label{eq:bound}
\frac{\pi}{\sin{\beta\pi}}\|\sigma_1^{\beta}K\rho_1^{-\beta} \rho^{1/2} -\sigma^{\beta}K\rho^{1/2-\beta}\|_2
&\leq& \frac{2 {\|K\|}}{\beta T_L^{\beta}}+\frac{2 \|\Delta_{\sigma,\rho}\|}{(1-\beta) \, T_R^{1-\beta}}\\&&+T_R^{1/2}T_L^{1/2-\beta}\,\left(C^f_{T_L, T_R}\right)^{1/2}\, (S_f(\rho||\sigma)-S_f(\rho_1\|\sigma_1))^{1/2}\ .
\end{eqnarray*}

Taking $T_L:=T$ and  $T_R:=T^{\beta/(1-\beta)}$ we obtain
\begin{eqnarray*}
&&\frac{\pi}{\sin{\beta\pi}}\|\sigma_1^{\beta}K\rho_1^{-\beta} \rho^{1/2} -\sigma^{\beta}K\rho^{1/2-\beta}\|_2\\
&\leq& 2\left(\frac{ {\|K\|}}{\beta}+\frac{\|\Delta_{\sigma,\rho}\|}{1-\beta}\right)\frac{1}{T^{\beta}}+T^{\frac{1-\beta}{2}+\frac{\beta^2}{2(1-\beta)}}\,\left(C_{T, \beta}^f\right)^{1/2}\, (S_f(\rho||\sigma)-S_f(\rho_1\|\sigma_1))^{1/2}\ .
\end{eqnarray*}

{\bf Case 2:} $\beta>1/2$. 
\begin{eqnarray}
\left(\int_{1/T_L}^{T_R} t^{\beta}  \|w_t\|_2{\rm d}t\right)^2&\leq& T_R\int_{1/T_L}^{T_R} t^{2\beta}  \|w_t\|_2^2{\rm d}t\nonumber\\
&\leq& T_R^{2\beta}\int_{1/T_L}^{T_R} t \|w_t\|_2^2{\rm d}t\nonumber\\
&\leq&  T_R^{2\beta}\int_{1/T_L}^{T_R}( S_{(t)}(\rho||\sigma)-S_{(t)}(\rho_1\|\sigma_1) ){\rm d}t\nonumber\\
&\leq&  T_R^{2\beta}C^f_{T_L, T_R}\int_{1/T_L}^{T} (S_{(t)}(\rho||\sigma)-S_{(t)}(\rho_1\|\sigma_1) ){\rm d}\mu_f(t)\nonumber\\
&\leq& T_R^{2\beta}\,C^f_{T_L, T_R}\, (S_f(\rho||\sigma)-S_f(\rho_1\|\sigma_1))\label{eq:REM1-2}\ .
\end{eqnarray}

Therefore, combining (\ref{eq:REM1}), (\ref{eq:REM1-1}), (\ref{eq:third}), and (\ref{eq:REM1-2}) we have
\begin{eqnarray*}\label{eq:bound}
\frac{\pi}{\sin{\beta\pi}} \|\sigma_1^{\beta}K\rho_1^{-\beta} \rho^{1/2} -\sigma^{\beta}K\rho^{1/2-\beta}\|_2&\leq& \frac{2 {\|K\|}}{\beta T_L^{\beta}}+\frac{2 \|\Delta_{\sigma,\rho}\|}{(1-\beta) \, T_R^{1-\beta}}\\&&+T_R^{\beta}\,\left(C^f_{T, \beta}\right)^{1/2}\, (S_f(\rho||\sigma)-S_f(\rho_1\|\sigma_1))^{1/2}\ .
\end{eqnarray*}

Taking $T_L:=T^{(1-\beta)/\beta}$ and  $T_R:=T$ we obtain
\begin{eqnarray*}\label{eq:bound2}
&&\frac{\pi}{\sin{\beta\pi}}\|\sigma_1^{\beta}K\rho_1^{-\beta} \rho^{1/2} -\sigma^{\beta}K\rho^{1/2-\beta}\|_2\\
&\leq& 2\left(\frac{ {\|K\|}}{\beta}+\frac{\|\Delta_{\sigma,\rho}\|}{1-\beta}\right)\frac{1}{T^{1-\beta}}+T^{\beta}\,\left(C^f_{T, \beta}\right)^{1/2}\, (S_f(\rho||\sigma)-S_f(\rho_1\|\sigma_1))^{1/2}\ .
\end{eqnarray*}
\end{proof}

\subsection{Condition for equality}

Corollary~\ref{optimcl} and the proof of Theorem~\ref{thm:mono} give a condition on the equality in the monotonicity inequality.
\begin{corollary}\label{cor:eq-mono}
Let $f$ be a regular function. The equality in the monotonicity inequality
\begin{equation}\label{eq:mono-eq}
 S_f^K(\rho||\sigma)-S_f^{K_1}(\rho_1\|\sigma_1)=0,
 \end{equation}
 holds if and only if for all $\beta\in\mathbb{C}$ the following holds:
 \begin{equation}\label{eq:beta}
\sigma_1^{\beta}K\rho_1^{-\beta}  = \sigma^{\beta}K\rho^{-\beta}.
\end{equation}
\end{corollary}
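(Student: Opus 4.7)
I would prove the two implications separately, pivoting on the partial isometry $U$ introduced in Section~\ref{sec:proof-mono} and the functional-calculus identity $\Delta_{\sigma,\rho}^{\beta}(K\rho^{1/2}) = \sigma^{\beta}K\rho^{1/2-\beta}$ (together with its $\rho_1,\sigma_1$ analogue).

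For the forward direction, I would feed $S_f^K(\rho\|\sigma) = S_f^{K_1}(\rho_1\|\sigma_1)$ into Theorem~\ref{thm:mono} at arbitrary $T>0$ and fixed $\beta\in(0,1)$. The second term on the right vanishes because of the entropy-difference factor, leaving only $2(\|K\|/\beta + \|\Delta_{\sigma,\rho}\|/(1-\beta))T^{-\alpha_1(\beta)}$, which tends to $0$ as $T\to\infty$. Hence
$$\|\sigma_1^{\beta}K\rho_1^{-\beta}\rho^{1/2} - \sigma^{\beta}K\rho^{1/2-\beta}\|_2 = 0 \qquad \text{for every } \beta\in(0,1).$$
Right-multiplying by the generalized inverse $\rho^{-1/2}$ (a genuine inverse on $\mathrm{ran}\,\rho$, the only subspace where either operator is supported) yields \eqref{eq:beta} for $\beta\in(0,1)$. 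In finite dimension both sides of \eqref{eq:beta} are entire in $\beta$ (the complex powers of the strictly positive compressions of $\sigma,\sigma_1,\rho,\rho_1$ are entire), so analytic continuation from the interval $(0,1)$ promotes the equality to all $\beta\in\mathbb{C}$.

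For the converse, I would reverse-engineer the identities used in the proof of Theorem~\ref{thm:mono}. Assuming \eqref{eq:beta} for all $\beta\in\mathbb{C}$ and right-multiplying by $\rho^{1/2}$ gives $\sigma_1^{\beta}K\rho_1^{-\beta}\rho^{1/2} = \sigma^{\beta}K\rho^{1/2-\beta}$. Combining $\Delta_{\sigma,\rho}^{\beta}(K\rho^{1/2}) = \sigma^{\beta}K\rho^{1/2-\beta}$ with the direct computation $U(\sigma_1^{\beta}K_1\rho_1^{1/2-\beta}) = \sigma_1^{\beta}K\rho_1^{-\beta}\rho^{1/2}$ (which crucially uses $K=K_1\otimes V$ and the definition \eqref{Ujdef} of $U$), the hypothesis becomes
$$U\Delta_{\sigma_1,\rho_1}^{\beta}(K_1\rho_1^{1/2}) = \Delta_{\sigma,\rho}^{\beta}(K\rho^{1/2}) \qquad \text{for every } \beta\in\mathbb{C}.$$
Since the spectra of both relative modular operators are finite subsets of $(0,\infty)$, the complex powers $\{x^\beta\}_{\beta}$ determine all spectral projections via Vandermonde inversion, so the identity extends by linearity to $Ug(\Delta_{\sigma_1,\rho_1})(K_1\rho_1^{1/2}) = g(\Delta_{\sigma,\rho})(K\rho^{1/2})$ for every continuous $g$, in particular for our operator convex $f$. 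Taking the inner product with $K\rho^{1/2}$ and using $U^*(K\rho^{1/2}) = K_1\rho_1^{1/2}$ (from \eqref{Ujcal} and the isometry property of $U$), the two sides collapse to $S_f^{K_1}(\rho_1\|\sigma_1)$ and $S_f^K(\rho\|\sigma)$ respectively, which is exactly \eqref{eq:mono-eq}.

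The only genuinely delicate point is the singular case in the forward direction, where $\rho$ or $\sigma$ may have non-trivial kernels. Both sides of \eqref{eq:beta} must then be read with generalized inverses, and the right-multiplication by $\rho^{-1/2}$ is only meaningful on $\mathrm{ran}\,\rho$; consequently \eqref{eq:beta} is really the assertion that the two maps agree after compression to these supports. Once this bookkeeping is fixed, the functional-calculus step in the converse is routine in finite dimension, so neither half requires more than the tools already developed in the proof of Theorem~\ref{thm:mono}.
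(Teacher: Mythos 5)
Your proof is correct, and its overall architecture is the same as the paper's: both directions hinge on the isometry $U$ from Section~\ref{sec:proof-mono} and the identity $U(\Delta_{\sigma_1,\rho_1})^{\beta}(K_1\rho_1^{1/2})-(\Delta_{\sigma,\rho})^{\beta}K\rho^{1/2}=\sigma_1^{\beta}K\rho_1^{-\beta}\rho^{1/2}-\sigma^{\beta}K\rho^{1/2-\beta}$. The two arguments differ only in how the converse passes from ``equality of all powers'' back to the entropy equality. The paper expands the resolvent $(t+x)^{-1}$ as a power series in $x$ to conclude $w_t=0$, then uses $S^K_{(t)}(\rho\|\sigma)-S^{K_1}_{(t)}(\rho_1\|\sigma_1)=\langle w_t,(t\one+\Delta_{\sigma,\rho})w_t\rangle$ together with the integral representation \eqref{eq:int_rep}; you instead use linear independence of the functions $\lambda\mapsto\lambda^{\beta}$ on the finite joint spectrum to upgrade the power identities to $Ug(\Delta_{\sigma_1,\rho_1})(K_1\rho_1^{1/2})=g(\Delta_{\sigma,\rho})(K\rho^{1/2})$ for arbitrary $g$, and then take $g=f$ and pair with $K\rho^{1/2}$. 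Your route is marginally cleaner: the paper's Neumann series converges only for $t$ above the spectral radius, so its ``for all $t\geq0$'' conclusion tacitly needs an extra continuation-in-$t$ step that your spectral-projection argument bypasses. In the forward direction you also gain a little by sending $T\to\infty$ in Theorem~\ref{thm:mono} directly, rather than invoking Corollary~\ref{optimcl}, which formally carries the growth hypothesis $C^f_{T,\beta}\leq CT^{2c}$ that is absent from the corollary's statement; and your explicit remark that \eqref{eq:beta} must be read on the support of $\rho$ via the generalized inverse addresses a point the paper leaves implicit. No gaps.
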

\begin{proof}
If the equality in the monotonicity inequality holds, then (\ref{eq:beta}) holds for all $\beta\in(0,1)$  following the Corollary~\ref{optimcl}. Since  for any positive matrix $X$, the map $\beta \rightarrow X^\beta$ is an entire analytic function, this identity holds for all $\beta\in\mathbb{C}$.

The other way, suppose (\ref{eq:beta}) holds for all $\beta\in\mathbb{C}$. Then from  (\ref{eq:U-error}) in the proof of the Theorem~\ref{thm:mono}, we have that
$$U(\Delta_{\sigma_1,\rho_1})^{\beta} (K_1\rho_1^{1/2})  -  (\Delta_{\sigma,\rho})^{\beta}K\rho^{1/2}=0\ , $$
for all $\beta\in\mathbb{C}$.  Let us use the following Taylor series expansion 
$$\frac{1}{t+x}=\sum_{n=0}^\infty \frac{(-1)^n}{t^{n+1}}x^n\ . $$
Then, using the above two equalities, we obtain that for all $t\geq 0$,
$$w_t=U (t\one +\Delta_{\sigma_1,\rho_1})^{-1} (K_1\rho_1^{1/2}) -  (t\one + \Delta_{\sigma,\rho})^{-1}K\rho^{1/2}=0\ . $$
From (\ref{eq:diff}) this implies that $S_{(t)}^K(\rho||\sigma) - S_{(t)}^{K_1}(\rho_1||\sigma_1) =0$, and therefore, the (\ref{eq:mono-eq}) is satisfied, following  the integral representation (\ref{eq:int_rep}).

\end{proof}

\section{Joint convexity of the quasi-relative entropy}\label{sec:Joint}

As it was shown by Petz \cite{OP93, P86-1} or \cite[Theorem 2]{P10}, the quasi relative entropy is jointly convex in $\rho$ and $\sigma$. Here is another elegant proof of a joint convexity.

\begin{proposition}\label{prop:convex}
For an operator monotone decreasing function $f$, and any operator $K$, quasi-entropy $S_f^K(\rho\| \sigma)$ is jointly convex in $\rho, \sigma>0$. In other words, for $\rho=\sum_jp_j\rho_j$ and $\sigma=\sum_j p_j\sigma_j$,
$$0\leq  \sum_j p_j S_f^K(\rho_j\| \sigma_j)-S_f^K(\rho\|\sigma). $$
Moreover, the equality in the joint convexity holds if and only if 
$$(\Delta_{\sigma,\rho}+tI)^{-1}(K)=(\Delta_{\sigma_j,\rho_j}+tI)^{-1}(K),  \ \text{for all}\ j \ \text{and for all} \ t>0.$$
\end{proposition}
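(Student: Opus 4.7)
My plan is to reduce the joint convexity inequality to the monotonicity inequality (Theorem~\ref{thm:original-mono}) via the standard \emph{block-diagonal doubling trick}. I introduce an auxiliary Hilbert space $\cH'$ with orthonormal basis $\{\ket{j}\}_j$ indexed by the convex combination, and form the block-diagonal states on $\cH\otimes\cH'$
\begin{equation*}
\hat\rho := \sum_j p_j\,\rho_j\otimes\ket{j}\bra{j}, \qquad
\hat\sigma := \sum_j p_j\,\sigma_j\otimes\ket{j}\bra{j},
\end{equation*}
together with $\hat K := K\otimes I_{\cH'}$, which has the tensor product form $K_1\otimes V$ (with $V=I$) required by Theorem~\ref{thm:original-mono}. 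Their partial traces over $\cH'$ are $\rho$ and $\sigma$, respectively.

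The key computation is the identity
\begin{equation*}
S_f^{\hat K}(\hat\rho\|\hat\sigma) \;=\; \sum_j p_j\, S_f^K(\rho_j\|\sigma_j).
\end{equation*}
This follows directly from formula~\eqref{eq:formula}: the eigenvectors of $\hat\rho$ and $\hat\sigma$ lie inside single $\cH'$-blocks, the matrix elements $\bra{\phi\,k}\hat K\ket{\psi\,j}$ vanish unless $j=k$, and the scaling $(p_j\lambda)f(p_j\mu/p_j\lambda)=p_j\lambda\,f(\mu/\lambda)$ cancels the $p_j$-factors inside $f$. Applying the monotonicity inequality~\eqref{eq:original-mono} for the partial trace over $\cH'$ then gives
\begin{equation*}
\sum_j p_j\, S_f^K(\rho_j\|\sigma_j) \;=\; S_f^{\hat K}(\hat\rho\|\hat\sigma) \;\geq\; S_f^K(\rho\|\sigma),
\end{equation*}
which is the joint convexity inequality.

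For the equality condition, I apply Corollary~\ref{cor:eq-mono} to $(\hat\rho,\hat\sigma)$: equality in the doubled monotonicity inequality holds iff $\hat\sigma_1^\beta\,\hat K\,\hat\rho_1^{-\beta}=\hat\sigma^\beta\,\hat K\,\hat\rho^{-\beta}$ for every $\beta\in\mathbb{C}$. Expanding block by block, the left-hand side equals $(\sigma^\beta K\rho^{-\beta})\otimes I_{\cH'}$, while the right-hand side equals $\sum_j (\sigma_j^\beta K\rho_j^{-\beta})\otimes\ket{j}\bra{j}$ (the scalars $p_j^{\beta}$ and $p_j^{-\beta}$ cancel). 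Matching $\cH'$-diagonal blocks yields $\Delta_{\sigma_j,\rho_j}^\beta(K)=\Delta_{\sigma,\rho}^\beta(K)$ for every $j$ and every $\beta\in\mathbb{C}$. Translating this family of power identities into the stated resolvent identities is routine: in one direction I would expand $(t\one+X)^{-1}=\sum_{n\geq 0}(-1)^n t^{-n-1}X^n$ for $t$ large and use analytic continuation in $t$; in the other direction I would use the integral representation $X^\beta=\frac{\sin\beta\pi}{\pi}\int_0^\infty t^\beta\big(\tfrac{1}{t}\one-\tfrac{1}{t\one+X}\big){\rm d}t$ employed in the proof of Theorem~\ref{thm:mono}, combined with analytic continuation in $\beta$. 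The main obstacle is really the block-diagonal identification $S_f^{\hat K}(\hat\rho\|\hat\sigma)=\sum_j p_j\,S_f^K(\rho_j\|\sigma_j)$: once that is carefully verified through the spectral formula~\eqref{eq:formula}, everything else follows routinely from the results of Section~\ref{sec:Mono}.
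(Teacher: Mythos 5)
Your proof of the inequality is correct, but it takes a genuinely different route from the paper's. The paper proves Proposition~\ref{prop:convex} directly from the integral representation \eqref{low}: it writes $S_f^K(\rho\|\sigma)$ as linear terms plus $\int_0^\infty \Tr\bigl(\rho K^*(L_\sigma+tR_\rho)^{-1}(K)\bigr)\,{\rm d}\mu_f(t)$ (up to linear corrections) and invokes Ruskai's theorem that $(Y,A,B)\mapsto \Tr Y^*(L_B+tR_A)^{-1}(Y)$ is jointly convex. You instead use the block-diagonal doubling trick and reduce to the monotonicity inequality of Theorem~\ref{thm:original-mono}; this is exactly the mechanism the paper itself uses later in the proof of Theorem~\ref{thm:joint-new} (with $\overline\rho=\sum_j p_j\rho_j\otimes\ket{j}\bra{j}_X$), and your verification of $S_f^{\hat K}(\hat\rho\|\hat\sigma)=\sum_j p_j S_f^K(\rho_j\|\sigma_j)$ through the spectral formula \eqref{eq:formula} is sound: the cross-block matrix elements vanish and the $p_j$ factors cancel inside $f$. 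What the paper's route buys is that the equality condition in the exact form stated (resolvents at every $t>0$) drops out immediately from the equality case of Ruskai's convexity theorem, term by term in the integrand, with no extra hypotheses on $f$.

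There is, however, a gap in your treatment of the equality condition. You invoke Corollary~\ref{cor:eq-mono}, whose forward direction (equality implies the $\beta$-identities) rests on Corollary~\ref{optimcl} and therefore requires $f$ to be $C^f_{T,\beta}$-regular with $C^f_{T,\beta}\leq C\,T^{2c}$. Proposition~\ref{prop:convex} is stated for an arbitrary operator monotone decreasing $f$, so your argument proves the ``only if'' direction of the equality characterization only for this restricted class. (The reverse direction, from the resolvent identities back to equality, is fine: $w_t=0$ for all $t$ forces each $S_{(t)}$-difference to vanish, and that part of Corollary~\ref{cor:eq-mono} does not use regularity.) To close the gap in full generality you would need to argue as the paper does: equality in the joint convexity forces equality, ${\rm d}\mu_f$-a.e.\ in $t$, in the convexity of the resolvent term, and then use the known equality condition for $\Tr Y^*(L_B+tR_A)^{-1}(Y)$ from the reference cited there, rather than routing through the quantitative monotonicity bounds. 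Your translation between the power identities $\sigma_j^{\beta}K\rho_j^{-\beta}=\sigma^{\beta}K\rho^{-\beta}$ and the resolvent identities is routine and correct as sketched, so the only substantive issue is the regularity hypothesis smuggled in through Corollary~\ref{cor:eq-mono}.
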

\begin{proof}
Using (\ref{low}) representation of operator monotone decreasing function, we obtain 
\begin{align}
S_f^K(\rho\| \sigma)=&-a \Tr K^*\sigma K-b\Tr K^*K\rho+\int_0^\infty \left\{\Tr \sqrt{\rho}K^*\frac{1}{\Delta_{\sigma,\rho}+t}(K\sqrt{\rho})-\frac{t}{t^2+1}\Tr K^*K\rho \right\} {\rm d}\mu_f(t)\nonumber\\
=& -a \Tr K^*\sigma K-b\Tr K^*K\rho+\int_0^\infty \left\{\Tr \rho K^*\frac{1}{L_\sigma+tR_\rho}(K\rho)-\frac{t}{t^2+1}\Tr K^*K\rho \right\} {\rm d}\mu_f(t)\nonumber\\
=&-a \Tr K^*\sigma K-b\Tr K^*K\rho+\int_0^\infty \left\{\Tr \rho K^*\frac{1}{\Delta_{\sigma,\rho}+t}(K)-\frac{t}{t^2+1}\Tr K^*K\rho \right\} {\rm d}\mu_f(t).\label{eq:S_f}
\end{align}
The joint convexity follows immediately from that of the map $(Y, A, B)\rightarrow \Tr Y^* \frac{1}{L_B+tR_A}(Y)$, which was proved in \cite{R05}.

Note that equality in the joint convexity holds if and only if it holds for the first term in the integrand.
\end{proof}

The next theorem uses monotonicity inequality in Theorem \ref{thm:mono} and  provides a strengthening of the convexity inequality.
\begin{theorem}\label{thm:joint-new}
Let $\cH=\cH_1\otimes\cH_2$, $\beta\in(0,1)$, $K\in\cB(\cH)$. Let $f$ be a $C_{T, \beta}^f$-regular function. Then for states $\rho=\sum_jp_j\rho_j$ and $\sigma=\sum_j p_j\sigma_j$ (with $p_j>0$ and $\sum_j p_j=1$), we have
\begin{eqnarray}\label{eq:thm-conv}
&&\frac{\pi}{\sin{\beta\pi}}\sum_j p_j^{1/2} \|\sigma^{\beta}K\rho^{-\beta} \rho_j^{1/2} - \sigma_j^{\beta}K\rho_j^{1/2-\beta}\|_2\nonumber\\
&\leq& 2\left(\frac{ {\|K\|}}{\beta}+\frac{\sum_jp_j^{-1}\|\rho_j^{-1}\|}{1-\beta}\right)\frac{1}{T^{\alpha_1(\beta)}}+T^{\alpha_2(\beta)}\,\left(C_{T, \beta}^f\right)^{1/2}\, \left(\sum_j p_j S_f^K(\rho_j\| \sigma_j) - S_f^K({\rho}\| {\sigma})\right)^{1/2} \ ,
\end{eqnarray}
where 
$$\alpha_1(\beta)=\left\{\begin{matrix}
\beta &\text{when }\beta\leq1/2\\
1-\beta &\text{when }\beta\geq1/2.
\end{matrix}\right. \ \ \text{and} \ \ 
\alpha_2(\beta)=\left\{\begin{matrix}
\frac{1-\beta}{2}+\frac{\beta^2}{2(1-\beta)} &\text{when }\beta\leq1/2\\
\beta&\text{when }\beta\geq1/2.
\end{matrix}\right. $$
\end{theorem}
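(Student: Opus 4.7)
The strategy is to reduce the joint convexity strengthening to the monotonicity strengthening already established in Theorem~\ref{thm:mono} via the standard lift to a block-diagonal state on an auxiliary tensor factor. Let $\cK$ be an auxiliary Hilbert space with orthonormal basis $\{|j\rangle\}_{j=1}^N$ indexed by the convex decomposition, and define on $\cH\otimes\cK$
$$\tilde\rho := \sum_j p_j\,\rho_j\otimes|j\rangle\langle j|,\quad \tilde\sigma := \sum_j p_j\,\sigma_j\otimes|j\rangle\langle j|,\quad \tilde K := K\otimes I_{\cK}.$$
Then $\tilde K$ has the required tensor form $K_1\otimes V_2$ with $V=I_{\cK}$ unitary, and partial trace over $\cK$ yields $\tilde\rho_1 = \rho$, $\tilde\sigma_1 = \sigma$, $\tilde K_1 = K$, so Theorem~\ref{thm:mono} is applicable to $(\tilde\rho,\tilde\sigma,\tilde K)$.

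First I would verify by the spectral formula \eqref{eq:formula} that $S_f^{\tilde K}(\tilde\rho\|\tilde\sigma)=\sum_j p_j\,S_f^K(\rho_j\|\sigma_j)$: the eigenvectors of $\tilde\rho$ and $\tilde\sigma$ split into $\cK$-indexed orthogonal blocks, and $|\langle \phi\otimes m|\tilde K|\psi\otimes l\rangle|^2=\delta_{ml}|\langle\phi|K|\psi\rangle|^2$ collapses the double sum to the weighted combination. Consequently, the monotonicity gap for the block pair equals the joint convexity gap, matching the RHS of \eqref{eq:thm-conv}. This is the cleanest part of the argument.

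Next, I would compute both sides of Theorem~\ref{thm:mono} applied to $(\tilde\rho,\tilde\sigma,\tilde K)$. Using $\tilde\rho^{1/2}=\sum_j p_j^{1/2}\rho_j^{1/2}\otimes|j\rangle\langle j|$, and similarly for $\tilde\sigma^\beta$ and $\tilde\rho^{1/2-\beta}$, one finds
$$\tilde\sigma_1^\beta\,\tilde K\,\tilde\rho_1^{-\beta}\,\tilde\rho^{1/2}-\tilde\sigma^\beta\,\tilde K\,\tilde\rho^{1/2-\beta}=\sum_j p_j^{1/2}\,A_j\otimes|j\rangle\langle j|,$$
where $A_j:=\sigma^\beta K\rho^{-\beta}\rho_j^{1/2}-\sigma_j^\beta K\rho_j^{1/2-\beta}$. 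For the constants, $\|\tilde K\|=\|K\|$ trivially, and a direct calculation of the modular action on matrix units $A\otimes|j\rangle\langle k|$ yields $\Delta_{\tilde\sigma,\tilde\rho}(A\otimes|j\rangle\langle k|)=p_jp_k^{-1}\sigma_j A\rho_k^{-1}\otimes|j\rangle\langle k|$, whence $\|\Delta_{\tilde\sigma,\tilde\rho}\|\leq\max_k p_k^{-1}\|\rho_k^{-1}\|\leq\sum_k p_k^{-1}\|\rho_k^{-1}\|$, producing the first-term coefficient on the RHS of \eqref{eq:thm-conv}.

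Assembling: applying Theorem~\ref{thm:mono} to $(\tilde\rho,\tilde\sigma,\tilde K)$ and substituting the identifications above gives the claimed bound on the left-hand norm. The main technical subtlety, and the point I would handle carefully, is the precise passage between the intrinsic Hilbert--Schmidt norm of the block operator on $\cH\otimes\cK$, which equals $\bigl(\sum_j p_j\|A_j\|_2^2\bigr)^{1/2}$, and the $\ell^1$-type expression $\sum_j p_j^{1/2}\|A_j\|_2$ appearing in \eqref{eq:thm-conv}: the former arises because the blocks $|j\rangle\langle j|$ are mutually HS-orthogonal, while the latter is naturally the triangle-inequality estimate $\|\sum_j p_j^{1/2}A_j\otimes|j\rangle\langle j|\|_2\leq \sum_j\|p_j^{1/2}A_j\otimes|j\rangle\langle j|\|_2 = \sum_j p_j^{1/2}\|A_j\|_2$; one must therefore invoke Theorem~\ref{thm:mono} with the bound read on the appropriate side, after which the remaining algebra is routine bookkeeping of the exponent parameters $\alpha_1(\beta),\alpha_2(\beta)$ inherited unchanged from Theorem~\ref{thm:mono}.
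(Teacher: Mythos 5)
Your proposal follows essentially the same route as the paper: lift the convex decomposition to block-diagonal quantum--classical states $\overline\rho=\sum_j p_j\rho_j\otimes\ket{j}\bra{j}$, $\overline\sigma=\sum_j p_j\sigma_j\otimes\ket{j}\bra{j}$ with $\tilde K=K\otimes I$, check that the partial trace returns $\rho,\sigma$, that $S_f^{\tilde K}(\overline\rho\|\overline\sigma)=\sum_j p_jS_f^K(\rho_j\|\sigma_j)$, that $\|\Delta_{\overline\sigma,\overline\rho}\|\le\sum_jp_j^{-1}\|\rho_j^{-1}\|$, and that the deficit operator decomposes as $\sum_j p_j^{1/2}A_j\otimes\ket{j}\bra{j}$; then invoke Theorem~\ref{thm:mono}. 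Your verification of the entropy identity via the spectral formula \eqref{eq:formula} and of the modular norm via matrix units are harmless variants of the paper's computations (the paper uses the integral representation \eqref{eq:S_f} and the chain $\|\Delta_{\overline\sigma,\overline\rho}\|\le\|\overline\rho^{-1}\|$), and your matrix-unit bound is in fact slightly sharper.

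The one point you flag but do not close is a genuine issue, and your proposed fix points the wrong way. Theorem~\ref{thm:mono} bounds $\bigl\|\sum_j p_j^{1/2}A_j\otimes\ket{j}\bra{j}\bigr\|_2=\bigl(\sum_j p_j\|A_j\|_2^2\bigr)^{1/2}$ from above by the right-hand side, whereas the target left-hand side is the larger quantity $\sum_j p_j^{1/2}\|A_j\|_2$ (for nonnegative $x_j=p_j^{1/2}\|A_j\|_2$ one has $\|x\|_{\ell^2}\le\|x\|_{\ell^1}$). The triangle inequality you invoke, $\|\sum_j p_j^{1/2}A_j\otimes\ket{j}\bra{j}\|_2\le\sum_j p_j^{1/2}\|A_j\|_2$, therefore cannot upgrade the conclusion of Theorem~\ref{thm:mono} to the stated $\ell^1$-type bound; it only shows the stated left-hand side dominates what you have actually controlled. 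To be fair, the paper's own proof asserts this last step in one sentence (``taking the Hilbert--Schmidt norm on both sides will result in the correct left-hand side'') without supplying the missing inequality, so you have reproduced the argument faithfully, including its weakest link; what the argument honestly delivers is \eqref{eq:thm-conv} with $\bigl(\sum_j p_j\|\sigma^{\beta}K\rho^{-\beta}\rho_j^{1/2}-\sigma_j^{\beta}K\rho_j^{1/2-\beta}\|_2^2\bigr)^{1/2}$ in place of $\sum_j p_j^{1/2}\|\cdot\|_2$. If you want the $\ell^1$ form you would need an additional argument (e.g.\ a factor depending on the number of summands via Cauchy--Schwarz), and you should say so explicitly rather than deferring it to ``routine bookkeeping.''
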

\begin{proof}
 Let us form the following quantum-classical states 
$$\overline{\rho}:=\sum_j p_j \rho_j\otimes\ket{j}\bra{j}_X, $$
$$ \overline{\sigma}:=\sum_j p_j \sigma_j\otimes\ket{j}\bra{j}_X.$$
Then $\overline{\rho}_1=\rho$ and $\overline{\sigma}_1=\sigma$, i.e. $$S_f^K(\overline{\rho}_1\|\overline{\sigma}_1)=S_f^K({\rho}\| {\sigma}).$$

Let us use (\ref{eq:S_f}) for the expression of the quasi-relative entropy. There we see that all but one term are linear in $\rho$ and $\sigma$. For the term with the modular operator, note that for any $A=\sum_j A_j\otimes\ket{j}\bra{j}_X$ we have
\begin{equation}\label{eq:delta}
\Delta_{\overline{\sigma},\overline{\rho}}(A)=\sum_j\Delta_{\sigma_j,\rho_j}(A_j)\otimes\ket{j}\bra{j}_X. 
\end{equation}
Since $K=K\otimes I_X$, we may apply this to a term in (\ref{eq:S_f}), and obtain
$$ \Tr \rho K^*\frac{1}{\Delta_{\overline{\sigma},\overline{\rho}}+t}(K)=\sum_j p_j \Tr\left\{ \rho K^*\frac{1}{\Delta_{\sigma_j,\rho_j}+t}(K)\right\}.$$
Therefore, from (\ref{eq:S_f})
$$S_f^K(\overline{\rho}\| \overline{\sigma})=\sum_j p_j S_f^K(\rho_j\| \sigma_j).$$

Note that 
$$\|\Delta_{\overline{\sigma}, \overline{\rho}}\|\leq \|\overline{\sigma}\|\|\overline{\rho}^{-1}\|\leq \|\overline{\rho}^{-1}\|\leq \sum_j p_j^{-1}\|\rho_j^{-1}\|. $$

Applying Theorem~\ref{thm:mono} will give us the right-hand side of (\ref{eq:thm-conv}). The left-hand side results from the following identity
$$\overline{\sigma}_1^{\beta}K\overline{\rho}_1^{-\beta}\overline{\rho}^{1/2}-\overline{\sigma}^{\beta}K\overline{\rho}^{1/2-\beta}=\sum_j p_j^{1/2} \left\{\sigma^{\beta}K\rho^{-\beta}\rho_j^{1/2}-\sigma_j^{\beta}K\rho_j^{1/2-\beta} \right\}\otimes \ket{j}\bra{j}_X.$$
Taking the Hilbert-Schmidt norm (\ref{eq:HS-norm}) on both sides will result in the correct left-hand side.  
\end{proof}

\begin{corollary}\label{cor:conv} Let $\beta\in(0,1)$ and $f$ be a $C_{T, \beta}^f$-regular function. Suppose that $C^f_{T, \beta} \leq C\, T^{2c}$ for some $c, C>0$. Let  $\rho=\sum_jp_j\rho_j$ and $\sigma=\sum_j p_j\sigma_j$ (with $p_j>0$ and $\sum_j p_j=1$). Then there is an explicitly computable  constant 
$M$ depending only on the smallest non-zero eigenvalues of $\{\rho_j\}_j$,  {$\|K\|$}, $\{p_j^{-1}\}_j$, $\beta$, $C$ and $c$, such that,\\
\begin{eqnarray}\label{eq:cor-conv}
\sum_j p_j^{1/2} \|\sigma^{\beta}K\rho^{-\beta} \rho_j^{1/2} - \sigma_j^{\beta}K\rho_j^{1/2-\beta}\|_2\leq M 
\left(\sum_j p_j S_f^K(\rho_j\| \sigma_j) - S_f^K({\rho}\| {\sigma})\right)^{\alpha(\beta) }\ ,
\end{eqnarray}
where 
$$\alpha(\beta)=\left\{\begin{matrix}
\frac{\beta(1-\beta)}{1+2c(1-\beta)} &\text{when }\beta\leq1/2\\
\frac12  \frac{1-\beta}{ 1+c}&\text{when }\beta\geq1/2.
\end{matrix}\right. $$

In particular, for $\beta=1/2,$
\begin{equation}
\sum_j p_j^{1/2} \|\sigma^{1/2}K\rho^{-1/2} \rho_j^{1/2} - \sigma_j^{1/2}K\|_2\leq M\left(\sum_j p_j S_f^K(\rho_j\| \sigma_j) - S_f^K({\rho}\| {\sigma})\right)^{1/4(c+1)}\ .
\end{equation} 
\end{corollary}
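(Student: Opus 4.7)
The plan is to apply Theorem~\ref{thm:joint-new} together with the hypothesis $C_{T,\beta}^f \le C\,T^{2c}$ and optimize in $T$, in complete parallel with the way Corollary~\ref{optimcl} is extracted from Theorem~\ref{thm:mono} (using Lemma 4.3 of \cite{CV17-2}). Write
$$D := \sum_j p_j S_f^K(\rho_j\|\sigma_j) - S_f^K(\rho\|\sigma) \ge 0$$
for the joint-convexity gap. Substituting the scaling hypothesis into the bound of Theorem~\ref{thm:joint-new}, its right-hand side takes the form
$$A\,T^{-\alpha_1(\beta)} + B\,T^{\alpha_2(\beta)+c}\,D^{1/2},$$
where $A = 2\bigl(\|K\|/\beta + (1-\beta)^{-1}\sum_j p_j^{-1}\|\rho_j^{-1}\|\bigr)$ and $B$ depends only on $C$, $\beta$ (and the $\pi/\sin(\beta\pi)$ prefactor). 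The bound $\|\Delta_{\overline\sigma,\overline\rho}\| \le \sum_j p_j^{-1}\|\rho_j^{-1}\|$ used here was already established inside the proof of Theorem~\ref{thm:joint-new}, so no new estimate is required.

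The next step is to balance the two terms: the optimal $T$ satisfies $T^{\alpha_1(\beta)+\alpha_2(\beta)+c} \sim D^{-1/2}$, and plugging this choice back produces an upper bound of the form $M\,D^{\alpha(\beta)}$ with
$$\alpha(\beta) = \frac{\alpha_1(\beta)}{2\bigl(\alpha_1(\beta)+\alpha_2(\beta)+c\bigr)}.$$
A short arithmetic check yields $\alpha_1(\beta)+\alpha_2(\beta) = 1/(2(1-\beta))$ when $\beta \le 1/2$ and $\alpha_1(\beta)+\alpha_2(\beta) = 1$ when $\beta \ge 1/2$. Substituting these values shows that $\alpha(\beta)$ reduces precisely to $\beta(1-\beta)/(1+2c(1-\beta))$ and to $(1-\beta)/(2(1+c))$ in the two regimes, matching the statement of the corollary. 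The constant $M$ inherits its dependence from $A$, $B$, and the optimization: it depends only on $\|K\|$, $\beta$, $C$, $c$, the $\{p_j^{-1}\}_j$, and the smallest non-zero eigenvalues of $\{\rho_j\}_j$, as claimed.

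The specialization $\beta = 1/2$ is then immediate: $\alpha(1/2) = 1/(4(c+1))$, and the left-hand side of~(\ref{eq:cor-conv}) collapses to $\sum_j p_j^{1/2}\|\sigma^{1/2}K\rho^{-1/2}\rho_j^{1/2} - \sigma_j^{1/2}K\|_2$, giving the last displayed inequality. There is no real obstacle here, since the genuine analytic work (the integral representation, the Petz-type isometry estimate for $w_t$, and the block-diagonal lift) has already been done in Theorem~\ref{thm:mono} and Theorem~\ref{thm:joint-new}; the corollary is purely a two-parameter optimization of an expression of the shape $A T^{-p} + B T^q D^{1/2}$, which is exactly the content of Lemma 4.3 of \cite{CV17-2}.
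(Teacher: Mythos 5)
Your proposal is correct and is essentially the paper's own route: the paper gives no separate proof of this corollary, deriving it from Theorem~\ref{thm:joint-new} by the same single-variable optimization in $T$ (via Lemma 4.3 of \cite{CV17-2}) that turns Theorem~\ref{thm:mono} into Corollary~\ref{optimcl}. Your arithmetic check of $\alpha_1(\beta)+\alpha_2(\beta)$ in the two regimes and the resulting exponent $\alpha(\beta)$, as well as the stated dependence of $M$, all match.
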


\subsection{Condition for equality}
From Corollary~\ref{cor:eq-mono} and the proof of Theorem~\ref{thm:joint-new} we obtain the condition on the equality in the joint convexity inequality.
\begin{corollary}
An equality in the joint convexity inequality
$$\sum_j p_j S_f^K(\rho_j\| \sigma_j) = S_f^K({\rho}\| {\sigma}), $$
holds if and only if, for all $j$ and all $\beta\in\mathbb{C}$
$$\sigma^{\beta}K\rho^{-\beta}  = \sigma_j^{\beta}K\rho_j^{-\beta}\ . $$
\end{corollary}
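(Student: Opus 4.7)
The plan is to apply Corollary~\ref{cor:eq-mono} to the quantum--classical embedding already used in the proof of Theorem~\ref{thm:joint-new}. Set
$$\overline{\rho}:=\sum_j p_j\,\rho_j\otimes\ket{j}\bra{j}_X,\qquad \overline{\sigma}:=\sum_j p_j\,\sigma_j\otimes\ket{j}\bra{j}_X,\qquad \overline{K}:=K\otimes I_X,$$
acting on $\cH_1\otimes\cH_X$. As shown in the proof of Theorem~\ref{thm:joint-new}, $\overline{\rho}_1=\rho$, $\overline{\sigma}_1=\sigma$, and $S_f^{\overline{K}}(\overline{\rho}\|\overline{\sigma})=\sum_j p_j\,S_f^K(\rho_j\|\sigma_j)$. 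Hence the joint convexity equality $\sum_j p_j S_f^K(\rho_j\|\sigma_j)=S_f^K(\rho\|\sigma)$ is literally the monotonicity equality $S_f^{\overline{K}}(\overline{\rho}\|\overline{\sigma})-S_f^{K}(\overline{\rho}_1\|\overline{\sigma}_1)=0$ for the pair $(\overline{\rho},\overline{\sigma})$ under the partial trace over $\cH_X$.

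Next, Corollary~\ref{cor:eq-mono} characterises this monotonicity equality by
$$\overline{\sigma}_1^\beta\,\overline{K}\,\overline{\rho}_1^{-\beta}=\overline{\sigma}^\beta\,\overline{K}\,\overline{\rho}^{-\beta}\qquad\text{for all }\beta\in\mathbb{C}.$$
The left-hand side equals $(\sigma^\beta K\rho^{-\beta})\otimes I_X$. For the right-hand side, the block-diagonal form of $\overline{\rho},\overline{\sigma}$ together with the fact that $\overline{K}$ acts as the identity on $\cH_X$ gives, after the neat cancellation $p_j^\beta\cdot p_j^{-\beta}=1$,
$$\overline{\sigma}^\beta\,\overline{K}\,\overline{\rho}^{-\beta}=\sum_j\sigma_j^\beta K\rho_j^{-\beta}\otimes\ket{j}\bra{j}_X.$$
Using $I_X=\sum_j\ket{j}\bra{j}_X$ and the linear independence of the rank-one projectors $\ket{j}\bra{j}_X$, comparing the $j$-th block of the two sides yields $\sigma^\beta K\rho^{-\beta}=\sigma_j^\beta K\rho_j^{-\beta}$ for every $j$ and every $\beta\in\mathbb{C}$, which is exactly the desired identity.

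The converse direction is a direct retracing of the same steps: assuming the pointwise identities for all $j$ and all $\beta$, one reassembles them block-by-block into the identity for $(\overline{\rho},\overline{\sigma},\overline{K})$, invokes Corollary~\ref{cor:eq-mono} in the reverse direction to conclude equality in the monotonicity inequality for the embedded pair, and then uses the embedding dictionary of Theorem~\ref{thm:joint-new} to translate this back into equality in the joint convexity inequality. I do not foresee any serious obstacles: the analytic content has already been absorbed into Corollary~\ref{cor:eq-mono}, and the only genuine bookkeeping is the implicit tensoring with $I_X$ that is forced on us when we invoke the corollary in the enlarged space, which is harmless precisely because $\overline{K}=K\otimes I_X$ respects the block decomposition.
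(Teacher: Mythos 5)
Your proposal is correct and follows exactly the paper's intended route: the paper derives this corollary precisely by combining Corollary~\ref{cor:eq-mono} with the quantum--classical embedding $\overline{\rho}=\sum_j p_j\rho_j\otimes\ket{j}\bra{j}_X$, $\overline{\sigma}=\sum_j p_j\sigma_j\otimes\ket{j}\bra{j}_X$ from the proof of Theorem~\ref{thm:joint-new}. Your block-by-block computation (including the cancellation $p_j^{\beta}p_j^{-\beta}=1$) just makes explicit the details the paper leaves to the reader.
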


\section{Operator inequalities}\label{sec:ssa}

In this section consider a tri-partite Hilbert space $\cH=\cH_A\otimes\cH_B\otimes\cH_C$. Let $\rho=\rho_{ABC}$ be a state on $\cH$. Then the strong subadditivity of quantum entropy is the following statement:
\begin{theorem}[Lieb, Ruskai '73]\label{thm:ssa-reg}
For $\rho_{ABC}$ a state on $\cH_{ABC}$, it holds that
\begin{equation}
0\leq S(\rho_{AB})+S(\rho_{BC})-S(\rho_{ABC})-S(\rho_B)\ .
\end{equation}
\end{theorem}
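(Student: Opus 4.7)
The plan is to derive the strong subadditivity inequality directly from the monotonicity of the relative entropy under partial traces, which in this paper is available as the $K=I$, $f(x)=-\log x$ specialization of Theorem~\ref{thm:original-mono}. This reduction is the classical one, already flagged by the author in the introduction with the suggestion to take $\rho=\rho_{ABC}$ and $\sigma=\rho_{AB}\otimes\rho_C$.

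First I would set up the auxiliary comparison state. On $\cH=\cH_A\otimes\cH_B\otimes\cH_C$ define
$$\rho:=\rho_{ABC}, \qquad \sigma:=\rho_{AB}\otimes\rho_C,$$
and view the partial trace $\Tr_A$ as the quantum channel of interest (with $\cH_1:=\cH_B\otimes\cH_C$ and $\cH_2:=\cH_A$). Note that $\Tr_A\sigma=\rho_B\otimes\rho_C$ and $\Tr_A\rho=\rho_{BC}$. Then the monotonicity inequality for the Umegaki relative entropy under partial traces gives
$$S(\rho_{ABC}\,\|\,\rho_{AB}\otimes\rho_C)\geq S(\rho_{BC}\,\|\,\rho_B\otimes\rho_C).$$

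Next I would expand both sides using additivity of $\log$ on tensor products, i.e. $\log(X\otimes Y)=\log X\otimes I + I\otimes\log Y$, together with the fact that marginals are compatible with partial traces of $\rho_{ABC}$. A short calculation yields
$$S(\rho_{ABC}\,\|\,\rho_{AB}\otimes\rho_C)=S(\rho_{AB})+S(\rho_C)-S(\rho_{ABC}),$$
$$S(\rho_{BC}\,\|\,\rho_B\otimes\rho_C)=S(\rho_B)+S(\rho_C)-S(\rho_{BC}).$$
Subtracting and noting that the $S(\rho_C)$ contributions cancel, the monotonicity inequality becomes precisely
$$S(\rho_{AB})+S(\rho_{BC})-S(\rho_{ABC})-S(\rho_B)\geq 0,$$
which is the claim.

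There is no real obstacle here, since all of the content has been packaged into the monotonicity of the relative entropy, which is already invoked earlier in the paper. The only care required is the bookkeeping in step two, namely verifying that the logarithms behave additively on the product states $\rho_{AB}\otimes\rho_C$ and $\rho_B\otimes\rho_C$ and that the $A$-marginal of $\rho_{AB}\otimes\rho_C$ is indeed $\rho_B\otimes\rho_C$; both are immediate. As a side remark, this same reduction is exactly what will later allow the sharpened SSA stated in the introduction to be read off from the sharpened monotonicity result of Corollary~\ref{optimcl} applied to $\rho=\rho_{ABC}$, $\sigma=\rho_{AB}\otimes\rho_C$, and $f(x)=-\log x$.
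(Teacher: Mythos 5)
Your proposal is correct. The identities $S(\rho_{ABC}\,\|\,\rho_{AB}\otimes\rho_C)=S(\rho_{AB})+S(\rho_C)-S(\rho_{ABC})$ and $S(\rho_{BC}\,\|\,\rho_B\otimes\rho_C)=S(\rho_B)+S(\rho_C)-S(\rho_{BC})$ are right, the $A$-marginals are computed correctly, and the $S(\rho_C)$ terms cancel as you say, so monotonicity under $\Tr_A$ gives exactly the claimed inequality. The paper itself does not prove Theorem~\ref{thm:ssa-reg}; it simply cites Lieb and Ruskai, whose original argument went through Lieb's concavity theorem rather than through monotonicity. What you do instead is the reduction the author sketches in the introduction (``one takes $\rho=\rho_{ABC}$ and $\sigma=\rho_{AB}\otimes\rho_C$ and a trace over the system $\cH_A$''), and the same substitution reappears in Corollary~\ref{cor:ssa-log}. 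Importantly, your route is not circular within this paper: the monotonicity you invoke is the $K=I$, $f(x)=-\log x$ case of Theorem~\ref{thm:original-mono}, which is proved in Section~\ref{sec:proof-mono} via the operator Jensen inequality, independently of SSA. The only point worth a sentence in a careful write-up is the support condition when $\rho_{AB}\otimes\rho_C$ is not invertible, but since $\supp(\rho_{ABC})\subseteq\supp(\rho_{AB})\otimes\supp(\rho_C)$ and the paper works with generalized inverses throughout, this is harmless.
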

This theorem was proved by Lien and Ruskai \cite{LR73}, using  Lieb's theorem that was proved in \cite{L73}. The theorem in a von Neumann algebra setting was done by Narnhofer and Thirring in \cite{NT85}.

 Let $\sigma=\sigma_{AB}\otimes I_C$ be a state on $\cH$. Let $f$ be an operator monotone decreasing function. In \cite{R12}  Rusaki, building on the work of Kim \cite{K12}, showed that the following operator on $\cH_C$ is positive semi-definite
\begin{equation}\label{eq:Ruskai}
0\leq  \Tr_{AB}\left([f(L_{\sigma_{AB}}R^{-1}_{\rho_{ABC}})-f(L_{\sigma_{B}} R^{-1}_{\rho_{BC}})]\rho_{ABC}\right)\ .
\end{equation}
In particular, taking $\sigma_{AB}=\rho_{AB}$, and $f(x)=-\log(x)$, reduces to an operator inequality of Kim's \cite{K12}
\begin{equation}\label{eq:Kim}
0\leq \Tr_{AB}[\log\rho_{ABC}-\log\rho_{AB}-\log\rho_{BC}+\log\rho_B ]\rho_{ABC}  \ .
\end{equation}

We prove the following sharpening of (\ref{eq:Ruskai}) inequality.

\begin{theorem}\label{thm:ssa} Let $\cH=\cH_A\otimes\cH_B\otimes\cH_C$, $\beta\in(0,1)$ and $\rho=\rho_{ABC}$, $\sigma=\sigma_{AB}\otimes I_C$. Let $f$ be a $C_{T, \beta}^f$-regular function. Suppose that $C^f_{T, \beta} \leq C\, T^{2c}$ for some $c, C>0$. Then there is an explicitly computable positive constant 
$N$ depending only on the smallest non-zero eigenvalue of $\rho$,  {$\|K\|$}, $\beta$, $C$ and $c$, such that, the following operator inequality holds on $\cH_C$
\begin{align}
N [\Tr_{AB}(P_{ABC}(\rho,\sigma)P^*_{ABC}(\rho,\sigma))]^{1/\alpha(\beta)}\leq \Tr_{AB}\left([f(L_{\sigma_{AB}}R^{-1}_{\rho_{ABC}})-f(L_{\sigma_{B}} R^{-1}_{\rho_{BC}})]\rho_{ABC}\right)\ .\label{eq:ssa}
\end{align}
where
$$P_{ABC}(\rho,\sigma):=\sigma_{B}^{\beta}\rho_{BC}^{-\beta} \rho_{ABC}^{1/2} -\sigma_{AB}^{\beta}\rho_{ABC}^{1/2-\beta}\ , $$
and $$\alpha(\beta)=\left\{\begin{matrix}
\frac{\beta(1-\beta)}{1+2c(1-\beta)} &\text{when }\beta\leq1/2\\
\frac12  \frac{1-\beta}{ 1+c}&\text{when }\beta\geq1/2.
\end{matrix}\right. $$
\end{theorem}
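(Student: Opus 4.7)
The plan is to reduce the asserted operator inequality on $\cH_C$ to a family of scalar inequalities indexed by unit vectors $\ket{\phi}\in\cH_C$ by sandwiching both sides with the rank-one projection $\ket{\phi}\bra{\phi}_C$, and then to apply the sharpened monotonicity bound of Corollary~\ref{optimcl} to a carefully chosen operator.

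For each unit vector $\ket{\phi}\in\cH_C$, set $K_\phi := I_{AB}\otimes\ket{\phi}\bra{\phi}_C$.  Viewing the partial trace over $A$ as the coarse-graining channel in Corollary~\ref{optimcl} (so that subsystem ``$1$'' is $BC$ and subsystem ``$2$'' is $A$), one has the factorization $K_\phi = K_1\otimes V$ with $V = I_A$ unitary and $K_1 = I_B\otimes\ket{\phi}\bra{\phi}_C$, so the corollary applies; moreover $\|K_\phi\|=1$ uniformly in $\ket{\phi}$, so the constant it produces does not depend on $\ket{\phi}$.  Using the integral representation of $f$ together with the identity $K_\phi^*(\sigma_{AB}^n\otimes I_C)K_\phi = \sigma_{AB}^n\otimes\ket{\phi}\bra{\phi}_C$ and the analogous identity with $\sigma_B\otimes I_C$ (the latter coming out via cyclicity of $\Tr_A$), a term-by-term calculation shows that sandwiching the right-hand side with $\ket{\phi}\bra{\phi}_C$ recovers the quasi-entropy difference $S_f^{K_\phi}(\rho_{ABC}\|\sigma_{AB}\otimes I_C) - S_f^{K_1}(\rho_{BC}\|\sigma_B\otimes I_C)$.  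For the left-hand side, commuting $\ket{\phi}\bra{\phi}_C$ past $\sigma_B^\beta\otimes I_C$ and $\sigma_{AB}^\beta\otimes I_C$ yields the key identity
$\sigma_1^\beta K_\phi\rho_1^{-\beta}\rho^{1/2} - \sigma^\beta K_\phi\rho^{1/2-\beta} = (I_{AB}\otimes\ket{\phi}\bra{\phi}_C)\,P_{ABC}$,
whose squared Hilbert-Schmidt norm equals $\bra{\phi}\Tr_{AB}(P_{ABC}P_{ABC}^*)\ket{\phi}$ because $\ket{\phi}\bra{\phi}_C$ is a projection.  Applying Corollary~\ref{optimcl} to $K_\phi$ then delivers a uniform scalar bound of the form
$N\,\bra{\phi}\Tr_{AB}(P_{ABC}P_{ABC}^*)\ket{\phi}^{1/(2\alpha(\beta))} \leq \bra{\phi}\Tr_{AB}([f(L_{\sigma_{AB}}R_{\rho_{ABC}}^{-1}) - f(L_{\sigma_B}R_{\rho_{BC}}^{-1})]\rho_{ABC})\ket{\phi}$
valid for every unit $\ket{\phi}\in\cH_C$.

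The main obstacle is to upgrade this family of scalar inequalities to the asserted operator inequality on $\cH_C$.  Since $x\mapsto x^{1/\alpha(\beta)}$ is convex, the operator Jensen inequality applied to $\Tr_{AB}(P_{ABC}P_{ABC}^*)$ runs in the unfavourable direction, so the rank-one sandwich cannot by itself produce the operator form.  I would address this by repeating the sandwich argument with the broader class of test operators $K_X := I_{AB}\otimes X$ for arbitrary positive $X$ on $\cH_C$ with $\|X\|\leq 1$, which still have the structure $K_1\otimes V$ required by Corollary~\ref{optimcl} (with $V = I_A$ unitary and $K_1 = I_B\otimes X$).  The same chain of identities then yields the uniform moment bound
$\Tr_C(X^2\,\Tr_{AB}(P_{ABC}P_{ABC}^*)) \leq M^2\bigl(\Tr_C(X^2\,\Tr_{AB}([f(\Delta_{\sigma,\rho})-f(\Delta_{\sigma_B\otimes I_C,\rho_{BC}})]\rho_{ABC}))\bigr)^{2\alpha(\beta)}$,
from which the operator inequality on $\cH_C$ is extracted by a spectral/variational argument over all admissible $X$.
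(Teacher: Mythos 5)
Your construction is essentially identical to the paper's own proof of Theorem~\ref{thm:ssa}: the paper applies Corollary~\ref{optimcl} with $\cH_1=\cH_{BC}$, $\cH_2=\cH_A$ and $K=I_{AB}\otimes\ket{\phi}\bra{\phi}_C$, uses the fact that $\ket{\phi}\bra{\phi}_C$ commutes with $\sigma_{AB}$ and $\sigma_B$ to identify the sandwiched right-hand side with the quasi-entropy difference $S_f^{K}(\rho_{ABC}\|\sigma_{AB}\otimes I_C)-S_f^{K_1}(\rho_{BC}\|\sigma_B\otimes I_C)$ and the left-hand side with $\bra{\phi}\Tr_{AB}(P_{ABC}P_{ABC}^*)\ket{\phi}$, exactly as you describe. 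So up to the very last step you have reconstructed the intended argument, and the obstruction you then flag is real: since $1/\alpha(\beta)\geq 4$, the map $x\mapsto x^{1/\alpha(\beta)}$ is convex, so $\bra{\phi}A^{1/\alpha(\beta)}\ket{\phi}\geq(\bra{\phi}A\ket{\phi})^{1/\alpha(\beta)}$ for $A=\Tr_{AB}(P_{ABC}P_{ABC}^*)$, and the family of scalar bounds $N(\bra{\phi}A\ket{\phi})^{1/\alpha(\beta)}\leq\bra{\phi}B\ket{\phi}$ is strictly weaker than the asserted operator inequality $NA^{1/\alpha(\beta)}\leq B$. It is worth saying plainly that the paper does not address this: its proof concludes the operator inequality ``since the inequality holds for all $\ket{\phi}_C$,'' which is precisely the leap you refuse to make.

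Your proposed repair, however, does not close the gap. Diagonalizing $X=\sum_i x_i\ket{i}\bra{i}$, the moment bound $N[\Tr(X^2A)]^{1/\alpha(\beta)}\leq\Tr(X^2B)$ already follows (for $\Tr X^2\leq 1$, by convexity of $t\mapsto t^{1/\alpha(\beta)}$ applied to the convex combination $\sum_i x_i^2\bra{i}A\ket{i}$) from the rank-one bounds you have; the general-$X$ family therefore carries essentially no information beyond the rank-one family and runs into the same Jensen obstruction. The ``spectral/variational argument'' you defer to is thus the entire missing content, and for generic positive $A,B$ the implication from $N(\bra{\phi}A\ket{\phi})^{1/\alpha(\beta)}\leq\bra{\phi}B\ket{\phi}$ for all unit $\phi$ to $NA^{1/\alpha(\beta)}\leq B$ is simply false: take $A$ a rank-one projection and $\phi$ a balanced superposition of its eigenvectors, so that $\bra{\phi}A^{r}\ket{\phi}=\tfrac12$ while $(\bra{\phi}A\ket{\phi})^{r}=2^{-r}$. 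What your argument (and the paper's) actually establishes is the scalar family $N(\bra{\phi}\Tr_{AB}(P_{ABC}P_{ABC}^*)\ket{\phi})^{1/\alpha(\beta)}\leq\bra{\phi}\Tr_{AB}([f(L_{\sigma_{AB}}R^{-1}_{\rho_{ABC}})-f(L_{\sigma_{B}}R^{-1}_{\rho_{BC}})]\rho_{ABC})\ket{\phi}$ for every $\ket{\phi}\in\cH_C$; promoting the power inside the quadratic form to a genuine operator power requires an additional idea that neither you nor the paper supplies.
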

The proof of this theorem is given in Section~\ref{proof:ssa}. Note that since the left-hand side of (\ref{eq:ssa}) is positive semi-definite on $\cH_C$, this theorem implies result in \cite{R12}, i.e. (\ref{eq:Ruskai}).

 In the proof of Theorem~\ref{thm:ssa}, if one takes $\rho_{ABC}=\rho_{AB}\otimes I_C$ and $\sigma_{ABC}$ (i.e. in (\ref{eq:D})), we would obtain
 
 \begin{theorem}\label{thm:ssa2}
  Let $\cH=\cH_A\otimes\cH_B\otimes\cH_C$, $\beta\in(0,1)$ and $\rho=\rho_{AB}\otimes I_C$, $\sigma_{ABC}$ be a state on $\cH$. Let $f$ be a $C_{T, \beta}^f$-regular function. Suppose that $C^f_{T, \beta} \leq C\, T^{2c}$ for some $c, C>0$. Then there is an explicitly computable positive constant 
$N$ depending only on the smallest non-zero eigenvalue of $\rho$,  {$\|K\|$}, $\beta$, $C$ and $c$, such that, the following operator inequality holds on $\cH_C$
 $$N[\Tr_{AB}(Q^*_{ABC}(\rho_{AB},\sigma_{ABC})Q_{ABC}(\rho_{AB},\sigma_{ABC})]^{1/\alpha(\beta)}\leq \Tr_{AB}\left([f(L_{\sigma_{ABC}}R^{-1}_{\rho_{AB}})-f(L_{\sigma_{BC}} R^{-1}_{\rho_{B}})]\rho_{AB}\right)\ , $$
where
$$Q_{ABC}(\rho_{AB},\sigma_{ABC}):=\sigma_{BC}^{\beta}\rho_{B}^{-\beta} \rho_{AB}^{1/2} -\sigma_{ABC}^{\beta}\rho_{AB}^{1/2-\beta}\ . $$
 \end{theorem}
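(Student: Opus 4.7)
The plan is to repeat the proof of Theorem~\ref{thm:ssa} verbatim, with the roles of $\rho$ and $\sigma$ interchanged at the step where the product structure on the $C$-factor is imposed. Concretely, in place of $\rho=\rho_{ABC}$ and $\sigma=\sigma_{AB}\otimes I_C$ I substitute $\rho=\rho_{AB}\otimes I_C$ and $\sigma=\sigma_{ABC}$. This swap is legitimate because Corollary~\ref{optimcl} (the quantitative monotonicity bound driving Theorem~\ref{thm:ssa}) and the integral representation~\eqref{low} that feeds it through the $S_{(t)}$-family in~\eqref{eq:int_rep} treat $L_\sigma$ and $R_\rho^{-1}$ symmetrically, so nothing in the proof machinery of Section~\ref{sec:proof-mono} privileges which of the two arguments carries the product form.

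The first step is to apply Corollary~\ref{optimcl} with $K=I$ to the pair $(\rho_{AB}\otimes I_C,\,\sigma_{ABC})$ under the partial trace $\Tr_A$. The $BC$-reductions are $(\rho_{AB}\otimes I_C)_{BC}=\rho_B\otimes I_C$ and $\sigma_{BC}$, so the left-hand side of the corollary becomes
$$\|\sigma_{BC}^{\beta}\rho_B^{-\beta}(\rho_{AB}\otimes I_C)^{1/2} - \sigma_{ABC}^{\beta}(\rho_{AB}\otimes I_C)^{1/2-\beta}\|_2,$$
which, once the trivial $I_C$ factors are absorbed on the $C$-side, is exactly $\|Q_{ABC}(\rho_{AB},\sigma_{ABC})\|_2$. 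On the right-hand side, expanding the quasi-relative entropies through the definition of the modular operator yields
$$S_f(\rho_{AB}\otimes I_C\,\|\,\sigma_{ABC})-S_f(\rho_B\otimes I_C\,\|\,\sigma_{BC}) = \Tr_C\Tr_{AB}\bigl([f(L_{\sigma_{ABC}}R^{-1}_{\rho_{AB}})-f(L_{\sigma_{BC}} R^{-1}_{\rho_{B}})]\rho_{AB}\bigr),$$
giving the $\Tr_C$-integrated form of the claimed operator inequality with the correct constant $N$ and exponent $\alpha(\beta)$ inherited from Corollary~\ref{optimcl}.

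To upgrade from this scalar inequality to the operator inequality on $\cH_C$, I would invoke the same $\cH_C$-slicing argument used in Theorem~\ref{thm:ssa}: since $\rho_{AB}\otimes I_C$ commutes with every operator of the form $I_{AB}\otimes Y_C$, the modular operator $\Delta_{\sigma,\rho}$ and its functional calculus split across the $C$-factor, and the estimate can therefore be read off blockwise on $\cH_C$. Equivalently, replacing $I_C$ by an arbitrary density $\tau_C>0$ produces a one-parameter family of scalar bounds that together reconstitute the operator inequality on $\cH_C$.

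The main obstacle is verifying that the symmetric swap of $\rho$ and $\sigma$ is compatible with the generalized-inverse conventions required for the non-invertible operator $\rho_{AB}\otimes I_C$, and with the correct interpretation of $\rho_B^{-\beta}$ in $Q_{ABC}$ as $\rho_B^{-\beta}\otimes I_C$ on $\cH_B\otimes\cH_C$. Once the bookkeeping of these trivial tensor factors is handled carefully, the bound transfers directly from Theorem~\ref{thm:ssa} without any change in the structural form of the constants or exponents.
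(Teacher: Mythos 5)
Your overall route is the paper's own: Theorem~\ref{thm:ssa2} is obtained by re-running the proof of Theorem~\ref{thm:ssa} with the product structure moved from $\sigma$ to $\rho$, i.e.\ by substituting $\rho=\rho_{AB}\otimes I_C$ and a general $\sigma_{ABC}$ into the difference (\ref{eq:D}); your identification of the reductions, of $Q_{ABC}$, and of the provenance of $N$ and $\alpha(\beta)$ is exactly what the paper intends.

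The gap is in how you produce the \emph{operator} inequality on $\cH_C$, which is the entire content of the statement beyond scalar monotonicity. You apply Corollary~\ref{optimcl} with $K=I$, which only yields the $\Tr_C$-integrated bound, and then propose to ``upgrade'' it; neither upgrade works as written. The claim that ``the modular operator and its functional calculus split across the $C$-factor'' is false for general $\sigma_{ABC}$: the left multiplication $L_{\sigma_{ABC}}$ does not respect the $C$-factor, only the right multiplication $R^{-1}_{\rho_{AB}\otimes I_C}$ does. And replacing $I_C$ by a density $\tau_C$ does not reconstitute the operator inequality, because $\tau_C$ enters $f(L_{\sigma}R^{-1}_{\rho_{AB}\otimes\tau_C})$ nonlinearly (for a power function $f(x)=x^p$ one gets $\tau_C^{1-p}$ inside the partial trace), so the right-hand side is not an affine pairing of $\tau_C$ against a fixed operator on $\cH_C$; moreover the left-hand side $\|Q\|_2^{1/\alpha(\beta)}$ is not affine in $\tau_C$ either, so no family of scalar bounds indexed by $\tau_C$ can be reassembled into an operator inequality. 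The correct device --- and the reason the whole monotonicity analysis of Section~\ref{sec:mono-new} is carried out for general $K$ rather than $K=I$ --- is to apply Corollary~\ref{optimcl} from the outset with $K=I_{AB}\otimes\ket{\phi}\bra{\phi}_C$ for each unit vector $\ket{\phi}_C$. Because this $K$ commutes with $\rho=\rho_{AB}\otimes I_C$, the right multiplication $R_K$ commutes with $\Delta_{\sigma,\rho}$ and hence with $f(\Delta_{\sigma,\rho})$, so the projector can be pulled out of both sides of the monotonicity bound, turning it into $N\bigl(\bra{\phi}_C\Tr_{AB}(Q^*Q)\ket{\phi}_C\bigr)^{1/\alpha(\beta)}\leq\bra{\phi}_C\Tr_{AB}(\cdots)\ket{\phi}_C$ for every $\ket{\phi}_C$; this, not a post-processing of the $K=I$ case, is what gives the operator statement, and it is also why $Q^*Q$ (rather than $QQ^*$ as in Theorem~\ref{thm:ssa}) appears, since $K$ now multiplies $Q$ from the right.
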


 Let us take the function $\tilde{f}(x)=xf(1/x)$. By \cite[Theorem V.2.9]{B97} this function is operator monotone decreasing if and only if $f$ is. Taking $\tilde{f}$ instead of $f$ in Theorem~\ref{thm:ssa} and \ref{thm:ssa2} and interchanging the roles of $\rho$ and $\sigma$, leads to

\begin{corollary}\label{cor:ssa1} With the same conditions and notations as in Theorem~\ref{thm:ssa2}, we have
 $$N [\Tr_{AB}(P_{ABC}(\sigma,\rho)P^*_{ABC}(\sigma,\rho))]^{1/\alpha(\beta)}\leq\Tr_{AB}\left(\rho_{AB}[f(L_{\rho_{AB}}^{-1}R_{\sigma_{ABC}})-f(L_{\rho_{B}}^{-1}R_{\sigma_{BC}})]\right) \ . $$ 

\end{corollary}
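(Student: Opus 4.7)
The plan is to apply Theorem~\ref{thm:ssa} with $f$ replaced by $\widetilde{f}(x):=xf(1/x)$ and with the roles of $\rho$ and $\sigma$ interchanged, then re-express the result in terms of $f$. First I would verify the hypotheses: by \cite[Theorem V.2.9]{B97} the function $\widetilde{f}$ is operator monotone decreasing whenever $f$ is, and its canonical integral representation is obtained from that of $f$ by the change of variable $t\mapsto 1/t$, so the polynomial bound $C^f_{T,\beta}\leq C\,T^{2c}$ passes to an analogous bound for $\widetilde{f}$ with possibly adjusted constants that can be absorbed into~$N$; the normalization $\widetilde{f}(1)=f(1)=0$ is automatic.

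With the substitution $\rho\to\sigma_{ABC}$, $\sigma\to\rho_{AB}\otimes I_C$ (the new $\sigma$ is of the product form required by Theorem~\ref{thm:ssa}) and $f\to\widetilde{f}$, the LHS of Theorem~\ref{thm:ssa} yields $N[\Tr_{AB}(P_{ABC}(\sigma,\rho)P^*_{ABC}(\sigma,\rho))]^{1/\alpha(\beta)}$, which matches the LHS of the corollary, while the RHS becomes
\[
\Tr_{AB}\Bigl(\bigl[\widetilde{f}(L_{\rho_{AB}}R_{\sigma_{ABC}}^{-1})-\widetilde{f}(L_{\rho_B}R_{\sigma_{BC}}^{-1})\bigr](\sigma_{ABC})\Bigr).
\]
From $\widetilde{f}(x)=xf(1/x)$ and the spectral decomposition of $L_AR_B^{-1}$ one derives the operator identity $\widetilde{f}(L_AR_B^{-1})(B)=A\cdot f(L_A^{-1}R_B)(I)$. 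Applied with $A=\rho_{AB}$ and $B=\sigma_{ABC}$ this collapses the first bracketed term immediately to $\rho_{AB}\cdot f(L_{\rho_{AB}}^{-1}R_{\sigma_{ABC}})(I)$.

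The second bracketed term requires more care since the identity does not apply directly (the input $\sigma_{ABC}$ differs from the $R$-multiplier $\sigma_{BC}$). However, $L_{\rho_B}^{-1}R_{\sigma_{BC}}$ acts trivially on the $\cH_A$ factor, so under the decomposition $\mathcal{B}(\cH_{ABC})=\mathcal{B}(\cH_A)\otimes\mathcal{B}(\cH_{BC})$ it factors as $\mathrm{Id}_A\otimes(L_{\rho_B}^{-1}R_{\sigma_{BC}})_{\cH_{BC}}$; commuting $\Tr_A$ past this superoperator and using $\Tr_A(\sigma_{ABC})=\sigma_{BC}$ reduces the computation to one on $\cH_{BC}$, where the key identity does apply and gives $\Tr_{AB}(\rho_{AB}\cdot f(L_{\rho_B}^{-1}R_{\sigma_{BC}})(I))$ after re-embedding into $\cH_{ABC}$. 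Subtracting the two rewritten terms produces the RHS of Corollary~\ref{cor:ssa1}. The main obstacle is this second-term reduction, where one must carefully commute partial traces past superoperators that act trivially on the traced-out factor and track the embeddings of $\rho_B\in\mathcal{B}(\cH_B)$ and $\sigma_{BC}\in\mathcal{B}(\cH_{BC})$ into $\mathcal{B}(\cH_{ABC})$; all remaining steps are a direct relabelling of Theorem~\ref{thm:ssa}.
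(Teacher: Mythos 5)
Your proposal is correct and follows essentially the same route as the paper, whose entire proof consists of the remark that one substitutes $\tilde f(x)=xf(1/x)$ into Theorem~\ref{thm:ssa} and interchanges the roles of $\rho$ and $\sigma$. You additionally supply the bookkeeping the paper leaves implicit -- the identity $\tilde f(L_AR_B^{-1})(B)=A\,f(L_A^{-1}R_B)(I)$ and the commutation of $\Tr_A$ past superoperators acting trivially on $\cH_A$ -- which is a welcome (and correct) elaboration rather than a deviation.
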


  \begin{corollary}\label{cor:ssa3} With the same conditions and notations as in Theorem~\ref{thm:ssa}, we have
$$N[\Tr_{AB}(Q^*_{ABC}(\sigma_{AB},\rho_{ABC})Q_{ABC}(\sigma_{AB},\rho_{ABC})]^{1/\alpha(\beta)}\leq \Tr_{AB}\left(\rho_{ABC}[f(L^{-1}_{\rho_{ABC}}R_{\sigma_{AB}})-f(L^{-1}_{\rho_{BC}} R_{\sigma_{B}})]\right)\ . $$
\end{corollary}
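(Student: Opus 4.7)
The plan is to deduce Corollary~\ref{cor:ssa3} from Theorem~\ref{thm:ssa2} by the same two-step procedure that produces Corollary~\ref{cor:ssa1} from Theorem~\ref{thm:ssa}: first apply Theorem~\ref{thm:ssa2} to the function $\tilde f(x):=xf(1/x)$, then interchange the names $\rho\leftrightarrow\sigma$ in the resulting inequality. By \cite[Theorem V.2.9]{B97}, $\tilde f$ is operator monotone decreasing precisely when $f$ is, and a direct change of variables $t\mapsto 1/t$ in the canonical representation~\eqref{low} shows that $\mu_{\tilde f}$ inherits the regularity of $\mu_f$; in particular, the hypothesis $C^f_{T,\beta}\leq C T^{2c}$ transfers (with a modified constant) to $C^{\tilde f}_{T,\beta}$, so Theorem~\ref{thm:ssa2} applies to $\tilde f$ and produces the constant $N$ appearing in the statement.

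After the substitution and relabeling, the left-hand side of Theorem~\ref{thm:ssa2} becomes $N\bigl[\Tr_{AB}\bigl(Q^*_{ABC}(\sigma_{AB},\rho_{ABC})\,Q_{ABC}(\sigma_{AB},\rho_{ABC})\bigr)\bigr]^{1/\alpha(\beta)}$, with the new $Q_{ABC}$ matching the definition in Cor.~\ref{cor:ssa3} verbatim; no further work is needed on that side. The transformed right-hand side is $\Tr_{AB}\bigl(\bigl[\tilde f\bigl(L_{\rho_{ABC}}R_{\sigma_{AB}}^{-1}\bigr)-\tilde f\bigl(L_{\rho_{BC}}R_{\sigma_B}^{-1}\bigr)\bigr](\sigma_{AB})\bigr)$, and the real task is to identify it with $\Tr_{AB}\bigl(\bigl[f(L_{\rho_{ABC}}^{-1}R_{\sigma_{AB}})-f(L_{\rho_{BC}}^{-1}R_{\sigma_B})\bigr](\rho_{ABC})\bigr)$.

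The crucial algebraic input is the operator identity $\tilde f(\Delta)=\Delta\,f(\Delta^{-1})$, valid because $\Delta$ commutes with every function of itself. For the first modular operator $\Delta_1:=L_{\rho_{ABC}}R_{\sigma_{AB}}^{-1}$, the computation $\Delta_1(\sigma_{AB})=\rho_{ABC}$ on the support of $\sigma_{AB}$ yields the pointwise identity $\tilde f(\Delta_1)(\sigma_{AB})=f(\Delta_1^{-1})(\rho_{ABC})$, which handles the first term immediately. I expect the main obstacle to be the second term, where the analogous pointwise identity fails, because for $\Delta_2:=L_{\rho_{BC}}R_{\sigma_B}^{-1}$ one has $\Delta_2(\sigma_{AB})=\rho_{BC}\,\sigma_{AB}\,\sigma_B^{-1}\ne\rho_{BC}$ in general.

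The remedy is a partial-trace reduction: neither $\Delta_2$ nor $\Delta_2^{-1}$ touches the $A$-factor, so both commute with $\Tr_A$; combined with $\Tr_A\sigma_{AB}=\sigma_B$, $\Tr_A\rho_{ABC}=\rho_{BC}$, and the pointwise identity $\tilde f(\Delta_2)(\sigma_B)=f(\Delta_2^{-1})(\rho_{BC})$ (which \emph{does} hold, since $\Delta_2(\sigma_B)=\rho_{BC}$), I obtain
\[
\Tr_{AB}\bigl(\tilde f(\Delta_2)(\sigma_{AB})\bigr)=\Tr_B\bigl(\tilde f(\Delta_2)(\sigma_B)\bigr)=\Tr_B\bigl(f(\Delta_2^{-1})(\rho_{BC})\bigr)=\Tr_{AB}\bigl(f(\Delta_2^{-1})(\rho_{ABC})\bigr).
\]
Combining the two terms gives precisely the right-hand side of Cor.~\ref{cor:ssa3}, and the claimed operator inequality on $\cH_C$ then follows directly from Theorem~\ref{thm:ssa2} applied to $\tilde f$.
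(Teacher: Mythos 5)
Your proposal is correct and follows essentially the same route as the paper: the paper obtains Corollary~\ref{cor:ssa3} in one line by substituting $\tilde f(x)=xf(1/x)$ into Theorem~\ref{thm:ssa2} and interchanging $\rho$ and $\sigma$, exactly as you do. The only difference is that you actually carry out the identification of the transformed right-hand side via $\tilde f(\Delta)(\sigma)=f(\Delta^{-1})(\rho)$ together with the partial-trace reduction for the $\Delta_2$ term (and flag the transfer of regularity to $\tilde f$), all of which the paper leaves implicit.
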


\subsection{Proof}\label{proof:ssa}
\begin{proof}[Proof of Theorem~\ref{thm:ssa}] We are inspired by the proof of Ruskai \cite{R12}, which we provide in our case in all detail for the completeness sake. 

In the monotonicity inequality Corollary~\ref{optimcl}, let us consider $\rho=\rho_{ABC}$, $\sigma=\sigma_{ABC}$, $\cH_1=\cH_{BC}$, $\cH_2:=\cH_A$, and $K_{ABC}=I_A\otimes K_{BC}$ . Then (\ref{eq:cor}) is equivalent to
\begin{align}\label{eq:ssa-1}
N\|\sigma_{BC}^{\beta}K_{BC}\rho_{BC}^{-\beta} \rho_{ABC}^{1/2} -\sigma_{ABC}^{\beta}K_{BC}\rho_{ABC}^{1/2-\beta}\|_2^{1/\alpha(\beta)}\leq 
S_f^{K_{BC}}(\rho_{ABC}||\sigma_{ABC})-S_f^{K_{BC}}(\rho_{BC}\|\sigma_{BC})\ .
\end{align}

Let us consider the difference on right-hand side in the above inequality.
\begin{align}
D:=& S_f^{K_{BC}}(\rho_{ABC}||\sigma_{ABC})-S_f^{K_{BC}}(\rho_{BC}\|\sigma_{BC})\nonumber\\
&=\Tr_{ABC}(\rho^{1/2}_{ABC}K^*_{BC}f(\Delta_{\sigma_{ABC},\rho_{ABC}})(K_{BC}\rho^{1/2}_{ABC}))-\Tr_{ABC}(\rho^{1/2}_{BC}K^*_{BC}f(\Delta_{\sigma_{BC},\rho_{BC}})(K_{BC}\rho^{1/2}_{BC}))\nonumber\\
&=\Tr_{ABC}(K^*_{BC}f(\Delta_{\sigma_{ABC},\rho_{ABC}})(K_{BC}\rho_{ABC}))-\Tr_{ABC}(K^*_{BC}f(\Delta_{\sigma_{BC},\rho_{BC}})(K_{BC}\rho_{BC}))\label{eq:D}
\end{align}
In the first equality we used the definition of the quasi-relative entropy and the fact that there is no dependence on $A$ in the second term. In the second equality we used the definition of the modular operator $\Delta_{A,B}=L_AR_B^{-1}$.

Consider a special case when $\sigma_{ABC}=\sigma_{AB}\otimes I_C$. Then
\begin{align}
D=\Tr_{ABC}(K^*_{BC}f(\Delta_{\sigma_{AB},\rho_{ABC}})(K_{BC}\rho_{ABC}))-\Tr_{ABC}(K^*_{BC}f(\Delta_{\sigma_{B},\rho_{BC}})(K_{BC}\rho_{BC}))\ .
\end{align}
Choose $K_{BC}=I_B\otimes K_C$. Then $K=I_{AB}\otimes K_C$ commutes with $\sigma_{ABC}=\sigma_{AB}\otimes I_C.$ Therefore,
$$D= \Tr_{ABC}(K^*_CK_{C}f(\Delta_{\sigma_{AB},\rho_{ABC}})(\rho_{ABC}))-\Tr_{ABC}(K_C^*K_{C}f(\Delta_{\sigma_{B},\rho_{BC}})(\rho_{BC}))\ .$$
Furthermore, take $K_C=\ket{\phi}\bra{\phi}_C$ to be a projector onto a vector $\ket{\phi}$. Then 
\begin{align}
D=\bra{\phi}_C\Tr_{AB}\left(f(\Delta_{\sigma_{AB},\rho_{ABC}})(\rho_{ABC})-f(\Delta_{\sigma_{B},\rho_{BC}})(\rho_{BC})\right)\ket{\phi}_C\ .
\end{align}
Bringing this expression back into (\ref{eq:ssa-1}), we obtain 
\begin{align}
&N\|\sigma_{B}^{\beta}\ket{\phi}\bra{\phi}_C\rho_{BC}^{-\beta} \rho_{ABC}^{1/2} -\sigma_{AB}^{\beta}\ket{\phi}\bra{\phi}_C\rho_{ABC}^{1/2-\beta}\|_2^{1/\alpha(\beta)}\leq \nonumber \\
&
\bra{\phi}_C\Tr_{AB}\left([f(\Delta_{\sigma_{AB},\rho_{ABC}})-f(\Delta_{\sigma_{B},\rho_{BC}})](\rho_{ABC})\right)\ket{\phi}_C\ .\label{eq:ssa-0}
\end{align}
Denoting 
$$P_{ABC}(\rho,\sigma):=\sigma_{B}^{\beta}\rho_{BC}^{-\beta} \rho_{ABC}^{1/2} -\sigma_{AB}^{\beta}\rho_{ABC}^{1/2-\beta}\ , $$
we calculate the norm on the left-hand side by definition (\ref{eq:HS-norm})
\begin{align}
\|\sigma_{B}^{\beta}\ket{\phi}\bra{\phi}_C\rho_{BC}^{-\beta} \rho_{ABC}^{1/2} -\sigma_{AB}^{\beta}\ket{\phi}\bra{\phi}_C\rho_{ABC}^{1/2-\beta}\|_2&= \|\ket{\phi}\bra{\phi}_C[\sigma_{B}^{\beta}\rho_{BC}^{-\beta} \rho_{ABC}^{1/2} -\sigma_{AB}^{\beta}\rho_{ABC}^{1/2-\beta}]\|_2\\
&=\Tr_{ABC}(\ket{\phi}\bra{\phi}_C P_{ABC}(\rho,\sigma)P^*_{ABC}(\rho,\sigma))\\
&=\bra{\phi}_C [\Tr_{AB}(P_{ABC}(\rho,\sigma)P^*_{ABC}(\rho,\sigma))]\ket{\phi}_C\ .
\end{align}
Therefore, we have for all $\ket{\phi}_C$
$$ \bra{\phi}_C [\Tr_{AB}(P_{ABC}(\rho,\sigma)P^*_{ABC}(\rho,\sigma))]\ket{\phi}_C\leq \bra{\phi}_C\Tr_{AB}\left([f(\Delta_{\sigma_{AB},\rho_{ABC}})-f(\Delta_{\sigma_{B},\rho_{BC}})](\rho_{ABC})\right)\ket{\phi}_C\ .$$
Since the above inequality holds for all $\ket{\phi}_C$ in $\cH_C$, we have the following operator inequality
$$N[\Tr_{AB}(P_{ABC}(\rho,\sigma)P^*_{ABC}(\rho,\sigma))]^{1/\alpha(\beta)}\leq \Tr_{AB}\left([f(L_{\sigma_{AB}}R^{-1}_{\rho_{ABC}})-f(L_{\sigma_{B}} R^{-1}_{\rho_{BC}})](\rho_{ABC})\right)\ . $$
\end{proof}

\subsection{Condition for equality}
From the proof above it is evident that in Theorems~\ref{thm:ssa} and \ref{thm:ssa2} and Corollaries~\ref{cor:ssa1} and \ref{cor:ssa3} the right-hand side is zero if and only if for all $\beta\in(0,1)$, $P_{ABC}(\rho,\sigma)=0$ or $Q_{ABC}(\rho, \sigma)=0$. Let us take an example
\begin{proposition}\label{prop:eq}
Let $\cH=\cH_A\otimes\cH_B\otimes\cH_C$, $\beta\in(0,1)$ and $\rho=\rho_{ABC}$, $\sigma=\sigma_{AB}\otimes I_C$. Let $f$ be a $C_{T, \beta}^f$-regular function.
The equality 
\begin{equation}\label{eq:eq1}
\Tr_{AB}([f(L_{\sigma_{AB}}R^{-1}_{\rho_{ABC}})\rho_{ABC})=\Tr_B(f(L_{\sigma_{B}} R^{-1}_{\rho_{BC}})]\rho_{BC}) 
\end{equation}
holds if and only if
\begin{equation}\label{eq:eq2}
\sigma_{B}^{\beta}\rho_{BC}^{-\beta} =\sigma_{AB}^{\beta}\rho_{ABC}^{-\beta}\, ,\ \ \text{ for all }\beta\in(0,1)\ .
\end{equation}
Moreover, (\ref{eq:eq1}) is equivalent to the Petz's recovery condition
\begin{equation}\label{eq:eq3}
\sR_{\rho_{ABC}}(\sigma_B)=\sigma_{AB}\ .
\end{equation}
\end{proposition}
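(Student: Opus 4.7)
The plan is to chain the equivalences $(\ref{eq:eq1}) \Leftrightarrow (\ref{eq:eq2}) \Leftrightarrow (\ref{eq:eq3})$, using Theorem~\ref{thm:ssa} and Corollary~\ref{cor:eq-mono} for the first equivalence, and a direct computation together with Petz's classical recovery theorem for the second.

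For the implication $(\ref{eq:eq1}) \Rightarrow (\ref{eq:eq2})$, I would plug the operator equality into Theorem~\ref{thm:ssa}, which forces $N[\Tr_{AB}(P_{ABC} P_{ABC}^*)]^{1/\alpha(\beta)} \leq 0$ as an operator on $\cH_C$. Since $P_{ABC} P_{ABC}^*$ is positive semidefinite, so is its partial trace, and the only PSD operator that is $\leq 0$ is zero. Taking the full trace over $\cH_C$ gives $\|P_{ABC}(\rho,\sigma)\|_2^2 = 0$, hence $P_{ABC}(\rho,\sigma) = 0$, and multiplying on the right by the generalized inverse $\rho_{ABC}^{-1/2}$ yields $(\ref{eq:eq2})$ for that $\beta$. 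Conversely, for $(\ref{eq:eq2}) \Rightarrow (\ref{eq:eq1})$, I would reenter the proof of Theorem~\ref{thm:ssa} in the bipartite setup $\cH_1 = \cH_{BC}$, $\cH_2 = \cH_A$, with the ansatz $K = I_{AB} \otimes \ket{\phi}\bra{\phi}_C$ for each unit $\ket{\phi} \in \cH_C$. Since $\ket{\phi}\bra{\phi}_C$ commutes with both $\sigma_B^\beta \otimes I_C$ and $\sigma_{AB}^\beta \otimes I_C$ (they act on disjoint tensor factors), sliding $K$ past the $\sigma$-factors gives
$$\sigma_1^\beta K \rho_1^{-\beta} - \sigma^\beta K \rho^{-\beta} = (I_{AB} \otimes \ket{\phi}\bra{\phi}_C)\bigl[\sigma_B^\beta \rho_{BC}^{-\beta} - \sigma_{AB}^\beta \rho_{ABC}^{-\beta}\bigr],$$
which vanishes by $(\ref{eq:eq2})$. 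Corollary~\ref{cor:eq-mono} then yields $S_f^K(\rho\|\sigma) = S_f^{K_1}(\rho_1\|\sigma_1)$, and the computation from the proof of Theorem~\ref{thm:ssa} identifies this difference with $\bra{\phi}_C T \ket{\phi}_C$, where $T$ is the difference of the two sides of $(\ref{eq:eq1})$. As this vanishes for every $\ket{\phi}$ and $T$ is self-adjoint (indeed PSD by \cite{R12}), $T = 0$.

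For the second equivalence, $(\ref{eq:eq2}) \Rightarrow (\ref{eq:eq3})$ is a direct computation: specializing $(\ref{eq:eq2})$ to $\beta = 1/2$ and multiplying on the right by $\rho_{ABC}^{1/2}$ gives $\sigma_B^{1/2}\rho_{BC}^{-1/2}\rho_{ABC}^{1/2} = \sigma_{AB}^{1/2}$, and multiplying this identity by its adjoint yields $\rho_{ABC}^{1/2}\rho_{BC}^{-1/2}\sigma_B \rho_{BC}^{-1/2}\rho_{ABC}^{1/2} = \sigma_{AB}$, which is precisely $\sR_{\rho_{ABC}}(\sigma_B) = \sigma_{AB}$. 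The reverse implication $(\ref{eq:eq3}) \Rightarrow (\ref{eq:eq2})$ invokes Petz's classical characterization \cite{P86-3, P88}: $\sR_\rho(\sigma_1) = \sigma$ is equivalent to $\sigma^{it}\rho^{-it} = \sigma_1^{it}\rho_1^{-it}$ for all $t \in \bR$, which by analytic continuation extends to $(\ref{eq:eq2})$ for all $\beta$.

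The main obstacle I anticipate is the careful tracking of implicit identities ($\sigma_B$ meaning $I_A \otimes \sigma_B \otimes I_C$, $\rho_{BC}^{-\beta}$ meaning $I_A \otimes \rho_{BC}^{-\beta}$, and so on) and verifying that the ansatz $K = I_{AB} \otimes \ket{\phi}\bra{\phi}_C$ really converts the index-free identity $(\ref{eq:eq2})$ into the hypothesis of Corollary~\ref{cor:eq-mono}. The slide depends on $\ket{\phi}\bra{\phi}_C$ commuting with every operator supported only on $\cH_{AB}$, which is immediate from the tensor structure but is the crux of why the single identity $(\ref{eq:eq2})$ suffices even though $K$ acts non-trivially only on $\cH_C$.
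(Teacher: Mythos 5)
Your argument is correct, and the forward direction $(\ref{eq:eq1})\Rightarrow(\ref{eq:eq2})$ is essentially the paper's: both reduce to showing $\Tr_{AB}(P_{ABC}P_{ABC}^*)=0$ forces $P_{ABC}=0$ (the paper tests against every $\ket{\phi}\bra{\phi}_C$ and concludes $\ket{\phi}\bra{\phi}_C P_{ABC}=0$ for all $\phi$; you take the full trace to get $\|P_{ABC}\|_2=0$ --- same content). Where you genuinely diverge is the converse. The paper argues $(\ref{eq:eq2})\Rightarrow(\ref{eq:eq3})\Rightarrow$ saturation of monotonicity $\Rightarrow(\ref{eq:eq1})$, leaning on the external equivalence ``perfect Petz recovery of both states $\Leftrightarrow$ equality in monotonicity,'' which is cleanly established in the paper only for $K=I$; here the relevant monotonicity inequality carries $K=I_{AB}\otimes\ket{\phi}\bra{\phi}_C$, so that step is somewhat loose. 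Your route instead feeds the ansatz $K=I_{AB}\otimes\ket{\phi}\bra{\phi}_C$ into Corollary~\ref{cor:eq-mono} directly: since $K$ commutes with everything supported on $\cH_{AB}$, the hypothesis $\sigma_1^\beta K\rho_1^{-\beta}=\sigma^\beta K\rho^{-\beta}$ follows from $(\ref{eq:eq2})$ by sliding $K$ past the $\sigma$-factors, saturation follows for each $\phi$, and $\bra{\phi}_C T\ket{\phi}_C=0$ for all $\phi$ gives $T=0$. This is more self-contained and stays entirely within the machinery the paper has already built; the paper's version is shorter but borrows an equivalence it has not proved in the needed generality. Your added explicit treatment of $(\ref{eq:eq3})\Rightarrow(\ref{eq:eq2})$ via Petz's commutation characterization and analytic continuation fills in a step the paper leaves implicit. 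Two small points worth flagging in a write-up: Corollary~\ref{cor:eq-mono} asks for the identity for all $\beta\in\bC$, so you should note (as the corollary's own proof does) that analyticity of $\beta\mapsto X^\beta$ extends $(\ref{eq:eq2})$ from $(0,1)$ to $\bC$; and passing from $P_{ABC}=0$ to $(\ref{eq:eq2})$ by right-multiplying with the generalized inverse $\rho_{ABC}^{-1/2}$ only yields the identity on the support of $\rho_{ABC}$ --- a gloss shared with the paper.
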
 
\begin{proof}
From Theorem~\ref{thm:ssa}, it is clear that if (\ref{eq:eq1}) is satisfied, the following holds
$$0=\Tr_{AB}(P_{ABC}P_{ABC}^*).$$
Then for all $\ket{\phi}_C$ in $\cH_C$,
$$0=\Tr(P_{ABC}P^*_{ABC}\ket{\phi}_C\bra{\phi}_C )\ .$$
The operator inside trace is positive semi-definite, since it can written as $A^*A$ for $A=\ket{\phi}_C\bra{\phi}_CP_{ABC}$. Therefore, the operator itself is zero, i.e. for all $\ket{\phi}_C$
$$0=\ket{\phi}_C\bra{\phi}_CP_{ABC}\ . $$
Since this holds for all $\ket{\phi}_C$, the operator $P_{ABC}$ is zero, leading to the required (\ref{eq:eq2}). Taking $\beta=1/2$, equality (\ref{eq:eq2}) leads to (\ref{eq:eq3}).

On the other side, if (\ref{eq:eq2}) is satisfied, it means that in particular, Petz's recovery map recovers both $\rho_{BC}$ and $\sigma_B$ perfectly, i.e. (\ref{eq:eq3}) holds. Therefore, by the equivalence of the recovery of both states and the saturation of the monotonicity inequality, we have that $D$ in (\ref{eq:D}) is zero, i.e. (\ref{eq:eq1}) is satisfied.
\end{proof}

\section{Logarithmic function}\label{sec:log}

Let us take $f(x)=-\log(x)$ and $K=I$. We may explicitly calculate the power $\alpha(\beta)$.  From Example~\ref{ex-log} we have that ${\rm d}\mu(x) = {\rm d}x$. Therefore, in Corollary~\ref{cor:conv} and Theorem~\ref{thm:ssa} the constants are: $c=0$, $C=1$, and
$$\alpha(\beta)=\left\{\begin{matrix}
{\beta(1-\beta)} &\text{when }\beta\leq1/2\\
  ({1-\beta})/{2}&\text{when }\beta\geq1/2.
\end{matrix}\right. \ .$$
Moreover, from \cite[Corollary 5.1]{CV17-2} 
$$N=\left\{\begin{matrix}
\left(\frac{\pi(1-2\beta+2\beta^2)\beta}{\sin{\beta\pi}}\right)^\frac{1}{\beta(1-\beta)}\left( {\|K\|}+\frac{\beta}{1-\beta}D \right)^{-\frac{1-2\beta+2\beta^2}{\beta(1-\beta)}}2^{-\frac{1-2\beta+2\beta^2}{\beta(1-\beta)}}\left(\frac{1-2\beta+2\beta^2}{2(1-\beta)}\right)^{-2} &\text{when }\beta\leq1/2\\
  \left(\frac{\pi\beta(1-\beta)}{\sin{\beta\pi}}\right)^\frac{2}{1-\beta}\left(\frac{1-\beta}{\beta} {\|K\|}+D\right)^{-\frac{2\beta}{1-\beta}}2^{-\frac{2\beta}{1-\beta}}\beta^{-2}&\text{when }\beta\geq1/2.
\end{matrix}\right. \ ,$$
where $D=\sum_jp_j^{-1}\|\rho_j^{-1}\|$ in Corollary~\ref{cor:conv-log} below, $D=\|\Delta_{\rho_A\otimes\rho_{BC},\rho_{ABC}}\|$ in Corollary~\ref{cor:ssa-log}, and $D=\|\Delta_{\sigma,\rho}\| $ in Corollary~\ref{cor:op-log}.

Taking $K=I$ in Corollary~\ref{cor:conv}, we obtain the following sharpening of the joint convexity of quantum relative entropy.
\begin{corollary}[Joint convexity]\label{cor:conv-log}
 Let  $\rho=\sum_jp_j\rho_j$ and $\sigma=\sum_j p_j\sigma_j$ (with $p_j>0$ and $\sum_j p_j=1$). With $\alpha(\beta)$ and $N$ defined above, we have
 \begin{eqnarray}\label{eq:cor-conv}
N\left(\sum_j p_j^{1/2} \|\sigma^{\beta}\rho^{-\beta} \rho_j^{1/2} - \sigma_j^{\beta}\rho_j^{1/2-\beta}\|_2\right)^{1/\alpha(\beta)}\leq 
\sum_j p_j S(\rho_j\| \sigma_j) - S({\rho}\| {\sigma})\ .
\end{eqnarray}
In particular, for $\beta=1/2,$
\begin{equation}
N\left(\sum_j p_j^{1/2} \|\sigma^{1/2}\rho^{-1/2} \rho_j^{1/2} - \sigma_j^{1/2}\|_2\right)^{4}\leq \sum_j p_j S(\rho_j\| \sigma_j) - S({\rho}\| {\sigma})\ .
\end{equation} 
\end{corollary}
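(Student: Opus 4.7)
The plan is to apply Corollary \ref{cor:conv} directly with the specific choices $f(x)=-\log x$ and $K=I$, and then simplify the constants. With these choices the quasi-relative entropy $S_f^K(\rho\|\sigma)$ is exactly the Umegaki relative entropy $S(\rho\|\sigma)$, and $\|K\|=1$, so both sides of Corollary \ref{cor:conv} take the form displayed in the statement of the corollary once we verify the regularity input and read off the exponent.

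The key verification is the regularity data of $-\log$. From Example \ref{ex-log} the canonical integral representation of $-\log$ has $a=b=0$ and $d\mu_f(t)=dt$, so on any interval $[1/T_L(\beta),T_R(\beta)]$ we may take $C^f_{T,\beta}=1$. Consequently the hypothesis $C^f_{T,\beta}\le C\,T^{2c}$ of Corollary \ref{cor:conv} is satisfied with $C=1$ and $c=0$. Substituting $c=0$ into the piecewise formula for $\alpha(\beta)$ from Corollary \ref{cor:conv} collapses it to
$$\alpha(\beta)=\begin{cases}\beta(1-\beta)&\beta\le 1/2\\ (1-\beta)/2&\beta\ge 1/2,\end{cases}$$
which matches the claimed exponent; in particular $\alpha(1/2)=1/4$, giving the power $4$ in the second display.

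What remains is to identify the constant $N$. In Corollary \ref{cor:conv} the constant $M$ depends on the smallest non-zero eigenvalues of the $\rho_j$, on $\|K\|=1$, on $\{p_j^{-1}\}_j$, and on $\beta$, $C$, $c$. Inspection of the proof of Theorem \ref{thm:joint-new} shows that the $\{p_j^{-1}\}_j$-dependence enters through the bound $\|\Delta_{\overline\sigma,\overline\rho}\|\le\sum_j p_j^{-1}\|\rho_j^{-1}\|$; that is, the quantity $D$ appearing in the formula for $N$ preceding the corollary is precisely $\sum_j p_j^{-1}\|\rho_j^{-1}\|$. With $C=1$, $c=0$, and $\|K\|=1$ substituted, the explicit formula \cite[Corollary 5.1]{CV17-2} for $M^{-1/\alpha(\beta)}$ reduces to the displayed expression for $N$, yielding the inequality.

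I do not anticipate any real obstacle here: the argument is a direct specialisation. The only bookkeeping one must do carefully is tracking the $T$-optimisation in the proof of Theorem \ref{thm:mono}/Corollary \ref{cor:conv} with $C=1$, $c=0$, in order to confirm that the exponents and the constant $N$ collapse exactly as written rather than merely up to a constant factor.
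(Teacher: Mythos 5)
Your proposal is correct and follows essentially the same route as the paper: Section~\ref{sec:log} obtains this corollary precisely by specializing Corollary~\ref{cor:conv} to $f(x)=-\log x$ and $K=I$, reading off $C=1$, $c=0$ from the measure ${\rm d}\mu={\rm d}t$ of Example~\ref{ex-log}, collapsing $\alpha(\beta)$ accordingly, and taking $N=M^{-1/\alpha(\beta)}$ from \cite[Corollary 5.1]{CV17-2} with $D=\sum_j p_j^{-1}\|\rho_j^{-1}\|$. Your identification of where the $\{p_j^{-1}\}$-dependence enters (via $\|\Delta_{\overline\sigma,\overline\rho}\|$ in the proof of Theorem~\ref{thm:joint-new}) matches the paper's bookkeeping exactly.
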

From the last inequality, we have that if the joint convexity is saturated, i.e.
$$\sum_j p_j S(\rho_j\| \sigma_j) = S({\rho}\| {\sigma}), $$
then for all $j$ and all $\beta\in(0,1)$
$$\sigma^{\beta}\rho^{-\beta}  = \sigma_j^{\beta}\rho_j^{-\beta}\ . $$

From the sharpening of the monotonicity inequality,  Corollary~\ref{cor:mono1}, or the previous result (\ref{eq:cor}), we obtain the sharpening of the strong subadditivity when taking $\sigma_{ABC}=\rho_{AB}\otimes\rho_{C}$ and tracing out system $A$.

\begin{corollary}[Strong subadditivity]\label{cor:ssa-log} For $\rho:=\rho_{ABC}$ a state on $\cH_{ABC}$ and $\beta\in(0,1)$, it holds that
$$N 
\|\rho_B^{\beta}\otimes\rho_{C}^{\beta}\,\rho_{BC}^{-\beta} \rho^{1/2} - \rho_{AB}^{\beta}\otimes\rho_{C}^{\beta}\rho^{1/2-\beta}\|_2^{1/\alpha(\beta)}\leq S(\rho_{AB})+S(\rho_{BC})-S(\rho_{ABC})-S(\rho_B)\ .
 $$
 In particular, for $\beta=1/2$ the excplicit bound involving Petz's recovery map holds
 $$\left(\frac{\pi}{8}\right)^{4} \|\rho^{-1}\|^{-2}
\| \sR_\rho(\rho_B\otimes\rho_{C}) -\rho_{AB}\otimes\rho_{C}\|_1^4\leq S(\rho_{AB})+S(\rho_{BC})-S(\rho_{ABC})-S(\rho_B)\ .
 $$
 Moreover, it is clear that the equality in the strong subadditivity inequality holds if and only if Petz's map $\sR_\rho$ recovers state $\rho_{AB}\otimes\rho_C$ perfectly. 
\end{corollary}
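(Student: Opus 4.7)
The plan is to deduce the corollary by applying the sharpened monotonicity inequality of Section~\ref{sec:mono-new} to a particular choice of the second state. Concretely, set $\rho = \rho_{ABC}$ and $\sigma := \rho_{AB}\otimes\rho_C$, and take the partial trace over system $A$, so that in the notation of Section~\ref{sec:Mono} we have $\cH_1 = \cH_{BC}$, $\cH_2 = \cH_A$, $\rho_1 = \rho_{BC}$, and crucially $\sigma_1 = \rho_B\otimes\rho_C$.

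First I would verify the standard algebraic identity
\begin{align*}
S(\rho\|\sigma) - S(\rho_1\|\sigma_1)
&= \bigl[S(\rho_{AB}) + S(\rho_C) - S(\rho_{ABC})\bigr] - \bigl[S(\rho_B) + S(\rho_C) - S(\rho_{BC})\bigr]\\
&= S(\rho_{AB}) + S(\rho_{BC}) - S(\rho_{ABC}) - S(\rho_B),
\end{align*}
using $\log(A\otimes B) = \log A\otimes I + I\otimes \log B$ and $\Tr_{\!AC}(\rho_{ABC}\log\rho_{AB}) = \Tr(\rho_{AB}\log\rho_{AB})$. This reduces the right-hand side of the target inequality to the monotonicity gap for the regular operator monotone decreasing function $f = -\log$, for which Example~\ref{ex-log} gives ${\rm d}\mu(t) = {\rm d}t$, hence $c = 0$ and $C = 1$.

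Next I would substitute into the bound (\ref{eq:cor}) from Corollary~\ref{optimcl} with $K = I$. Using the commutation of tensor factors, $\sigma^{\beta} = \rho_{AB}^{\beta}\otimes\rho_C^{\beta}$ and $\sigma_1^{\beta} = \rho_B^{\beta}\otimes\rho_C^{\beta}$, which turns
$$\|\sigma_1^{\beta}\rho_1^{-\beta}\rho^{1/2} - \sigma^{\beta}\rho^{1/2-\beta}\|_2 \leq M\bigl(S(\rho\|\sigma) - S(\rho_1\|\sigma_1)\bigr)^{\alpha(\beta)}$$
into precisely the claimed inequality after raising both sides to the power $1/\alpha(\beta)$ and reading off the constants from Section~\ref{sec:log}. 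For the $\beta = 1/2$ specialization I would instead invoke Corollary~\ref{cor:mono1} (equivalently, estimate (\ref{eq:CV17-1-cor}) of Carlen--Vershynina); since $\|\sigma\|_{\infty}\leq 1$ for a state, $\|\Delta_{\sigma,\rho}\| \leq \|\sigma\|\,\|\rho^{-1}\| \leq \|\rho^{-1}\|$, which yields the stated prefactor $(\pi/8)^4\|\rho^{-1}\|^{-2}$. The Petz term on the left-hand side then simplifies, using $X = \sigma_1^{1/2}\rho_1^{-1/2}\rho^{1/2}$ and $Y = \sigma^{1/2}$ with $X^*X = \sR_\rho(\sigma_1) = \sR_\rho(\rho_B\otimes\rho_C)$ and $Y^*Y = \sigma = \rho_{AB}\otimes\rho_C$, exactly as in the proof of Corollary~\ref{cor:mono1}.

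Finally, the equality characterization follows immediately from Corollary~\ref{cor:eq-mono} applied in this setting: saturation of SSA is equivalent to $\sigma_1^{\beta}\rho_1^{-\beta} = \sigma^{\beta}\rho^{-\beta}$ for all $\beta\in\mathbb{C}$, and evaluation at $\beta = 1/2$ reads exactly $\sR_\rho(\rho_B\otimes\rho_C) = \rho_{AB}\otimes\rho_C$. The entire argument is essentially a bookkeeping calculation on top of the previously established machinery; the only mild subtlety is to not confuse the partial trace $\sigma_1 = \rho_B\otimes\rho_C$ with $\rho_{BC}$, which is why the second tensor factor $\rho_C^{\beta}$ appears symmetrically on both terms of the $\|\cdot\|_2$ bound.
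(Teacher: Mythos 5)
Your proposal is correct and follows essentially the same route as the paper: the author likewise obtains this corollary by applying Corollary~\ref{cor:mono1} (for $\beta=1/2$) or the bound \eqref{eq:cor} with $\sigma_{ABC}=\rho_{AB}\otimes\rho_C$ and tracing out system $A$, so that the monotonicity gap becomes the SSA combination. Your bookkeeping of $\sigma_1=\rho_B\otimes\rho_C$ versus $\rho_1=\rho_{BC}$ and the estimate $\|\Delta_{\sigma,\rho}\|\leq\|\rho^{-1}\|$ match the intended argument.
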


\begin{corollary}[Operator strong-subadditivity]\label{cor:op-log}
Let $\cH=\cH_A\otimes\cH_B\otimes\cH_C$ and $\beta\in(0,1)$. Let $\rho_{ABC}$ be a state on $\cH$, and $\sigma_{ABC}=\sigma_{AB}\otimes I_C$ with $\sigma_{AB}$ being a state on $\cH_A\otimes\cH_B$. With $\alpha(\beta)$ and $N$ defined above, the following results hold.
\begin{enumerate}
\item\label{one} Theorem~\ref{thm:ssa} leads to
$$N\,  [\Tr_{AB}(P(\rho_{ABC},\sigma_{AB})P^*(\rho_{ABC}, \sigma_{AB}))]^{1/\alpha(\beta)}\leq \Tr_{AB}[\log\rho_{ABC}-\log\sigma_{AB}-\log\rho_{BC}+\log\sigma_B ]\rho_{ABC} \ .$$
\item When $\cH_B$ is one-dimensional, this becomes
$$N\,[\Tr_{A}(P^*(\rho_{AC},\sigma_{A})P(\rho_{AC}, \sigma_{A}))]^{1/\alpha(\beta)}\leq \Tr_{A}[\log\rho_{AC}-\log\sigma_{A}-\log\rho_{C} ]\rho_{AC} \ .$$
Note that $P(\rho_{AC},\sigma_{A})=\rho_{C}^{-\beta}\rho_{AC}^{1/2}-\sigma_{A}^\beta\rho_{AC}^{1/2-\beta}\ .$
\item And when $\sigma_A=\frac{1}{d_A}I_A$ is a maximally mixed state,
$$N\, [\Tr_{A}(P(\rho_{AC},\sigma_{A})P^*(\rho_{AC}, \sigma_{A}))]^{1/\alpha(\beta)}\leq \Tr_{A}[(\log\rho_{AC})\rho_{AC}-(\log\rho_{C})\rho_{C} ]+\log(d_A)\rho_C \ ,$$
with $P(\rho_{AC},\sigma_{A})=\rho_{C}^{-\beta}\rho_{AC}^{1/2}-\frac{1}{d_A^\beta}\rho_{AC}^{1/2-\beta}\ .$
\item\label{cor-Kim} Taking  $\sigma_{AB}=\rho_{AB}$ in part \ref{one}, we obtain  
\begin{align}
&N\,  [\Tr_{AB}(P(\rho_{ABC},\rho_{AB})P^*(\rho_{ABC}, \rho_{AB}))]^{1/\alpha(\beta)}\leq \Tr_{AB}[\log\rho_{ABC}-\log\rho_{AB}-\log\rho_{BC}+\log\rho_B ]\rho_{ABC}\ ,
\end{align}
where
$$P(\rho_{ABC},\rho_{AB})=\rho_{B}^\beta\rho_{BC}^{-\beta}\rho_{ABC}^{1/2}-\rho_{AB}^\beta\rho_{ABC}^{1/2-\beta}\ .$$
\item In particular, taking $\cH_B$ to be one-dimensional in the last inequality,
\begin{align}
&N\, [\Tr_{A}(P(\rho_{A},\rho_{AC})P^*(\rho_{A}, \rho_{AC}))]^{1/\alpha(\beta)}\leq \Tr_{A}[\log\rho_{AC}-\log\rho_{A}-\log\rho_{C} ]\rho_{AC}\ ,
\end{align}
where
$$P(\rho_{AC},\rho_{A})=\rho_{C}^{-\beta}\rho_{AC}^{1/2}-\rho_{A}^\beta\rho_{AC}^{1/2-\beta}\ .$$
\item\label{eq:cor-53} Theorem~\ref{thm:ssa2} leads to
\begin{equation}
N\,[\Tr_{AB}(Q^*(\sigma_{AB},\rho_{ABC})Q(\sigma_{AB},\rho_{ABC}))]^{1/\alpha(\beta)}\leq \Tr_{AB}\rho_{AB}[-\log\rho_{ABC}+\log\sigma_{AB}+\log\rho_{BC}-\log\sigma_B ]\ . 
\end{equation}
\item When $\cH_B$ is one-dimensional in the last inequality, we obtain
$$N\,[\Tr_{A}(Q^*(\sigma_{A},\rho_{AC})Q(\sigma_{A},\rho_{AC}))]^{1/\alpha(\beta)}\leq \Tr_{A}\rho_{A}[-\log\rho_{AC}+\log\sigma_{A}]+\log\rho_{C}\ . $$
\item Taking $\rho_{AB}=\sigma_{AB}$ in part \ref{eq:cor-53}, we obtain a stronger version of Kim's inequality (\ref{eq:Kim})
$$N\,[\Tr_{AB}(Q^*(\rho_{AB},\rho_{ABC})Q(\rho_{AB},\rho_{ABC}))]^{1/\alpha(\beta)}\leq \Tr_{AB}\rho_{AB}[-\log\rho_{ABC}+\log\rho_{AB}+\log\rho_{BC}-\log\rho_B ]\ . $$
\item When  $\cH_B$ is one-dimensional in the last inequality, we obtain
$$N\,[\Tr_{A}(Q^*(\rho_{A},\rho_{AC})Q(\rho_{A},\rho_{AC}))]^{1/\alpha(\beta)}\leq\Tr_{A}\rho_{A}[-\log\rho_{AC}+\log\rho_{A}]+\log\rho_{C}\ . $$ 
\end{enumerate}
\end{corollary}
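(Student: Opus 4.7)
The plan is to derive item 9 directly from item 8 by pure specialization: no new analytic inequality is invoked, only the collapse of the $B$ factor together with the bookkeeping that comes with it. Since item 8 is valid for every tripartite Hilbert space $\cH_A\otimes\cH_B\otimes\cH_C$, I would simply take $\dim\cH_B=1$, keeping the same constant $N$ and exponent $\alpha(\beta)$ throughout. Under this collapse, $\cH\cong\cH_A\otimes\cH_C$ and the marginals identify as $\rho_{AB}\leadsto\rho_A$, $\rho_{BC}\leadsto\rho_C$, $\rho_{ABC}\leadsto\rho_{AC}$, and $\rho_B\leadsto 1$; correspondingly, $\Tr_{AB}$ reduces to $\Tr_A$.

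For the left-hand side of item 8, I would substitute these identifications into
$$Q(\rho_{AB},\rho_{ABC})=\rho_{BC}^{\beta}\rho_{B}^{-\beta}\rho_{AB}^{1/2}-\rho_{ABC}^{\beta}\rho_{AB}^{1/2-\beta}.$$
Because $1^{-\beta}=1$, this collapses cleanly to
$$Q(\rho_{A},\rho_{AC})=\rho_{C}^{\beta}\rho_{A}^{1/2}-\rho_{AC}^{\beta}\rho_{A}^{1/2-\beta},$$
so $\Tr_{AB}(Q^{*}Q)$ becomes $\Tr_{A}(Q^{*}(\rho_A,\rho_{AC})Q(\rho_A,\rho_{AC}))$, exactly the form appearing on the LHS of item 9.

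For the right-hand side, I would substitute into $\Tr_{AB}\rho_{AB}[-\log\rho_{ABC}+\log\rho_{AB}+\log\rho_{BC}-\log\rho_B]$. The term $\log\rho_B=\log 1=0$ drops out, leaving
$$\Tr_{A}\rho_{A}[-\log\rho_{AC}+\log\rho_{A}+\log\rho_{C}].$$
The one point to handle carefully is the $\log\rho_C$ piece: after the collapse, $\rho_A$ lives on $\cH_A$ and $\log\rho_C$ lives on $\cH_C$, so $\rho_A\log\rho_C=\rho_A\otimes\log\rho_C$ in $\cH_A\otimes\cH_C$, and $\Tr_A(\rho_A\otimes\log\rho_C)=(\Tr\rho_A)\log\rho_C=\log\rho_C$ as an operator on $\cH_C$. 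Pulling this term out of the partial trace produces the stated right-hand side $\Tr_A\rho_A[-\log\rho_{AC}+\log\rho_A]+\log\rho_C$.

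Since everything on both sides matches the claim after these substitutions, the inequality of item 9 follows from item 8 with unchanged constants. There is no analytic obstacle at all here; the only care required is the notational one of tracking which operators live on which tensor factor after $\cH_B$ is trivialized, so that partial traces of tensor products (in particular $\Tr_A(\rho_A\otimes\log\rho_C)=\log\rho_C$) are interpreted correctly as operator identities on $\cH_C$.
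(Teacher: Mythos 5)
Your derivation is correct and is exactly what the paper intends: item 9 is obtained from item 8 by specializing to $\dim\cH_C$... rather, $\dim\cH_B=1$, with the identifications $\rho_{AB}\to\rho_A$, $\rho_{BC}\to\rho_C$, $\rho_{ABC}\to\rho_{AC}$, $\log\rho_B\to 0$, and the observation that $\Tr_A(\rho_A\otimes\log\rho_C)=\log\rho_C$ pulls the last term outside the partial trace. The paper offers no separate argument for this item, so your careful bookkeeping of which tensor factor each operator lives on is the whole proof, and it matches.
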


\section{Wigner-Yanase-Dyson-type inequalities}\label{sec:WYD}
For $p\in(-1,2)$ and $p\neq 0,1$ let us take the function 
$$f_p(x):=\frac{1}{p(1-p)}(1-x^p),$$
which is {operator} convex. The quasi-relative entropy for this function is
$$S_{f_p}^K(\rho|| \sigma)=\frac{1}{p(1-p)}\Tr(K\rho K^*-K^*\sigma^{p}K\rho^{1-p})\ .$$
From Proposition \ref{prop:pinsker} we obtain the lower bound on the quasi-relative entropy in terms of the trace distance.
\begin{corollary}[Pinsker inequality]
For a unitary $U$, $p\in(-1,2)$ and $p\neq 0,1$ and states $\rho$ and $\sigma$, we have
$$\frac{1}{p(1-p)}[1-\Tr(U^*\sigma^{p}U\rho^{1-p})]\geq \frac{1}{2}\|\rho-U^* \sigma U\|_1^2\ .$$
\end{corollary}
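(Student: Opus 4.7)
The plan is to derive this Pinsker-style bound as a direct specialization of the general Pinsker inequality for quasi-relative entropies (Proposition~\ref{prop:pinsker}) to the power function $f_p$.

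First, I would verify the hypotheses: for $p\in(-1,2)\setminus\{0,1\}$, the function $f_p(x)=\frac{1}{p(1-p)}(1-x^p)$ satisfies $f_p(1)=0$, and it is operator convex on $(0,\infty)$ by Example~\ref{ex-power}, since in each sub-interval the sign of $p(1-p)$ combines correctly with the operator convexity or concavity of $x^p$ (e.g.\ for $p\in(0,1)$, $x^p$ is operator concave so $-x^p$ is operator convex, and $p(1-p)>0$).

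Next, I would compute $f_p''(1)$. A direct differentiation gives $f_p''(x)=x^{p-2}$ (the factor $\frac{1}{p(1-p)}$ combines with $-p(p-1)=p(1-p)$ to cancel), and hence $f_p''(1)=1$. This is the crucial constant that will appear on the left-hand side of the Pinsker bound.

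Then I would evaluate $S_{f_p}^U(\rho\|\sigma)$ using the explicit formula of Example~\ref{ex:quasi}(4) with $K=U$. Since $U$ is unitary, $\Tr(U^*U\rho)=\Tr\rho=1$, so
\begin{equation*}
S_{f_p}^U(\rho\|\sigma)=\frac{1}{p(1-p)}\bigl[\,1-\Tr(U^*\sigma^{p}U\rho^{1-p})\bigr].
\end{equation*}

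Finally, I would apply Proposition~\ref{prop:pinsker} with $f=f_p$ and $K=U$, which yields
\begin{equation*}
\frac{1}{2}\|\rho-U^*\sigma U\|_1^2=\frac{f_p''(1)}{2}\|\rho-U^*\sigma U\|_1^2\leq S_{f_p}^U(\rho\|\sigma)=\frac{1}{p(1-p)}\bigl[1-\Tr(U^*\sigma^{p}U\rho^{1-p})\bigr],
\end{equation*}
which is exactly the claimed inequality. There is no real obstacle here; the only points requiring care are the sign bookkeeping for operator convexity of $f_p$ across the three sub-intervals of $p$ and the straightforward computation that $f_p''(1)=1$ independently of $p$, which makes the constant $\tfrac12$ on the right-hand side uniform in $p$.
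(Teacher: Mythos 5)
Your proposal is correct and follows exactly the paper's route: the paper derives this corollary by specializing Proposition~\ref{prop:pinsker} to $f_p$ with $K=U$, and your verification that $f_p$ is operator convex with $f_p(1)=0$, that $f_p''(1)=1$, and that $S_{f_p}^U(\rho\|\sigma)=\frac{1}{p(1-p)}[1-\Tr(U^*\sigma^pU\rho^{1-p})]$ supplies precisely the details the paper leaves implicit.
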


Note that in the above inequality, as well in all below ones, it is important to leave the factor $1/p(1-p)$ in place, as it changed sign at $p=0,1$.

For the power function $f_p$ we may explicitly calculate the power $\alpha(\beta)$. Then from Example~\ref{ex-power} we have ${\rm d}\mu_f(x) = \frac{\sin(p \pi)}{p(1-p)}\,x^p\,{\rm d}x$, for $p\in(0,1)$.
Therefore, in Corollary~\ref{cor:ssa3}, we have 
$$c=\left\{\begin{matrix}
p/2 &\text{when }\beta\leq1/2\\
p(1-\beta)/(2\beta)&\text{when }\beta\geq1/2.
\end{matrix}\right. \ ,$$
and therefore,
$$\alpha(\beta)=\left\{\begin{matrix}
\frac{\beta(1-\beta)}{1+p(1-\beta)} &\text{when }\beta\leq1/2\\
\frac{\beta(1-\beta)}{2\beta+p(1-\beta)}&\text{when }\beta\geq1/2.
\end{matrix}\right. $$
Moreover, from \cite[Corollary 5.2]{CV17-2} for $\beta\leq 1/2$, the constant $N$ is defined as
\begin{eqnarray}\label{eq:pow-KL}
&&N= ( {\|K\|}+\frac{\beta}{1-\beta}\|\Delta_{\sigma,\rho}\|)^{-\frac{p(1-\beta)+1-2\beta+2\beta^2}{\beta(1-\beta)}}2^{-\frac{p(1-\beta)+1-2\beta+2\beta^2}{\beta(1-\beta)}} \\
&&\cdot\frac{\sin(p\pi)}{\pi }\left( \frac{\pi\beta(p(1-\beta)+1-2\beta+2\beta^2)}{(1+p(1-\beta))\sin\beta\pi}\right)^{\frac{1+p(1-\beta)}{\beta(1-\beta)}}\left(\frac{p(1-\beta)+1-2\beta+2\beta^2}{2(1-\beta)}\right)^{-2} \ ,\nonumber
\end{eqnarray}
and for $\beta\geq 1/2$, 
\begin{eqnarray}\label{eq:pow-KU}
&&N=\left(\frac{1-\beta}{\beta} {\|K\|}+\|\Delta_{\sigma,\rho}\| \right)^{-\frac{2\beta^2+p(1-\beta)}{\beta(1-\beta)}}2^{-\frac{2\beta^2+p(1-\beta)}{\beta(1-\beta)}}\\
&&\cdot\frac{\sin p\pi}{\pi }\left(\frac{\pi(1-\beta)(2\beta^2+p(1-\beta))}{(2\beta+p(1-\beta))\sin\beta\pi} \right)^{\frac{2\beta+p(1-\beta)}{\beta(1-\beta)}}\left(\frac{2\beta^2+p(1-\beta)}{2\beta} \right)^{-2}  \ .\nonumber
\end{eqnarray}

Recall, from part~\ref{ex:WYD} of Example~\ref{ex:quasi}, we have that the quasi-relative entropy is the Wigner-Yanase-Dyson $p$-skew information $I_p(\rho, K)$ for $K^*=K$, i.e. 
$$S_{f_{p}}^K(\rho|| \rho)=-\frac{1}{2p(1-p)}\Tr[K, \rho^p][K, \rho^{1-p}]=-\frac{1}{p(1-p)}I_p(\rho, K)\ .$$
It was conjectured by Wigner and Yanase in \cite{WY64} that $p$-skew information $I_p(\rho, K)$ is concave as a function of a density matrix $\rho$  for a fixed $p\in(0,1)$. A more general expression
$$S_{f_p}^K(\rho|| \sigma)=\frac{1}{p(1-p)}\Tr(K\rho K^*-K^*\sigma^{p}K\rho^{1-p}),$$
shows that the concavity of WYD information follows from the joint concavity of the term $\Tr(K^*\sigma^{p}K\rho^{1-p})$, since the first term in the above expression is linear in $\rho$. The concavity of this term was shown by Lieb \cite{L73} for powers of $\rho$ and $\sigma$ that sum up to a number no greater than one.

Using Corollary~\ref{cor:conv}, we have the following strengthening of the joint concavity.
\begin{corollary}[Joint concavity of WYD information] Let $p\in(-1,2)$ and $p\neq 0,1$, and  $\rho=\sum_jp_j\rho_j$ and $\sigma=\sum_j p_j\sigma_j$ (with $p_j>0$ and $\sum_j p_j=1$). With $\alpha(\beta)$ and $M$ defined above, we have
\begin{eqnarray}\label{eq:cor-conv}
{N}\left(\sum_j p_j^{1/2} \|\sigma^{\beta}K\rho^{-\beta} \rho_j^{1/2} - \sigma_j^{\beta}K\rho_j^{1/2-\beta}\|_2\right)^{1/\alpha(\beta)}\leq \frac{1}{p(1-p)}
\left(  \Tr K^*\sigma^{p}K\rho^{1-p}-\sum_j p_j \Tr K^*\sigma_j^{p}K\rho_j^{1-p}\right)\ .
\end{eqnarray}
In particular, for $\beta=1/2,$
\begin{equation}
{N}\left(\sum_j p_j^{1/2} \|\sigma^{1/2}K\rho^{-1/2} \rho_j^{1/2} - \sigma_j^{1/2}K\|_2\right)^{4}\leq\frac{1}{p(1-p)}\left( \Tr K^*\sigma^{p}K\rho^{1-p}-\sum_j p_j \Tr K^*\sigma_j^{p}K\rho_j^{1-p}\right)\ .
\end{equation} 
\end{corollary}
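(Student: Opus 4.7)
The plan is to specialize the sharpened joint convexity inequality of Corollary~\ref{cor:conv} to the operator convex power function $f_p(x)=\frac{1}{p(1-p)}(1-x^p)$. Since Example~\ref{ex-power} identifies $-x^p$ as operator monotone decreasing (hence operator convex) for $p\in(0,1)$, this is the range where Corollary~\ref{cor:conv} applies most directly; the remaining values $p\in(-1,0)\cup(1,2)$ will require handling the operator monotonicity separately, as discussed at the end.

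First I would verify the $C_{T,\beta}^{f_p}$-regularity hypothesis. From the canonical integral representation in Example~\ref{ex-power}, the measure associated to $f_p$ is $d\mu_{f_p}(t)=\frac{\sin(p\pi)}{p(1-p)}\,t^p\,dt$, which is absolutely continuous. On $[T_L(\beta)^{-1},T_R(\beta)]$ one has $dt\leq \frac{p(1-p)}{\sin(p\pi)}\max\!\bigl(T_L(\beta)^p,T_R(\beta)^p\bigr)\,d\mu_{f_p}(t)$, so with the choices of $T_L,T_R$ from Theorem~\ref{thm:mono}, the constant $C_{T,\beta}^{f_p}$ grows as $T^{2c}$ with $c=p/2$ when $\beta\leq 1/2$ and $c=p(1-\beta)/(2\beta)$ when $\beta\geq 1/2$. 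These are exactly the exponents quoted in the preamble and they produce the stated $\alpha(\beta)$.

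Next I would compute the right-hand side explicitly. From Example~\ref{ex:quasi}(4), $S_{f_p}^K(\rho\|\sigma)=\frac{1}{p(1-p)}\bigl[\Tr(K^*K\rho)-\Tr(K^*\sigma^p K\rho^{1-p})\bigr]$. The term $\Tr(K^*K\rho)$ is linear in $\rho$, so summing against $p_j$ with $\rho=\sum_j p_j\rho_j$ causes it to cancel, leaving
\begin{equation*}
\sum_j p_j S_{f_p}^K(\rho_j\|\sigma_j)-S_{f_p}^K(\rho\|\sigma)=\frac{1}{p(1-p)}\Bigl[\Tr(K^*\sigma^p K\rho^{1-p})-\sum_j p_j\Tr(K^*\sigma_j^p K\rho_j^{1-p})\Bigr].
\end{equation*}
Substituting into Corollary~\ref{cor:conv} yields the stated inequality, with the explicit constant $N$ obtained by inserting the power-function values of $c$ and $C$ into the general formula from \cite[Corollary~5.2]{CV17-2}, reproducing the expressions \eqref{eq:pow-KL} and \eqref{eq:pow-KU}. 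Setting $\beta=1/2$ gives $\alpha(1/2)=1/4$ and the displayed specialization.

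The main obstacle is that Corollary~\ref{cor:conv} is stated for operator monotone decreasing $f$, and while this holds for $f_p$ when $p\in(0,1)$, the range $p\in(1,2)$ requires extra care: $f_p$ is operator convex there (Example~\ref{ex-power}) but not operator monotone decreasing, so one must either pass to the conjugate function $\tilde f_p(x)=xf_p(x^{-1})$ and interchange the roles of $\rho$ and $\sigma$, using the observation of Section~\ref{sec:ssa} that $\tilde f$ is operator monotone decreasing iff $f$ is, or invoke the joint convexity of Proposition~\ref{prop:convex} directly without routing through monotonicity. The case $p\in(-1,0)$ is handled analogously, using that $x\mapsto x^p$ is operator monotone decreasing on $(0,\infty)$ for $p\in[-1,0]$, with the integral representation of $f_p$ adapted accordingly.
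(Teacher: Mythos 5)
Your proposal is correct and takes essentially the same route as the paper: specialize the sharpened joint convexity of Corollary~\ref{cor:conv} to $f_p(x)=\frac{1}{p(1-p)}(1-x^p)$, read off $c$ and $\alpha(\beta)$ from the measure ${\rm d}\mu_{f_p}(t)=\frac{\sin(p\pi)}{p(1-p)}t^p\,{\rm d}t$, and use the cancellation of the linear term $\Tr(K^*K\rho)$ to identify the right-hand side with $\frac{1}{p(1-p)}\bigl(\Tr K^*\sigma^pK\rho^{1-p}-\sum_jp_j\Tr K^*\sigma_j^pK\rho_j^{1-p}\bigr)$. Your closing caveat about $p\in(1,2)$, where $f_p$ is operator convex but not operator monotone decreasing and one must pass to $\tilde f(x)=xf(1/x)$ (modulo a harmless linear term) and swap $\rho$ and $\sigma$, is in fact more careful than the paper itself, whose Example~\ref{ex-power} computes the representing measure only for $p\in(0,1)$ while the corollary is asserted for all $p\in(-1,2)$, $p\neq0,1$.
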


From Theorem~\ref{thm:ssa2} we might obtain some interesting sharpening of the operator version of Wigner-Yanase-Dyson inequalities. In \cite{R12} Ruskai  showed that taking $f_p$  in the operator version of SSA, with $\rho_{ABC}$ and $\sigma_{AB}$ leads to
$$0\leq\frac{1}{p(1-p)}[-\Tr_{AB}\rho_{ABC}^{1-p}\sigma_{AB}^p+\Tr_B\rho_{BC}^{1-p}\sigma_B^p] \ .$$

From Corollary~\ref{cor:ssa3} we obtain the error term for the above difference:
\begin{corollary}[Operator version of WYD inequality]\label{cor:opWYD}
 Let $\cH=\cH_A\otimes\cH_B\otimes\cH_C$, $\beta\in(0,1)$, {$p\in(-1,2)$, $p\neq 0,1$} and $\rho=\rho_{ABC}$, $\sigma=\sigma_{AB}\otimes I_C$. For constants defined above, the following operator inequality holds
\begin{align}
{N}\,[\Tr_{AB}(Q^*_{ABC}(\sigma_{AB},\rho_{ABC})Q_{ABC}(\sigma_{AB},\rho_{ABC})]^{1/\alpha(\beta)}\leq \frac{1}{p(1-p)}\left[-\Tr_{AB}\rho_{ABC}^{1-p}\sigma_{AB}^p+\Tr_B\rho_{BC}^{1-p}\sigma_B^p\right],
\end{align}
where
$$Q_{ABC}(\sigma_{AB},\rho_{ABC}):=\rho_{BC}^{\beta}\sigma_{B}^{-\beta} \sigma_{AB}^{1/2} -\rho_{ABC}^{\beta}\sigma_{AB}^{1/2-\beta}\ . $$
In particular, when $\cH_B$ is one-dimensional, we have
$$N\,[\Tr_{A}(Q^*_{AC}(\sigma_{A},\rho_{AC})Q_{AC}(\sigma_{A},\rho_{AC})]^{1/\alpha(\beta)}\leq \frac{1}{p(1-p)}\left[-\Tr_{A}\rho_{AC}^{1-p}\sigma_{A}^p+\rho_{C}^{1-p} \right] \ .
 $$
\end{corollary}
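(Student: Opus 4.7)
The plan is to reduce everything to Corollary~\ref{cor:ssa3} by specialising to the operator convex function $f_p(x)=\frac{1}{p(1-p)}(1-x^p)$ from part~\ref{ex:J} of Example~\ref{ex:quasi}. The left-hand sides of Corollaries~\ref{cor:ssa3} and~\ref{cor:opWYD} are literally identical---both are $N\,[\Tr_{AB}(Q^*Q)]^{1/\alpha(\beta)}$ with the same $Q_{ABC}(\sigma_{AB},\rho_{ABC})$---so what remains is to compute the right-hand side of Corollary~\ref{cor:ssa3} when $f=f_p$ and to extract the constants $c$, $N$, $\alpha(\beta)$ from Example~\ref{ex-power}.

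For the right-hand side computation, I would start from the super-operator identity $f_p(L_\rho^{-1}R_\sigma)=\tfrac{1}{p(1-p)}(\mathrm{Id}-L_{\rho^{-p}}R_{\sigma^p})$, which is immediate because $L_{\rho^{-1}}$ and $R_\sigma$ are commuting positive super-operators. Evaluating the first summand at $\rho_{ABC}$ produces $\tfrac{1}{p(1-p)}(\rho_{ABC}-\rho_{ABC}^{1-p}\sigma_{AB}^p)$, using the one-variable functional-calculus identity $\rho_{ABC}\rho_{ABC}^{-p}=\rho_{ABC}^{1-p}$. For the second summand the crucial point is that both $\rho_{BC}^{-p}$ and $\sigma_B^p$ act trivially on $\cH_A$, so under $\Tr_{AB}$ one may trace out $A$ first and collapse $\rho_{BC}^{-p}\rho_{BC}\sigma_B^p$ to $\rho_{BC}^{1-p}\sigma_B^p$ on $\cH_{BC}$. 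Subtracting, the $\rho_{ABC}$ contributions cancel and what remains is precisely the right-hand side of Corollary~\ref{cor:opWYD}. The degenerate case is then pure bookkeeping: $\dim\cH_B=1$ forces $\sigma_B=1$ and $\rho_{BC}=\rho_C$, which collapses $Q_{ABC}$ to $\rho_C^\beta\sigma_A^{1/2}-\rho_{AC}^\beta\sigma_A^{1/2-\beta}$ and reduces the trace term to $\tfrac{1}{p(1-p)}[-\Tr_A\rho_{AC}^{1-p}\sigma_A^p+\rho_C^{1-p}]$.

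The constants are then read off from the representing measure $\mathrm{d}\mu_{f_p}(t)\propto t^p\,\mathrm{d}t$ supplied by Example~\ref{ex-power}. Since $\mathrm{d}t/\mathrm{d}\mu_{f_p}(t)$ scales like $t^{-p}$, which is monotone in $t$, bounding it on $[T_L(\beta)^{-1},T_R(\beta)]$ by its value at the appropriate endpoint yields $C^{f_p}_{T,\beta}\le \mathrm{const}\cdot T^{2c}$ with $c=p/2$ for $\beta\le 1/2$ and $c=p(1-\beta)/(2\beta)$ for $\beta\ge 1/2$. Substituting into Corollary~\ref{cor:ssa3} recovers the stated $\alpha(\beta)$, while the explicit optimisation already carried out in Corollary~\ref{optimcl}, together with the power-function constants in~\eqref{eq:pow-KL}--\eqref{eq:pow-KU}, supplies $N$.

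The main obstacle I anticipate is confirming that $f_p$ actually meets the ``$C^{f_p}_{T,\beta}$-regular'' hypothesis of Corollary~\ref{cor:ssa3} across the full stated range $p\in(-1,2)\setminus\{0,1\}$. For $p\in(0,1)$ the function $-f_p$ is manifestly operator monotone and the canonical representation~\eqref{low} applies directly. For $p\in(-1,0)\cup(1,2)$ the function is still operator convex but no longer operator monotone decreasing, so one must either invoke the $\tilde f(x)=xf(1/x)$ reduction underlying Corollary~\ref{cor:ssa3} from Theorem~\ref{thm:ssa2}, or else verify the integral representation and sign bookkeeping case by case---in particular checking that the factor $\sin(p\pi)/(p(1-p))$ entering the representing measure stays positive across the thresholds $p=0,1$ so that the regularity bound retains the same form.
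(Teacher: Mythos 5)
Your proposal matches the paper's (largely implicit) argument: Corollary~\ref{cor:opWYD} is presented there as an immediate specialisation of Corollary~\ref{cor:ssa3} to $f_p$, with $c$, $\alpha(\beta)$ and $N$ read off from the representing measure ${\rm d}\mu_{f_p}(t)\propto t^p\,{\rm d}t$ of Example~\ref{ex-power}, and your functional-calculus evaluation of the right-hand side (cancellation of the $\rho_{ABC}$ terms, collapse of the $BC$ term under $\Tr_A$) is exactly the verification the paper omits. The obstacle you flag is genuine but is equally present in the paper: Example~\ref{ex-power} computes the measure only for $p\in(0,1)$, and for $p\in(1,2)$ the function $f_p$ is operator convex but \emph{not} operator monotone decreasing, so the regularity hypothesis of Corollary~\ref{cor:ssa3} fails verbatim; the repair is the one you name, namely that the transpose $\tilde f_p(x)=xf_p(1/x)=\frac{1}{p(1-p)}(x-x^{1-p})$ \emph{is} operator monotone decreasing on that range (with measure $\propto t^{1-p}\,{\rm d}t$, which shifts the exponent entering $c$), so that Theorem~\ref{thm:ssa} applied to $\tilde f_p$ with the roles of $\rho$ and $\sigma$ interchanged delivers the stated inequality. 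For $p\in(-1,0)$ there is no difficulty, since $f_p$ is already operator monotone decreasing there; your only slips are grouping $(-1,0)$ with the problematic range and attributing Corollary~\ref{cor:ssa3} to Theorem~\ref{thm:ssa2} rather than Theorem~\ref{thm:ssa}, neither of which affects the argument.
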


\subsection{Cauchy-Schwartz inequality}

In particular, taking $p=2$ in $f_p$ defined above, gives the following right-hand side in Corollary~\ref{cor:opWYD}
$$\Tr_{AB}(\sigma_{AB}^2\rho_{ABC}^{-1}-\sigma_{B}^2\rho_{BC}^{-1})=\Tr_{AB}(\sigma_{AB}\rho_{ABC}^{-1}\sigma_{AB})-\Tr_B(\sigma_{B}\rho_{BC}^{-1}\sigma_{B})\ . $$

In \cite{LR74} Lieb and Ruskai showed the positivity of the following operator
$$\Tr_A X_{AC}^* Q_{AC}^{-1} X_{AC}\geq X_C^*Q_C^{-1}X_C, $$
for any $X_{AC}$ and positive semi-definite $Q_{AC}$ with $\ker Q_{AC}\subseteq \ker X^*_{AC}.$

From Corollary~\ref{cor:opWYD} we obtain the following sharpening of the Cauchy-Schwarz inequality:

\begin{corollary} Let $\rho_{ABC}$ and $\sigma=\sigma_{AB}\otimes I_C$ be two states on the Hilbert space $\cH=\cH_A\otimes\cH_B\otimes\cH_C$. Let $\beta\in(0,1)$. Then
$$N\,[\Tr_{AB}(Q^*_{ABC}(\sigma_{AB},\rho_{ABC})Q_{ABC}(\sigma_{AB},\rho_{ABC})]^{1/\alpha(\beta)}\leq  \Tr_{AB}(\sigma_{AB}\rho_{ABC}^{-1}\sigma_{AB})-\Tr_B(\sigma_{B}\rho_{BC}^{-1}\sigma_{B})\ ,$$
where, recall, $Q_{ABC}(\sigma_{AB},\rho_{ABC}):=\rho_{BC}^{\beta}\sigma_{B}^{-\beta} \sigma_{AB}^{1/2} -\rho_{ABC}^{\beta}\sigma_{AB}^{1/2-\beta}\ . $

Moreover, the equality
$$  \Tr_{AB}(\sigma_{AB}\rho_{ABC}^{-1}\sigma_{AB})=\Tr_B(\sigma_{B}\rho_{BC}^{-1}\sigma_{B})$$
holds if and only if
$$
\rho_{B}^{\beta}\sigma_{BC}^{-\beta} =\rho_{AB}^{\beta}\sigma_{ABC}^{-\beta}\, ,\ \ \text{ for all }\beta\in(0,1)\ .
$$
In particular, for $\beta=1/2$, it is equivalent to the Petz's recovery condition
$$
\sR_{\sigma_{AB}}(\rho_{BC})=\rho_{ABC}\ .
$$
\end{corollary}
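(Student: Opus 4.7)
The plan is to derive this corollary as the $p=2$ specialization of Corollary~\ref{cor:opWYD}. Substituting $p=2$ into the right-hand side there gives
\[
\frac{1}{p(1-p)}\bigl[-\Tr_{AB}\rho_{ABC}^{1-p}\sigma_{AB}^{p}+\Tr_{B}\rho_{BC}^{1-p}\sigma_{B}^{p}\bigr]\Big|_{p=2}=\tfrac12\bigl[\Tr_{AB}(\rho_{ABC}^{-1}\sigma_{AB}^{2})-\Tr_{B}(\rho_{BC}^{-1}\sigma_{B}^{2})\bigr],
\]
which by cyclicity of the partial trace equals $\tfrac12\bigl[\Tr_{AB}(\sigma_{AB}\rho_{ABC}^{-1}\sigma_{AB})-\Tr_{B}(\sigma_{B}\rho_{BC}^{-1}\sigma_{B})\bigr]$. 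Absorbing the factor $\tfrac12$ into the constant $N$ yields the stated operator inequality on $\cH_C$.

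For the equality characterization, I would note that the left-hand side is a positive power of the positive semidefinite operator $\Tr_{AB}(Q^{*}_{ABC}Q_{ABC})$ on $\cH_{C}$. If the right-hand side equals zero, this operator must vanish; pairing against an arbitrary $\ket{\phi}_{C}$ and reading off the Hilbert--Schmidt norm as in the proof of Theorem~\ref{thm:ssa} forces $Q_{ABC}(\sigma_{AB},\rho_{ABC})=0$ in $\cH_{ABC}$. Cancelling $\sigma_{AB}^{1/2}$ on the right gives $\rho_{BC}^{\beta}\sigma_{B}^{-\beta}=\rho_{ABC}^{\beta}\sigma_{AB}^{-\beta}$ for the chosen $\beta\in(0,1)$, and analytic continuation in $\beta$, as in Corollary~\ref{cor:eq-mono}, extends this to all $\beta\in\bC$. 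The converse direction reuses the Taylor expansion $1/(t+x)=\sum_{n\geq0}(-1)^{n}t^{-n-1}x^{n}$ from the proof of Corollary~\ref{cor:eq-mono}: the identity for all $\beta$ forces $Q_{ABC}=0$ at every $\beta$, hence both sides of the inequality vanish.

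Specialising to $\beta=1/2$ gives $\rho_{ABC}^{1/2}=\rho_{BC}^{1/2}\sigma_{B}^{-1/2}\sigma_{AB}^{1/2}$; taking the adjoint and multiplying on the left produces
\[
\rho_{ABC}=\sigma_{AB}^{1/2}\sigma_{B}^{-1/2}\rho_{BC}\sigma_{B}^{-1/2}\sigma_{AB}^{1/2}=\sR_{\sigma_{AB}}(\rho_{BC}),
\]
using the definition \eqref{eq:Petz} of the Petz recovery map with the bipartition $\cH_{AB}=\cH_{A}\otimes\cH_{B}$ and the role of the ``first'' subsystem played by $\cH_{B}$. The reverse implication follows from uniqueness of the positive square root of $\rho_{ABC}$.

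The main obstacle I anticipate is justifying the endpoint $p=2$, which lies on the boundary of the range $p\in(-1,2)$ stated for Corollary~\ref{cor:opWYD}, and where the integral representation \eqref{eq:powex} degenerates. I would handle this either by a continuity argument letting $p\uparrow 2$, since both sides of the inequality and the defining identities depend continuously on $p$ and the constants \eqref{eq:pow-KL}--\eqref{eq:pow-KU} remain finite in the limit, or by direct appeal to the Lieb--Ruskai operator Cauchy--Schwarz inequality recalled just before the statement: it already delivers the bare positivity $\Tr_{AB}(\sigma_{AB}\rho_{ABC}^{-1}\sigma_{AB})-\Tr_{B}(\sigma_{B}\rho_{BC}^{-1}\sigma_{B})\geq 0$, and combining it with the remainder-term machinery of Corollary~\ref{cor:ssa3} supplies the sharpening.
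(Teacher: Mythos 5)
Your route coincides with the paper's: specialize Corollary~\ref{cor:opWYD} to $p=2$, rewrite the right-hand side using cyclicity of $\Tr_{AB}$ with respect to operators of the form $X_{AB}\otimes I_C$, and run the equality analysis exactly as in Proposition~\ref{prop:eq} (positivity of $\Tr_{AB}(Q^*Q)$ forces $Q_{ABC}=0$; the converse via the resolvent expansion of Corollary~\ref{cor:eq-mono}). Your algebra is right, including the factor $\tfrac12$ that the paper silently absorbs, and your intermediate condition $\rho_{BC}^{\beta}\sigma_B^{-\beta}=\rho_{ABC}^{\beta}\sigma_{AB}^{-\beta}$ is the one that actually follows from $Q_{ABC}(\sigma_{AB},\rho_{ABC})=0$ and that is consistent with the $\beta=1/2$ statement $\sR_{\sigma_{AB}}(\rho_{BC})=\rho_{ABC}$; the condition displayed in the corollary, with marginals $\rho_B,\rho_{AB},\sigma_{BC},\sigma_{ABC}$, appears to be a typo. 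One small caveat: in the reverse direction at $\beta=1/2$, knowing $\rho_{ABC}=A^*A$ with $A=\rho_{BC}^{1/2}\sigma_B^{-1/2}\sigma_{AB}^{1/2}$ does not give $A=\rho_{ABC}^{1/2}$ by ``uniqueness of the positive square root'' unless you first know $A\geq 0$; the paper instead closes that loop through Petz's saturation theorem for the monotonicity inequality.

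The genuine gap is the endpoint $p=2$, which you rightly flag but whose two proposed repairs both fail. (i) Continuity in $p$: the constants \eqref{eq:pow-KL}--\eqref{eq:pow-KU} carry the factor $\sin(p\pi)/\pi$, which vanishes at $p=2$; they are indeed finite in the limit, but they tend to zero, so letting $p\uparrow 2$ only recovers the bare Lieb--Ruskai positivity with $N=0$, not the quantitative remainder. (ii) The ``remainder-term machinery of Corollary~\ref{cor:ssa3}'' requires the function fed into Theorem~\ref{thm:ssa} to be $C^f_{T,\beta}$-regular with $C^f_{T,\beta}\leq C\,T^{2c}$; at $p=2$ that function is $\tilde f_2(x)=x f_2(1/x)=\tfrac12(x^{-1}-x)$, whose representing measure in \eqref{low} is a point mass at the origin, so ${\rm d}t\leq C\,{\rm d}\mu(t)$ fails on every interval $[1/S,T]\subset(0,\infty)$ and no admissible $c,C$ exist (equivalently, $f_2(x)=\tfrac12(x^2-1)$ is operator convex but not operator monotone decreasing, and the density $\pi^{-1}\sin(p\pi)x^p$ degenerates to zero at $p=2$). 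To be fair, the paper's own one-line derivation does not address this either; as written, a nonvacuous constant $N$ at $p=2$ requires new input beyond the cited corollaries.
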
 

The reasoning for the equality condition if is similar to the one in Proposition~\ref{prop:eq}.

\vspace{0.3in}
\textbf{Acknowledgments.} A.V. is partially supported by NSF grant DMS-1812734.

\end{document}